
\documentclass[9.5pt,journal,compsoc]{IEEEtran}
\usepackage{amsmath, amsfonts, amsthm, amssymb} 
\usepackage{mathrsfs}
\usepackage{cite}
\usepackage{algorithm}
\usepackage{algorithmic}
\usepackage{array}
\newtheorem{thm}{Theorem}

\newtheorem{lem}{Lemma}
\newtheorem{defi}{Definition}
\DeclareMathOperator*{\argmax}{argmax}
\usepackage{graphicx}
\usepackage{subfigure}
\usepackage{cases}
\usepackage{xcolor}
\usepackage{enumerate}
\usepackage{multirow}
\usepackage{multicol}
\usepackage{setspace}
\usepackage{enumitem}
\usepackage{setspace}
\usepackage[square, comma, sort&compress, numbers]{natbib}
\usepackage{url}
\allowdisplaybreaks[3]

\begin{document}

\title{Neighborhood Matters: Influence Maximization in Social Networks with Limited Access}

\author{Chen Feng, 
        Luoyi Fu,  
        Bo Jiang,
        Haisong Zhang,
        Xinbing Wang, 
        Feilong Tang
        and Guihai Chen
\IEEEcompsocitemizethanks{\IEEEcompsocthanksitem This work was supported by National Key R\&D Program of China 2018YFB2100302, NSF China under Grant (No. 61822206, 61960206002, 61832013, 62041205, 61532012), Tencent AI Lab Rhino-Bird Focused Research Program JR202034. The authors are with the School of Electronic Information and Electrical Engineering, Shanghai Jiao Tong University, Shanghai, 20040, China.\protect\\
Email: \{fengchen, yiluofu, bjiang, xwang8\}@sjtu.edu.cn, \{tang-fl, gchen\}@cs.sjtu.edu.cn, hansonzhang@tencent.com.
\protect\\
 We would like to express our special thanks to Chengyang Wu and Xudong Wu for their helpful discussion and assistance.
}
}


\markboth{IEEE Transactions on Knowledge and Data Engineering, ~Vol.~XX, No.~X, August~20XX}%
{Shell \MakeLowercase{\textit{et al.}}: Bare Demo of IEEEtran.cls for Computer Society Journals}

\IEEEtitleabstractindextext{
\begin{abstract}

  Influence maximization (IM) aims at maximizing the spread of influence by offering discounts to influential users (called seeding). In many applications, due to user's privacy concern, overwhelming network scale etc., it is hard to target any user in the network as one wishes. Instead, only a small subset of users is initially accessible. Such access limitation would significantly impair the influence spread, since IM often relies on seeding high degree users, which are particularly rare in such a small subset due to the power-law structure of social networks.

  \quad In this paper, we attempt to solve the limited IM in real-world scenarios by the adaptive approach with seeding and diffusion uncertainty considered. Specifically, we consider fine-grained discounts and assume users accept the discount probabilistically. The diffusion process is depicted by the independent cascade model. To overcome the access limitation, we prove the set-wise friendship paradox (FP) phenomenon that neighbors have higher degree in expectation, and propose a two-stage seeding model with the FP embedded, where neighbors are seeded. On this basis, for comparison we formulate the non-adaptive case and adaptive case, both proven to be NP-hard. In the non-adaptive case, discounts are allocated to users all at once. We show the monotonicity of influence spread w.r.t. discount allocation and design a two-stage coordinate descent framework to decide the discount allocation. In the adaptive case, users are sequentially seeded based on observations of existing seeding and diffusion results. We prove the adaptive submodularity and submodularity of the influence spread function in two stages. Then, a series of adaptive greedy algorithms are proposed with constant approximation ratio.  Extensive experiments on real-world datasets show that our adaptive algorithms achieve larger influence spread than non-adaptive and other adaptive algorithms (up to a maximum of 116\%).
\end{abstract}

\begin{IEEEkeywords}
Influence maximization, access limitation, adaptive approach.
\end{IEEEkeywords}}

\maketitle

\IEEEraisesectionheading{\section{Introduction}\label{sec:introduction}}
  \IEEEPARstart{T}{he} last two decades have witnessed the dramatic development of social networks (e.g., Facebook, Twitter), which have become an important platform for the promotion of ideas, behaviors and products. For example, viral marketing is a widely adopted strategy in the promotion of new products. The company selects some users and provides them with some discounts within a predefined budget, hoping that the product will be known by more users via the ``word-of-mouth'' effect. This demand naturally raises the influence maximization problem, which aims at triggering the largest cascade of influence by allocating discounts to users (called seeding). Since the seminal work of Kempe \textit{et al.} \cite{seminal_IM}, numerous efforts have been made to advance the research of the influence maximization problem  \cite{competitive_IM}\cite{scalable_LT} \cite{mining} \cite{data_based_IM} \cite{Budgetd_IM} \cite{a_martingale_approach} \cite{continuous_IM} \cite{mobihoc2017}.

  In most cases, the problem is solved under the implicit assumption that the company can allocate discounts to any user in the network. However, in the real-world commercial campaign, the company often only has access to a small subset of users. For example, an online merchant wishes to promote a new product by providing samples to influential users. In practice, the merchant could only mail samples to customers who have left address information before in ways such as buying products, applying for membership. Similarly, in many other applications, due to privacy concern, network scale etc., the seeding process is limited to a small sample of the network (like most work, the network structure and the diffusion probabilities are assumed pre-known and are not of concern). Due to the power-law degree distribution of social networks, high degree users are particularly rare in the small subset. Since influence maximization often relies on seeding many high degree users (not necessarily the highest ones), the access limitation would evidently impair the influence spread. Regarding this concern, initial attempts \cite{locally_two_stage} \cite{WWW_Singer} \cite{knapsack_Singer} \cite{FOCS_adaptive_seeding} seed the subset of users to reach neighbors who are voluntary to join in the campaign automatically with the intuition of the friendship paradox (FP) phenomenon \cite{Feld_FP} which reveals that the degree of your neighbor is greater than yours in expectation. As pioneers in the access limitation problem, \cite{locally_two_stage} \cite{WWW_Singer} \cite{knapsack_Singer} \cite{FOCS_adaptive_seeding} have largely expanded the influence spread but still fall short of dealing with the uncertainties in real promotions. Such uncertainties include two aspects.
  (1) \textit{Seeding uncertainty}: a targeted user will not necessarily become the seed if the discount is not satisfactory. (2) \textit{Diffusion uncertainty}: due to the strength of social relationships or characteristics of users, the influence propagation between two users is not assured to be successful. Existing works distribute discounts to users all at once, without considering whether the actual seeding and diffusion is successful. Such fixed strategy (referred to as ``non-adaptive'' method) is vulnerable to the uncertainty in the seeding and diffusion process, resulting in unsatisfactory influence spread. Thus, we are motivated to study an adaptive \footnote{We mean ``adaptive'' in the sense that users are sequentially seeded based on previous seeding and diffusion results, while the concept in \cite{locally_two_stage} \cite{WWW_Singer} \cite{knapsack_Singer} \cite{FOCS_adaptive_seeding} means that the allocation in neighbors is related to the set of seeds in initially accessible users.} approach, where users are sequentially seeded based on previous seeding and diffusion results.


  Specifically, the problem is investigated under the following settings. Suppose a company wants to promote a new product through a social network by providing discounts for users. Due to the difficulty in collecting user's information, only a small subset of users is initially accessible, denoted as $X$. We consider a fine-grained discount setting, i.e., discounts take value from $[0,1]$ instead of only 0 or 1 in previous works on this problem. Accordingly, a user probabilistically accepts the discount and becomes a seed, from which the diffusion starts. The diffusion process is depicted by the widely acknowledged independent cascade model. Since social networks leave traces of behavioral data which allow observing and tracking, the spread of influence could be easily observed. For example, from one's social account, we can see whether the user adopts the product. On this basis, we make the first attempt to solve the limited influence maximization problem by the adaptive approach. Moreover, we investigate the non-adaptive method under this setting for comparison. When undertaking this study, we find it is challenging in the following three aspects.

  \textbf{Access Limitation:} Under our setting, the influence spread may suffer more from the access limitation due to user's uncertain nature. Thus, there is an increasing demand on a new seeding model to address the access limitation with user's probabilistic behavior considered. We attempt to design an effective model with natural intuition and theoretical support.

  \textbf{Fine-grained Discount:} With a larger discount space, the scale of possible discount allocations accordingly becomes greater in order sense. Thus, it is much harder to find an effective discount allocation. Meanwhile, the corresponding seeding uncertainty imposes additional difficulties on influence analysis and algorithm design.

  \textbf{Algorithm Design:} In our seeding model, initially accessible users are seeded to reach their neighbors for further discount allocation. It is easy to see that the two seeding processes are inter-related. This interdependency requires our algorithm to collectively consider the two seeding processes. Not only seeding results of initial users should be considered but also possible seeding results of neighbors.

  To overcome the access limitation, we intend to leverage the FP phenomenon that neighbors have higher degree in expectation. A new seeding model is proposed with the FP embedded. We first seed users in $X$ (stage 1) to reach their neighbors for further discount allocation (stage 2). Whether a user $u$ accepts the discount $c_u$ is depicted by the seed probability function $p_u(c_u)$. Accordingly, we formulate non-adaptive and adaptive cases, both proven to be NP-hard. (1) \textit{Non-adaptive case}: discounts are allocated to users in $X$ all at once and then neighbors of those who accept the discount. (2) \textit{Adaptive case}: we sequentially seed users in $X$ by adopting actions, defined as user-discount pairs, based on previous seeding and diffusion results. Each time, if the user accepts the discount, we further seed his/her neighbors.
  The main contributions of this paper are highlighted as follows.


  \textbullet\quad We first formulate the limited IM problem under fine-grained discounts and uncertain user nature. Then, we look into the FP phenomenon in the set of users $X$ and prove it to hold set-wisely in any network. With this theoretical support, we are inspired to design a two-stage seeding model where neighbors of $X$ are seeded, and thus we get access to more influential users and the influence spread is also expanded.

  \textbullet\quad In the non-adaptive case, we first show the monotonicity of influence spread w.r.t. the discount allocation. Then, a two-stage coordinate descent framework is designed to decide the fine-grained discount in two stages. We collectively consider the seeding process in stage 2 when designing the discount allocation of stage 1.

  \textbullet\quad In each round of the adaptive case, we calculate the benefit of each action (user-discount pair) by estimating the increase of the influence spread when neighbors of the user are seeded. Then, we adopt the action with the largest benefit-to-discount ratio. With this idea, two algorithms $\pi^{\text{greedy}}$ and $\pi^{\text{greedy}}_{\text{discrete}}$ are devised with performance guarantee, where discounts of neighbors are determined by the coordinate descent and greedy algorithm respectively. Furthermore, $\pi^{\text{greedy}}_{\text{enum}}$ is proposed with the enumeration idea, achieving an approximation ratio of $(1-e^{-\frac{B_1-1}{B_1}(1-\frac{1}{e})})\approx 0.469$ better than previous work \footnote{Although \cite{WWW_Singer} achieves an approximation ratio of $1-\frac{1}{e}$, its diffusion model, called the voter model, is much simpler where the influence of each user is simply additive and could be anticipated in advance.} \cite{locally_two_stage} \cite{knapsack_Singer} \cite{FOCS_adaptive_seeding}.

  \textbullet\quad  Besides theoretical guarantees, our algorithms also exhibit favorable performances in experiments. Specifically, extensive experiments are conducted on real-world social networks to evaluate the proposed algorithms. The results show that our adaptive algorithms achieve much larger influence spread than non-adaptive and other adaptive algorithms (up to a maximum of \textbf{116\%}), and meanwhile are scalable to large networks.

  The rest of the paper is organized as follows. Important milestones are reviewed in Section \ref{RW}. We describe the  model in Section \ref{model}. The non-adaptive and adaptive cases are formulated in Section \ref{PR}. In Section \ref{NIM}, we analyze the non-adaptive case. And the adaptive case is studied in Section \ref{AIM}. Numerous experiments are conducted in Section \ref{experiments}. And we conclude the paper in Section \ref{conclusion}. Due to space limitation, some technical proofs and experimental results are deferred to the supplemental file.

\section{Related Work}\label{RW}
    Domingos and Richardson \cite{mining} took the head in exploring the peer influence among customers by modeling markets as social networks. They attempted to maximize the cascade of purchasing behavior by targeting a subset of users. Kempe \textit{et al.} \cite{seminal_IM} further formulated the well-known influence maximization problem. The greedy algorithm is proposed with proven performance guarantee $(1-e^{-1})$ \cite{An_analysis}. No polynomial time algorithm could have a better performance, unless $P=NP$ \cite{Feige}. Since then, extensive works have been done in different perspectives \cite{negative_IM} \cite{scalable_LT} \cite{data_based_IM} \cite{IRIE} \cite{cost_effective}.

    Bharathi \textit{et al.} analyzed the game of competing information diffusions in one social network \cite{competitive_IM}. Li \textit{et al.} considered location information in influence maximization \cite{location_aware}. Tang \textit{et al.} presented the near-optimal time algorithm to solve the influence maximization problem for triggering models without hurting the approximation ratio \cite{time_complexity_practical_efficiency}. Estimation techniques are applied in \cite{a_martingale_approach} to improve the empirical efficiency. Yang \textit{et al.} \cite{continuous_IM} assumed that the discount can be fractional instead of only 0 or 1. Chen \textit{et al.} \cite{efficient_IM} improved the influence maximization in both running time and cascade size. Budgeted influence maximization was studied in \cite{Budgetd_IM} where each user is associated with a cost for selection. A synthetic survey is provided in \cite{survey}.

    However, the access limitation is ignored by all the above works. Seeman \textit{et al.} \cite{FOCS_adaptive_seeding} studied the seed selection based on the intuition of FP, which is first discovered by Feld \cite{Feld_FP} and further investigated point-wisely by numerous works \cite{General_FP} \cite{multistep_FP} \cite{lattanzi_FP}. Badanidiyuru \textit{et al.} considered monotone submodular objective functions and achieve a $(1-e^{-1})^2$ approximation ratio \cite{locally_two_stage}. In \cite{WWW_Singer}, by relaxing the diffusion model and discount setting, Horel \textit{et al.} made impressive progress on algorithm design and experimental validation.

    Adaptive seeding is an emerging topic in influence maximization. Users are seeded one after another based on the existing seeding and diffusion results. Golovin \textit{et al.} \cite{adaptive_submodularity_golovin} studied the adaptive submodularity and showed that a greedy policy obtains a $(1-e^{-1})$ approximation ratio. Yuan and Tang proposed the adaptive algorithm based on the greedy policy \cite{mobihoc2017}. In \cite{no_time_observe}, only partial feedback is observed before seeding the next user. 

    Our work is distinct from existing works \cite{locally_two_stage} \cite{WWW_Singer} \cite{knapsack_Singer} \cite{FOCS_adaptive_seeding} mainly in two aspects. First, the problem is comprehensively studied under practical settings, where users accept the fine-grained discounts probabilistically and the diffusion process is depicted by the well-received independent cascade model. On this basis, algorithms are proposed with theoretical guarantee and evaluated on real-world datasets. Second, all existing solutions to the access limitation can be classified as non-adaptive category. However, not only non-adaptive solution but also adaptive solution is presented in our paper to maximize the influence spread under access limitation. 


\section{Model}\label{model}
    A social network is denoted as the graph $G(V,E)$, where $V$ is the set of users and $E$ records the relationships between users. Initially accessible users are denoted as $X \subseteq V$. For any user $u \in V$, let $N(u)$ denote the neighborhood of $u$. For any subset of users $T \subseteq V$, $N(T)$ represents the neighborhood of $T$, defined as $N(T) \equiv \bigcup\limits_{u \in T} N(u)\setminus T$.

    We start the influence diffusion by allocating discounts to users, which is described by the two-stage seeding model. If a user accepts the discount allocated, we say it becomes a \textit{seed}. A set of seeds forms a \textit{seed set}. Especially, we call the seed in stage 1 an \textit{agent}. The diffusion process is described by the independent cascade model.

   \textbf{The two-stage seeding model} contains \textit{the recruitment stage} (stage 1) and \textit{the trigger stage} (stage 2). The predefined budget is $B_1$ in stage 1 and $B_2$ in stage 2
   \footnote{It is tempting to study our problem under variable $B_1$ and $B_2$, which however makes our problem even more challenging. It is also worth noting that our current solution could provide valuable hints to the case of unfixed $B_1$ and $B_2$. Specifically, we allow an error of $c$ when optimizing $B_1$ ($B_2=B-B_1$) by requiring $B_1$ to take discrete values $\{c, 2c, 3c, \dots, [B/c]\!\cdot c\}$. Then, the problem is reduced to the current one where $B_1$ and $B_2$ are pre-known. Thus we could derive the optimal combination of $B_1$ and $B_2$ by invoking our solutions $[B/c]$ times. Besides, we would also like to provide a heuristic method with user's degree as the proxy of its influence like \cite{WWW_Singer}. To explain, we figure out the average degree of $X$ and $N(X)$, and then determine $B_1$, $B_2$ according to the proportion of their average degrees.\label{footnote_fix_budget}}
   , with the total budget being $B=B_1+B_2$. The seeding process in the two-stage model is shown as follows.
   \begin{enumerate}
     \item[$\bullet$] \textit{the recruitment stage}: We seed users in $X$ with given budget $B_{1}$, to recruit agents whose friends are possibly more influential. Some users become the agent and bring their friends (newly reachable users) into this campaign in the meantime by forwarding the promotion link, providing address information for mail, even helping hand over the sample, etc..
     \item[$\bullet$] \textit{the trigger stage}: We seed newly reachable users with budget $B_{2}$ to trigger the largest influence cascade.
   \end{enumerate}
\begin{figure}[h]
\centering
\vspace{-0.3cm}
\includegraphics[width=0.33\textwidth]{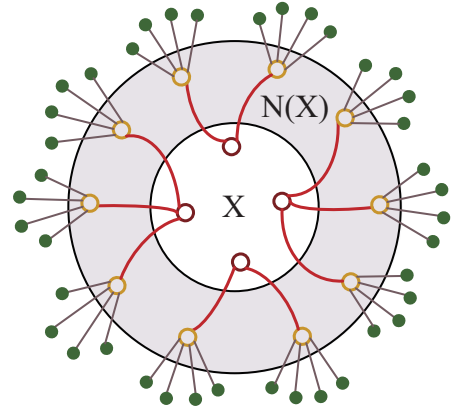}\vspace{-0.2cm}
\caption{An example of the two-stage seeding model. We first seed users in $X$ and then reach their neighbors $N(X)$ for further seeding.}\vspace{-0.3cm}\label{fig1}
\end{figure}


   The discount allocated to user $u$ is denoted as $c_u$. Whether user $u$ accepts the discount is depicted by the seed probability function $p_{u}(c_{u})$. We assume $p_{u}(\cdot)$ satisfies the four properties: (1) $p_{u}(0)=0$; (2) $p_{u}(1)=1$; (3) $p_{u}(\cdot)$ is monotonically nondecreasing; (4) $p_{u}(\cdot)$ is continuously differentiable.

   In both stages, we allocate the budget to seed users, but the intention is different. In \textit{the recruitment stage}, users are seeded to recruit agents such that we can access their influential neighbors. In \textit{the trigger stage}, where the influence diffusion really begins, we seed the newly reached neighbors to maximize the influence diffusion. In classical influence maximization, we are restricted to seed users only in $X$ without the privilege to reach more influential users, resulting in limited cascade size.

   \textbf{The diffusion model} we adopt is the \textit{independent cascade model}, which has been one of the most popular models since formally proposed in \cite{seminal_IM}. In the independent cascade model, each edge $(u,v)$ in the graph $G(V,E)$ is associated with a propagation probability $p_{uv}$, indicating the probability that $u$ influences $v$ after $u$ gets influenced first. Note that, no matter $u$ succeeds or not, it cannot attempt to influence $v$ again. The influence of different edges is independent of each other. Once user $v$ gets influenced, it will not change its state. On this basis, starting from a seed set $T$, the diffusion expands in discrete steps as follows. In each step, the newly influenced user, $u$, tries to influence its neighbors, e.g. $v$, along the edge $(u,v)$, and succeeds with corresponding probability $p_{uv}$. The process goes on in this way, until no user is further influenced. The influence of the seed set $T$ is denoted as $I(T)$, which is the expected number of users totally influenced, where $I(\cdot)$ is the influence function $I: 2^V \rightarrow \mathbf{R}$.
   As can be seen, the influence spread is closely related to the graph and propagation probabilities along edges. Thus, in computation of $I(S)$, we need full knowledge of the influence graph, i.e., $G(V,E)$ and $p_{uv}$ ($\forall (u,v)\in E$).

   To provide theoretical support for our model, by Lemma \ref{FP} we prove that the neighbors $N(X)$ have higher degree than the small set of users $X$ in expectation. To the best of our knowledge, this is the first time that the FP phenomenon is proven to hold set-wisely in any large network. It is easy to see that the traditional point-wise FP is a special case of Lemma \ref{FP} by letting $X$ contain only one user.
   %
   \begin{lem}\label{FP}
       Given any connected network $G(V,E)$, consider a set of users $X$ where each user is selected from $V$ with probability $p\rightarrow 0$, then the friendship paradox phenomenon exists between $X$ and $N(X)$, i.e., the average degree of $X$ is no larger than that of $N(X)$ in expectation.
   \end{lem}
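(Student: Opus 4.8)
The plan is to use the hypothesis $p\to 0$ to collapse the set-wise claim onto the classical point-wise friendship paradox, and then settle the latter with a short edge-counting inequality. Throughout, write $d_u=|N(u)|$, $n=|V|$, $m=|E|$, and $\bar d=\frac1n\sum_{u\in V}d_u=\frac{2m}{n}$. First I would note that $|X|\sim\mathrm{Binomial}(n,p)$, so that conditioned on $X\neq\emptyset$ we have $\Pr[\,|X|=1\mid X\neq\emptyset\,]=\frac{np(1-p)^{n-1}}{1-(1-p)^n}\to 1$ and $\Pr[\,|X|\geq 2\mid X\neq\emptyset\,]\to 0$ as $p\to 0$. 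Since the average degree of any nonempty vertex set lies in $[1,n-1]$, the strata with $|X|\geq 2$ contribute $o(1)$ to the expected average degree of $X$ and of $N(X)$; hence it is enough to compare these two quantities in the single case $X=\{v\}$ with $v$ uniform on $V$.

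In that case the expected average degree of $X$ is trivially $\frac1n\sum_{v\in V}d_v=\bar d$. The expected average degree of $N(X)=N(v)$ is
\begin{equation}
\frac1n\sum_{v\in V}\frac{1}{d_v}\sum_{u\in N(v)}d_u,
\end{equation}
and the crux is to read the double sum as a sum over edges: each undirected edge $\{u,v\}\in E$ is counted once from $v$'s side, contributing $d_u/d_v$, and once from $u$'s side, contributing $d_v/d_u$, so
\begin{equation}
\sum_{v\in V}\frac{1}{d_v}\sum_{u\in N(v)}d_u=\sum_{\{u,v\}\in E}\Bigl(\frac{d_u}{d_v}+\frac{d_v}{d_u}\Bigr)\;\geq\;2m
\end{equation}
by the elementary bound $x+x^{-1}\geq 2$ (AM--GM). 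Dividing by $n$ shows the expected average degree of $N(X)$ is at least $\frac{2m}{n}=\bar d$, i.e., no smaller than that of $X$; connectivity ensures $d_v\geq 1$ for all $v$, so every ratio is defined and each $N(v)$ is nonempty. Equality forces regularity, so for any non-regular $G$ the inequality is in fact strict once $p$ is small.

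The hard part will not be the inequality but the first reduction: ``average degree of $N(X)$'' is the expectation of a ratio of two correlated random quantities, $\sum_{v\in N(X)}d_v$ and $|N(X)|$, so I cannot naively push the expectation through. The safe route is exactly the stratification above --- compute the $|X|=1$ stratum exactly and bound every $|X|\geq 2$ stratum by $n-1$ --- which works only because $p\to 0$ drives the mass of $\{|X|\geq 2\}$ to zero; this same device is what would let the argument be iterated to a multi-step paradox, re-running it with $X\cup N(X)$ as the new seed set as long as that set stays a vanishing fraction of $V$. It is also worth flagging in the write-up that the statement concerns averaged behavior: a particular realization (say, picking the hub of a star) can violate the paradox, and only after averaging over the random $X$ does the edge inequality take effect.
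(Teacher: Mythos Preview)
Your argument is correct but follows a genuinely different route from the paper's. The paper does \emph{not} reduce to the point-wise case: instead it observes that a fixed vertex $v$ falls in $N(X)$ with probability $(1-p)\bigl(1-(1-p)^{d(v)}\bigr)\approx p(1-p)d(v)$, so membership in $N(X)$ is size-biased by degree. Replacing numerator and denominator by their expected values gives
\[
E[\bar d(N(X))]\;\approx\;\frac{\sum_{v}p(1-p)d(v)^{2}}{\sum_{v}p(1-p)d(v)}\;=\;\frac{\sum_{v}d(v)^{2}}{\sum_{v}d(v)}\;\ge\;\frac1n\sum_{v}d(v)\;=\;\bar d,
\]
the last step by Cauchy--Schwarz. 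Your approach, by contrast, uses $p\to 0$ to force $|X|=1$ and then invokes the edge-sum identity with the AM--GM bound $d_u/d_v+d_v/d_u\ge 2$.

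What each buys: the paper's size-biased picture explains transparently \emph{why} neighbors of a random set are degree-rich and does not need the stratification step, but it is informal exactly at the point you flagged --- it silently swaps an expectation of a ratio for a ratio of expectations. Your reduction to $|X|=1$ avoids that swap entirely and yields a clean, fully rigorous inequality; the cost is that it leans hard on the $p\to 0$ hypothesis and gives no information about the regime where $|X|$ is of moderate size. Both arguments detect the same equality case (regular graphs).
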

   \begin{proof}
   Please refer to Section 2.1 of the supplemental file.
\end{proof}

\begin{table}[t]
\caption{Frequently Used Notations}\label{notations}
\vspace{-0.3cm}
\begin{center}
\renewcommand\arraystretch{1.1}
\footnotesize
\begin{tabular}{|p{1.6cm}|p{6.4cm}|}
\hline
\multicolumn{2}{|l|}{Model}\\
\hline
$G(V,E)$ & A directed graph, where $V$ and $E$ represent users and the relationships respectively\\
\hline
$N(T)$   & The neighborhood of $T$\\
\hline
$X$      & Initially reachable users\\
\hline
$p_{u}(c_{u})$  &   The probability that user $u$ accepts discount $c_u$\\
\hline
$I(T)$   & The expected number of users influenced by seed set $T$\\
\hline
\multicolumn{2}{|l|}{Non-adaptive Case} \\
\hline
$S$      & The set of users who becomes agents in X\\
\hline
$C_1$ (or $C_2$)  &    The discount allocation in stage 1 (or stage 2)\\
\hline
$P_{r}(S;C_1,X)$ & The probability that users in $S$ become agents if $C_1$ is carried out in $X$\\
\hline
$Q(C_2; N(S))$  &  The expected number of users influenced if $C_2$ is carried out in given $N(S)$\\
\hline
$f(C_1;X)$    &    The expected number of users influenced if $C_1$ is carried out in $X$\\
\hline
\multicolumn{2}{|l|}{Adaptive Case}\\
\hline
$D$      &  The set of discounts that can be adopted\\
\hline
$Y$      &  The action space in stage 1\\
\hline
$\psi$   &  The action adopted with sequence in stage 1 ($dom(\psi)$ denotes the set of actions)\\
\hline
$\lambda$ & The seeding realization in stage 2\\
\hline
$\phi$   &  The diffusion realization\\
\hline
$\pi$    &  The adaptive seeding policy\\
\hline
$\hat{\sigma}(\psi ,(\lambda_p,\phi_p))$ &  The expected number of users influenced by $\pi$ under realizations $\lambda$, $\phi$\\
\hline
$\Delta(y|\psi)$  & The marginal benefit brought by $y$\\
\hline
$R$  &  Newly reachable users in stage 2 in each round\\
\hline
$Z$  &  The action space in stage 2 in each round\\
\hline
\end{tabular}
\end{center}
\vspace{-0.3cm}
\end{table}
\normalsize
  \section{Problem Formulation}\label{PR}

    From the model, we can see that the influence maximization in the two-stage model consists of three sequential processes: seeding in stage 1, seeding in stage 2 and influence diffusion from stage 2. In this section, we will study two ways of seeding: the non-adaptive case and the adaptive case. In the first case, a process goes on after the previous process is finished. In the second case, the three processes iteratively go on in a circle. In each round, only part of the process is done and users are seeded based on the result of previous seeding and diffusion processes.

    \subsection{Non-adaptive Case}
    We first seed users in $X$ to recruit agents. The discount allocation is denoted as the $m$-dimensional vector $C_1=(c_1, c_2, \cdots, c_m)$, where $m=|X|$. Let $S$ denote users who become agents in $X$. Then, $N(S)$ denotes friends newly reached in $N(X)$. We next allocate discounts to newly reachable users $N(S)$. Similarly, the discount allocation in $N(S)$ is denoted as $C_2=(c_1, c_2, \cdots, c_k)$, where $k=|N(S)|$. $\forall c_i$ in $C_1$ or $C_2$, $c_i \in [0,1]$, i.e. the discounts are fractional.
    It is easy to see that the seed set in each stage is probabilistic, since users accept discounts with probability function $p_u(c_u)$.

    Let us consider stage 1, given the allocation $C_1$ in $X$, the probability that the subset of users $S \subseteq X$ accepts the discounts is
\begin{equation}\label{eq1}
P_{r}(S;C_1,X)=\prod_{u\in S}p_{u}(c_u)\prod_{v\in X \setminus S}(1-p_{v}(c_v)).
\end{equation}
    Following the same technique, given allocation $C_2$ in $N(S)$, the probability that $T \subseteq N(S)$ becomes the seed set in stage 2 is
\begin{equation}\label{eq2}
P_{r}(T;C_2,N(S))=\prod_{u\in T}p_{u}(c_u)\prod_{v\in N(S) \setminus T}(1-p_{v}(c_v)).
\end{equation}
    Then, the influence diffusion starts from the seed set $T$ and $I(T)$ users get influenced in expectation. In stage 2, given newly reachable users $N(S)$ and the discount allocation $C_2$, we can obtain the expected number of users influenced, formally expressed in Equation (\ref{eq3}),
\begin{equation}\label{eq3}
Q(C_2; N(S))=\sum_{T \subseteq N(S)}P_{r}(T;C_2,N(S))I(T).
\end{equation}
    Under a fixed $N(S)$, we only need to find the optimal allocation $C_2$ to maximize $Q(C_2; N(S))$ with budget constraint $B_2$. Note that $N(S)$ is probabilistically determined by the discount allocation $C_1$ in stage 1. To maximize the influence spread, we have to first optimize the influence spread over $C_1$ with stage 2 collectively considered. Given initially accessible users $X$, the expected influence spread with regard to $C_1$ is
\begin{equation}\label{eq4}
f(C_1;X)=\sum_{S \subseteq X}P_{r}(S;C_1,X)\max Q(C_2; N(S)).
\end{equation}
      In summary, we optimize the influence spread over $C_1$ with possible seeding optimization in stage 2 considered. With the allocation $C_1$, we obtain newly reachable users $N(S)$ and further derive the optimal allocation $C_2$ to maximize the influence spread.



Under budget constraints $B_1$ and $B_2$, the non-adaptive influence maximization problem (NIM) can be formally formulated as follows.
\begin{equation}\label{NIM<}
\begin{aligned}
\text{NIM}:\\
&\text{max}\!\!  & & f(C_1;X)                   \!\!   &&\text{max}\!\!&&Q(C_2; N(S))\\
&\text{s.t.}\!\! & & \forall u\in X, c_u\!\in\! [0,1]\!\!  &&\text{s.t.}\!\!&&\forall u\in N(S), c_u\!\in\! [0,1]\\
&            & & \!\!\sum_{u\in X}c_u\leq B_1    \!\!      &&           && \!\!\sum_{u\in N(S)}\!\!\!\!c_u\leq B_2
\end{aligned}
\end{equation}

  To help the reader comprehend the idea of the non-adaptive case, we provide an example in Section 1.1 of the supplemental file.

    \subsection{Adaptive Case}
    In the adaptive case, users in stage 1 are seeded sequentially, instead of computing a discount allocation of $X$ all at once. The seeding process is defined on an action space $Y := X \times D$, where $D=\{d_{1}, d_{2}, \cdots, d_{l}\}$ is the set of $l$ discrete discount rates that can be adopted. $\forall d_i \in D, d_i\in [0,1]$ and $\max\{d_i\in D\}=1$. Selecting the action $y=(v(y), d(y)) \in Y$ means seeding user $v(y)$ with discount $d(y)$. Once $v(y)$ takes the discount, we reach its neighbors and carry out an allocation therein. Note that, the seeding processes in two stages are both based on previous diffusion results. We next introduce three basic concepts of our study.


\begin{defi}\label{def1}
\textbf{(Seeding Process $\boldsymbol{\psi}$).} $\psi$ sequentially records the actions adopted in stage 1 which aim at reaching influential users in $N(X)$. Let $dom(\psi)$ denote the set of actions without sequence.
\end{defi}
\begin{defi}\label{def2}
\textbf{(Seeding Realization $\boldsymbol{\lambda}$).} In stage 2, for each user $v$ in $N(X)$, $v$ will either accept the given discount $c_v\in [0,1]$ (denoted as ``1'') with probability $p_{v}(c_v)$, or reject it (denoted as ``0'') with probability $1-p_{v}(c_v)$. The decisions of users after being seeded in stage 2 are denoted by the function $\lambda: (N(X),[0,1]^{|N(X)|}) \rightarrow \{0,1\}$.

\end{defi}
\begin{defi}\label{def3}
\textbf{(Diffusion Realization $\boldsymbol{\phi}$).} For each edge $(u,v)\in E$, it is either in ``live'' state (denoted as ``1'') or in ``dead'' state (denoted as ``0''), indicating the influence through $(u,v)$ is successful or not. The states of edges are denoted by the function $\phi: E \rightarrow \{0,1\}$.
\end{defi}

    With the above preliminaries, we are ready to elaborate the sequential seeding process. Each time, we start from stage 1 and push the partial seeding process $\psi_p$ one step forward by adopting an action $y^*=(v(y^*),d(y^*))$ from $Y$ (i.e., probing user $v(y^*)$ with discount $d(y^*)$). If the user refuses the discount $d(y^*)$, we delete $y^*$ from action space $Y$ and move to the next round. Note that the budget is not wasted in this case, ``refuse'' means ``reuse''. For example, we instantiate discounts as vouchers. The discount will not be used (i.e., ``reuse'') if the user does not accept and apply the voucher to become a seed (i.e., ``refuse''). For the case of providing samples, we can forward information of the sample to inquire users' will. If the user is not satisfied (i.e., ``refuse''), we will not mail the sample to him/her and the budget is not wasted (i.e., ``reuse'').

    On the contrary, if the user $v(y^*)$ accepts $d(y^*)$ and becomes a seed, we can reach its neighbors in stage 2. Meanwhile, actions about $v(y^*)$ are abandoned and $d(y^*)$ is subtracted from $B_1$. Next we allocate discounts in newly reachable users $R$ with some budgets from $B_2$. Specifically, the budget drawn from $B_2$ only depends on the intrinsic property of $v(y^*)$, say the discount the user expects or the number of its neighbors. It is assumed that neighbors brought by each user in $X$ are non-overlapping, since users in $X$ are usually random. Otherwise, for a common neighbor, we can designate it as one user's neighbor. Some neighbors will accept the discounts and become seeds, forming the seed set. Then, a partial diffusion starts from the seed set and explores the state of edges exiting influenced users.

\begin{defi}\label{def4}
\textbf{(Adaptive Seeding Policy $\boldsymbol{\pi}$).} The adaptive policy is the function $\pi:\sigma(\psi_p ,(\lambda_p,\phi_p))\rightarrow Y$. Given observation $\sigma(\psi_p ,(\lambda_p,\phi_p))$, the policy $\pi$ will select an action from $Y$.
\end{defi}
    After each round of seeding, $\psi_p$ observes a partial realization $(\lambda_p,\phi_p)$. The set of users influenced under $\psi_p$ is denoted as $\sigma(\psi_p, (\lambda_p,\phi_p))$, and the number of users influenced is $\hat{\sigma}(\psi_p ,(\lambda_p,\phi_p))$. Due to the probability in users and edges, realizations $\lambda$ and $\phi$ are both probabilistic. The prior joint probability distribution of seeding realization and diffusion realization is assumed to be  $p((\lambda,\phi)):=P((\Lambda,\Phi)=(\lambda,\phi))$, where $\Lambda$ is a random seeding realization and $\Phi$ is a random diffusion realization. Under realizations $\lambda$ and $\phi$, the seeding process determined by $\pi$ is denoted as $\psi(\pi|(\lambda,\phi))$. The expected number of users influenced by $\pi$ is $\hat{\sigma}(\pi)=E[\hat{\sigma}(\psi(\pi |(\Lambda,\Phi)),(\Lambda,\Phi))]$, where the expectation is calculated with respect to $p((\lambda,\phi))$. We denote the discount allocated to user $u \in X$ under $\pi$ is $c(u|\pi)$. The influence maximization in the adaptive case (AIM) is a constrained optimization problem formulated as follows.
\begin{equation}\label{eq6}
\begin{aligned}
\text{AIM}:\;\;\;\\
& \text{max}& & \hat{\sigma}(\pi) \\
& \text{s.t.}& & \forall u \in X, c(u|\pi)\in D, \forall (\lambda,\phi)\\
&  & & \forall v \in N(X), c_v\in [0,1], \forall (\lambda,\phi)\\
&  & & \!\!\sum_{u \in X}c(u|\pi)\leq B_1, \forall (\lambda,\phi)\\
&  & & \!\!\!\!\!\sum_{v\in N(X)}\!\!\!\!c_v\leq B_2, \forall (\lambda,\phi)
\end{aligned}
\end{equation}

  For the reader's comprehension, we provide an illustrative example in Section 1.2 of the supplemental file.

\section{Non-adaptive Influence Maximization}\label{NIM}
Based on the non-adaptive problem formulated, in this section we look into the problem and present its solution. We first analyze the properties of NIM and transform the inequality constraints in Formula \eqref{NIM<} into equality constraints. Then, we present the two-stage coordinate descent algorithm to solve the problem.

  \subsection{Properties of NIM}

  We find that the optimization problem in NIM is NP-hard by Lemma \ref{le1}. Therefore, NIM can not be solved in polynomial time unless $P=NP$.
\begin{lem}\label{le1}
Finding the optimal discount allocation in two stages in NIM is NP-hard.
\end{lem}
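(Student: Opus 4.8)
The plan is to prove NP-hardness by a polynomial reduction from the classical influence maximization problem, which is NP-hard \cite{seminal_IM}\cite{Feige} already on instances where the influence function $I(\cdot)$ is poly-time computable (the Set~Cover / Vertex~Cover gadgets of Kempe \textit{et al.}). Given such an instance---a directed graph $G'(V',E')$ with propagation probabilities and a seed budget $k$---I would build a NIM instance as follows. Take $G(V,E)$ to be $G'$ with one fresh node $a$ added, together with a directed edge $a\to v$ of propagation probability $0$ for every $v\in V'$, so that $N(\{a\})=V'$; set $X=\{a\}$, $B_1=1$, $B_2=k$, and let the seed probability function be the identity $p_u(c)=c$ for every user $u$ (it is continuously differentiable, nondecreasing, with $p_u(0)=0$ and $p_u(1)=1$, hence admissible).

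Next I would show this instance encodes classical IM. Stage~1 is forced: since $N(\varnothing)=\varnothing$ gives $\max Q(C_2;N(\varnothing))=I(\varnothing)=0$, the stage-1 objective equals $p_a(c_a)\cdot\max_{|T|\le k}I(T)$, maximized by spending the whole budget, $c_a=1$; then $a$ becomes an agent with probability $1$, so $S=\{a\}$ and $N(S)=V'$ deterministically. The problem thus reduces to $\max Q(C_2;V')$ over $c_v\in[0,1]$ with $\sum_{v\in V'}c_v\le k$. Because $p_v(c_v)=c_v$, we have $Q(C_2;V')=\sum_{T\subseteq V'}\prod_{v\in T}c_v\prod_{v\notin T}(1-c_v)I(T)$, i.e. the multilinear extension $\hat I$ of the influence function evaluated at $(c_v)_{v\in V'}$. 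I would then argue this continuous program has an integral optimum: $I$ is monotone submodular under the independent cascade model \cite{seminal_IM}, so $\hat I$ is monotone and all its mixed second partials are nonpositive; being multilinear, $\hat I$ is hence convex along every coordinate-swap direction $e_i-e_j$. So on the face $\{c\in[0,1]^{V'}:\sum_v c_v=k\}$ one can repeatedly slide along such a direction toward a $0/1$ coordinate without decreasing $\hat I$ until a $0/1$ vertex is reached, while monotonicity drives the optimum onto that face. Therefore an optimal $C_2$ is the indicator of some $T^\star$ with $|T^\star|\le k$, giving optimal NIM value $I(T^\star)=\max_{|T|\le k}I(T)$, the classical IM optimum. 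An exact polynomial-time solver for NIM would then solve classical IM, so NIM (and its decision version) is NP-hard unless $P=NP$, establishing Lemma~\ref{le1}.

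The step I expect to be the main obstacle is the integrality of the inner program---arguing rigorously that fractional discounts never help, i.e. that the multilinear extension of a monotone submodular function is maximized over $\{c\ge 0:\sum_v c_v\le k\}$ at a $0/1$ point with at most $k$ ones. This is exactly where submodularity (through the sign of the mixed second partials $\partial^2\hat I/\partial c_i\partial c_j\le 0$) and the vertex geometry of the truncated cube enter, and it is what makes the continuous-discount setting genuinely as hard as the binary one. The remaining checks---that $p_u(c)=c$ satisfies the four required properties, that the auxiliary node $a$ leaves $I$ unchanged on subsets of $V'$, and that stage~1 is forced to $c_a=1$---are routine.
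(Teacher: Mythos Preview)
Your proof is correct and follows the same skeleton as the paper's: both plant a single gateway node in $X$ so that stage~1 is trivially forced, and then argue that stage~2 alone is already NP-hard. The differences are in the source problem and in how the continuous discounts are handled. The paper reduces directly from \textsc{Set Cover} (edges with propagation probability~$1$ from the gateway to the ``set'' nodes, and from set nodes to element nodes), proves the \emph{discrete} two-stage problem NP-hard, and then simply asserts that the continuous version follows from the discrete one ``trivially.'' You instead reduce from classical IM (which is itself hard via the same \textsc{Set Cover} gadget), use probability-$0$ edges so the gateway never contaminates the diffusion, and---crucially---make the discrete-to-continuous step explicit: with $p_u(c)=c$ the stage-2 objective is the multilinear extension of $I$, and a pipage-style argument (convexity along $e_i-e_j$ from $\partial^2\hat I/\partial c_i\partial c_j\le 0$) shows an integral optimum exists. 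What you gain is a self-contained justification of exactly the step the paper omits; what the paper's route buys is brevity, since for the \textsc{Set Cover} gadget all probabilities are $0/1$ and the coverage structure is transparent. Both are valid; yours is the more rigorous on the point you correctly identify as the main obstacle.
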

  Lemma \ref{le2} shows that under the same budget $B_2$, if we reach more users in stage 2, the maximal influence spread will be larger, since we have more seeding options.
\begin{lem}\label{le2}
With the same budget $B_2$, if $T_1\subseteq T_2$, then $\max Q(C_2; T_1)\leq \max Q(C_2; T_2)$.
\end{lem}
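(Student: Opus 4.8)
The plan is to prove this by an explicit extension argument: any feasible discount allocation on the smaller set $T_1$ can be lifted to a feasible allocation on the larger set $T_2$ that realizes exactly the same expected influence spread, which immediately forces $\max Q(C_2;T_2)\ge \max Q(C_2;T_1)$. Intuitively, the larger set only adds seeding options, and simply declining to use them recovers the old optimum.

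First I would fix an optimal allocation $C_2^{*}$ on $T_1$ under budget $B_2$, so that $Q(C_2^{*};T_1)=\max Q(C_2;T_1)$. Then I would construct an allocation $C_2'$ on $T_2$ by setting $c_u'=c_u^{*}$ for $u\in T_1$ and $c_u'=0$ for $u\in T_2\setminus T_1$. Since $\sum_{u\in T_2}c_u'=\sum_{u\in T_1}c_u^{*}\le B_2$, the allocation $C_2'$ is feasible for $T_2$, hence $\max Q(C_2;T_2)\ge Q(C_2';T_2)$.

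Next I would show $Q(C_2';T_2)=Q(C_2^{*};T_1)$ by examining $P_r(T;C_2',T_2)$ term by term in the definition $Q(C_2';T_2)=\sum_{T\subseteq T_2}P_r(T;C_2',T_2)I(T)$. If $T\subseteq T_2$ contains some $v\in T_2\setminus T_1$, then the factor $p_v(c_v')=p_v(0)=0$ annihilates the whole product, so $P_r(T;C_2',T_2)=0$. If $T\subseteq T_1$, then the extra reject-factors $1-p_w(c_w')=1-p_w(0)=1$ coming from $w\in T_2\setminus T_1$ leave the product unchanged, so $P_r(T;C_2',T_2)=P_r(T;C_2^{*},T_1)$. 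Summing over all $T$ therefore gives $Q(C_2';T_2)=\sum_{T\subseteq T_1}P_r(T;C_2^{*},T_1)I(T)=Q(C_2^{*};T_1)$, and the chain of inequalities closes the proof.

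The argument is essentially bookkeeping and presents no serious obstacle; the only point demanding care is verifying that the probability mass of the lifted allocation is supported on subsets of $T_1$ and agrees there with the original allocation. This is precisely where property (1), $p_u(0)=0$, of the seed probability function $p_u(\cdot)$ is used, and it is the structural reason why the monotonicity $\max Q(C_2;T_1)\le\max Q(C_2;T_2)$ holds even though the budget $B_2$ is not increased.
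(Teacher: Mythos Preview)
Your proof is correct and follows essentially the same extension argument as the paper: lift the optimal allocation on $T_1$ to $T_2$ by assigning discount $0$ to users in $T_2\setminus T_1$, observe that the lifted allocation is feasible and achieves the same $Q$-value, and conclude by optimality. The only cosmetic difference is that the paper first treats the one-element case $T_2=T_1\cup\{v\}$ and then invokes transitivity of $\geq$, whereas you handle arbitrary $T_1\subseteq T_2$ in one shot and spell out explicitly (via $p_u(0)=0$) why $Q(C_2';T_2)=Q(C_2^{*};T_1)$; this extra detail is a mild improvement in rigor but not a different idea.
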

  The proofs of Lemma \ref{le1} and \ref{le2} are provided in Sections 2.2 and 2.3 of the supplemental file respectively. Recall that $C_1=(c_1, c_2, \cdots, c_m)$ and $C_2=(c_1, c_2, \cdots, c_k)$. Let us write $C_1^\prime=(c_1^\prime, \cdots, c_m^\prime)$ and $C_2^\prime=(c_1^\prime, \cdots, c_k^\prime)$. For $C_1$ and $C_1^\prime$ (resp. $C_2$ and $C_2^\prime$), if $\forall i$, $c_i\geq c_i^\prime$, we denote $C_1\geq C_1^\prime$ (resp. $C_2\geq C_2^\prime$). On this basis, we have the following theorem.
\begin{thm}\label{th1}
Monotonicity property holds in both stages, i.e.,
\begin{enumerate}
\item[(1)] If $C_2\geq C_2^\prime$, then $Q(C_2; N(S))\geq Q(C_2^\prime; N(S));$
\item[(2)] If $C_1\geq C_1^\prime$, then $f(C_1; X)\geq f(C_1^\prime; X).$
\end{enumerate}
\end{thm}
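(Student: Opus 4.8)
The plan is to establish (1) first, then bootstrap it to prove (2), since $f(C_1;X)$ is built from $\max Q(C_2;N(S))$ and the monotonicity of that inner optimum is what ultimately drives the outer monotonicity. For part (1), I would start from the defining Equation~\eqref{eq3}, $Q(C_2;N(S))=\sum_{T\subseteq N(S)}P_r(T;C_2,N(S))\,I(T)$, and observe that the discount vector $C_2$ enters only through the probabilities $P_r(T;C_2,N(S))=\prod_{u\in T}p_u(c_u)\prod_{v\in N(S)\setminus T}(1-p_v(c_v))$. The natural move is a coupling / stochastic-dominance argument: think of the random seed set $T$ as generated by independent coin flips, one per user $v\in N(S)$, where $v$ joins $T$ with probability $p_v(c_v)$. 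Since each $p_v(\cdot)$ is monotonically nondecreasing (property (3) of the seed probability function), raising $C_2$ to $C_2'\le C_2$ coordinatewise — wait, here $C_2\ge C_2'$ — only increases each inclusion probability, so the seed set under $C_2$ stochastically dominates the one under $C_2'$ in the sense that it can be coupled to always be a superset. Because $I(\cdot)$ is monotone (adding users to a seed set never decreases the expected spread in the independent cascade model — a standard and easily argued fact), taking expectations over the coupled randomness gives $Q(C_2;N(S))\ge Q(C_2';N(S))$. An equivalent route avoiding explicit coupling: fix all coordinates but one and show $\partial Q/\partial c_v\ge 0$ by writing $Q$ as $p_v(c_v)\cdot(\text{expected spread with }v\text{ seeded})+(1-p_v(c_v))\cdot(\text{expected spread without }v)$, whose derivative in $c_v$ is $p_v'(c_v)$ times a nonnegative difference (monotonicity of $I$); then integrate along a coordinatewise path from $C_2'$ to $C_2$. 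Either way, part (1) reduces to monotonicity of $I$ plus monotonicity of each $p_v$.

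For part (2), recall $f(C_1;X)=\sum_{S\subseteq X}P_r(S;C_1,X)\max Q(C_2;N(S))$. Define $g(S):=\max Q(C_2;N(S))$ subject to the budget-$B_2$ constraint on $N(S)$; by Lemma~\ref{le2}, $g$ is monotone in $S$ (if $S_1\subseteq S_2$ then $N(S_1)\subseteq N(S_2)$, hence $g(S_1)\le g(S_2)$). Now the same coupling idea applies one level up: the random agent set $S\subseteq X$ is generated by independent coins, user $u$ joining $S$ with probability $p_u(c_u)$, and raising $C_1$ to $C_1\ge C_1'$ only increases each $p_u(c_u)$ (property (3)), so $S$ under $C_1$ can be coupled to be a superset of $S$ under $C_1'$. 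Combining with monotonicity of $g$ and taking expectations yields $f(C_1;X)\ge f(C_1';X)$. As with part (1), one may instead argue coordinatewise: $f$ is affine in $p_u(c_u)$ for each fixed $u$, with coefficient $\sum_{S\ni u}P_r(S\setminus\{u\};C_1,X\setminus\{u\})\big(g(S)-g(S\setminus\{u\})\big)\ge 0$, and then integrate along a path.

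The main obstacle is not the combinatorial structure but pinning down the one ingredient that is used as a black box: monotonicity of $I(\cdot)$ on $2^V$ under the independent cascade model, and — more subtly — the monotonicity of the \emph{constrained optimum} $g(S)=\max_{C_2}Q(C_2;N(S))$, which is exactly the content of Lemma~\ref{le2} and must be invoked cleanly (the budget $B_2$ is held fixed, and a larger neighborhood only enlarges the feasible set of allocations, so the max cannot decrease). A secondary subtlety worth a sentence in the write-up is that part~(1) is really about a \emph{fixed} $N(S)$ whereas part~(2) must account for $N(S)$ itself changing with $C_1$; this is why part~(2) needs Lemma~\ref{le2} rather than just part~(1). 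Once these are in place, the coupling argument (or the coordinatewise derivative-sign argument) is routine, and I would present whichever is shorter, most likely the coordinatewise version since it keeps everything finite and avoids formalizing the coupling.
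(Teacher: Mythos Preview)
Your proposal is correct, and the coordinatewise argument you say you would present is exactly the paper's approach: the paper fixes all coordinates but one, expands $Q$ (resp.\ $f$) by conditioning on whether user $u$ is a seed (resp.\ an agent), shows the one-coordinate finite difference factors as $(p_u(c_u)-p_u(c_u'))\cdot(\text{nonnegative term})$ using monotonicity of $I(\cdot)$ for part~(1) and Lemma~\ref{le2} for part~(2), and then extends to general $C\ge C'$ by transitivity of $\ge$---the discrete analogue of your ``integrate along a path.'' Your coupling alternative is also valid but is not what the paper does.
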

\begin{proof}
Please refer to Section 2.4 in the supplemental file.
\end{proof}

  Theorem \ref{th1} indicates that the more discount we allocate to users in each stage, the larger the influence spread is. Thus, we can draw the conclusion that the budget allocated to both stages will be used up. By contradiction, if there is remaining budget in stage 1 (resp. stage 2) while $f$ (resp. $Q$) is maximized, we can add it to current discount allocation $C_1$ (resp. $C_2$). Then, $f$ (resp. $Q$) is further increased, a contradiction.
  Thus, the NIM is equivalent to the following problem.
\begin{equation}\label{NIM=}
\begin{aligned}
&\text{max}\!\!  & & f(C_1;X)                   \!\!   &&\text{max}\!\!&&Q(C_2; N(S))\\
&\text{s.t.}\!\! & & \forall u\in X, c_u\!\in\! [0,1]\!\!  &&\text{s.t.}\!\!&&\forall u\in N(S), c_u\!\in\! [0,1]\\
&            & & \!\!\sum_{u\in X}c_u= B_1    \!\!      &&           && \!\!\sum_{u\in N(S)}\!\!\!\!c_u= B_2
\end{aligned}
\end{equation}
  \subsection{Coordinate Descent Allocation}
  We attempt to decide the discount allocation in each stage by the coordinate descent algorithm. Since the design of $C_1$ should collectively consider the allocation in stage 2, for readability, we first explain how to decide the allocation in stage 2. Following a similar idea, we design the allocation in stage 1.
  \subsubsection{Coordinate Descent in Stage 2}

  Given the seed set $S$ in stage 1 and the budget constraint $B_2$, the coordinate descent algorithm iteratively optimizes $Q(C_2; N(S))$ from an initial allocation in $N(S)$, e.g., uniform allocation. In each iteration, we randomly pick two users $u$ and $v$, whose discounts are $c_u$ and $c_v$ respectively. Then, we adjust the discounts between $u$ and $v$ to optimize $Q(C_2; N(S))$, with other users' discounts fixed.

  Let $B_2^\prime=c_u + c_v$, $B_2^\prime$ is a constant during the rearrangement. Similar to the expansion of the objective function in \cite{continuous_IM}, we could rewrite $Q(C_2; N(S))$ w.r.t. $c_u$ as follows
\begin{equation}\label{eq8}
\begin{split}
Q(C_2; N(S))= & \sum_{T \subseteq N(S)\setminus \{u,v\}}P_{r}(T;C_2,N(S)\setminus\{u,v\})\cdot
 \\           & \Big \{[1-p_{u}(c_u)][1-p_{v}(B_2^\prime-c_u)]I(T)
 \\           & +[1-p_{u}(c_u)]p_{v}(B_2^\prime-c_u)I(T\cup \{v\})\\
              & +p_{u}(c_u)[1-p_{v}(B_2^\prime-c_u)]I(T\cup \{u\})\\
              & +p_{u}(c_u)p_{v}(B_2^\prime-c_u)I(T\cup \{u,v\})\Big \}.
\end{split}
\end{equation}
    Thus, $Q(C_2; N(S))$ is a function w.r.t. $c_u$. Then, we can write it as $Q(c_u)$. Due to $0\leq c_u, c_v \leq 1$ and $c_u+c_v=B_2^\prime$, we have the constraint $\max(0, B_2^\prime-1)\leq c_u\leq \min(B_2^\prime, 1)$.

    In each iteration, we obtain the new discount of $u$ by solving the following optimization problem. In the mean time, $c_v$ is determined by $c_v=B_2^\prime-c_u$.
\begin{equation}\label{eq9}
\begin{aligned}
& \text{max}
& & Q(c_u) \\
& \text{s.t.}
& & \max(0, B_2^\prime-1)\leq c_i\leq \min(B_2^\prime, 1)
\end{aligned}
\end{equation}
    It is an optimization over a single variable in a closed interval. Since $p_{u}(\cdot)$ and $p_{v}(\cdot)$ are continuously differentiable, $Q(c_u)$ is differentiable as well. Therefore, the discount $c_u$ that maximizes $Q(c_u)$ must be in one of the three cases: (1) the stationary points of $Q(c_u)$ in interval $(\max(0, B_2^\prime-1),\min(B_2^\prime, 1))$; (2) $\max(0, B_2^\prime-1)$; (3) $\min(B_2^\prime, 1)$. We check the value of $Q(c_u)$ in the three cases and choose the optimal one as the new discount allocated to user $u$.

\begin{algorithm}[h]
\begin{spacing}{1}
\small
\caption{The Coordinate Descent Algorithm CD(Q,B,T)} \label{alg1}
\begin{algorithmic}[1]
\REQUIRE ~~\!\!\!\!Objective function $Q(\cdot)$, budget $B$, accessible users $T$
\ENSURE ~~\!\!\!\!Allocation $C$
\STATE Initialize $C$
\WHILE{not converge}
	 \STATE Randomly pick users $u \;\text{and}\; v\in T$
	 \STATE $B^\prime\leftarrow c_u + c_v$
     \STATE Find all stationary points $x$ of $Q(c_u)$ in $(\max(0, B^\prime-1), \min(B^\prime, 1))$
     \STATE $c_u\leftarrow \argmax_{\{x, \max(0, B^\prime-1), \min(B^\prime, 1)\}}Q(c_u)$
     \STATE $c_v\leftarrow B^\prime-c_u$
\ENDWHILE
\STATE Return $C$
\normalsize
\end{algorithmic}
\end{spacing}
\end{algorithm}

      The coordinate descent algorithm is described in Algorithm \ref{alg1}. Its convergence is guaranteed, since $Q(c_u)$ is upper bounded by the size of the network, and $Q(c_u)$ is monotonically increasing as the iteration goes on. Thus, the algorithm will finally converge to a limit allocation. However, we can not ensure the limit to be locally optimal, which depends on the property of $Q(c_u)$. Even if the limit is locally optimal, it may not be the global optimum, where different initial allocations could be applied for improvement.


  \subsubsection{Coordinate Descent in Stage 1}
  Considering the hardness of computing the global optimum of $Q(C_2;\cdot)$, which is needed in $f(C_1;X)$ by Eq. \ref{eq4}, we would like to define a proxy function $\hat{Q}(C_2;\cdot)$ which is maximized by Alg. \ref{alg1}. Accordingly, the objective function becomes $\hat{f}(C_1;X)$.
  The coordinate descent allocation in stage 1 follows the similar idea. We start from an arbitrary allocation in $X$ and optimize $\hat{f}(C_1;X)$ iteratively. In each iteration,  we randomly pick two users $i$, $j$ and adjust the discounts between them with other discounts fixed. Let $B_1^\prime =c_i + c_j$, $B_1^\prime$ is a constant during the rearrangement. Then, $\hat{f}(C_1;X)$ can be written as
\begin{equation}\label{f(C_1)}
\begin{split}
\hat{f}(C_1;X)=\!\!& \sum_{S \subseteq X\setminus \{i,j\}}P_{r}(S;C_1,X\setminus \{i,j\})\cdot
 \\           &\!\! \Big \{[1\!-\!p_{i}(c_i)][1\!-\!p_{j}(B_1^\prime\!-\!c_i)]\max \hat{Q}(C_2; N(S))
 \\           &\!\! +[1\!-\!p_{i}(c_i)]p_{j}(B_1^\prime\!-\!c_i)\max \hat{Q}(C_2; N(S\cup \{j\}))\\
              &\!\! +p_{i}(c_i)[1\!-\!p_{j}(B_1^\prime\!-\!c_i)]\max \hat{Q}(C_2; N(S\cup \{i\}))\\
              &\!\! +p_{i}(c_i)p_{j}(B_1^\prime\!-\!c_i)\max \hat{Q}(C_2; N(S\cup \{i,j\}))\Big \}.
\end{split}
\end{equation}


      To handle $\hat{f}(C_1;X)$, we need to optimize $\hat{Q}(C_2; \cdot)$ and estimate its influence spread. Note that the optimal allocation of $\hat{Q}(C_2; \cdot)$ could be obtained by Alg. \ref{alg1}. To estimate the resultant influence spread, we apply a polling based method in Section 3 of the supplementary file.
  Analogous to the analysis in stage 2, $\hat{f}(C_1;X)$ can be written as $\hat{f}(c_i)$. In each iteration, we obtain the new discount of $i$ by solving the following optimization problem in the same way as stage 2. Meanwhile, $c_j$ is determined according to $c_j=B_1^\prime-c_i$.
\begin{equation}\label{eq11}
\begin{aligned}
& \text{max}
& & \hat{f}(c_i) \\
& \text{s.t.}
& & \max(0, B_1^\prime-1)\leq c_i\leq \min(B_1^\prime, 1)
\end{aligned}
\end{equation}

  A framework of deciding discounts with coordinate descent algorithm is established in Algorithm \ref{alg2}. We first design the allocation in stage 1 with stage 2 collectively considered. Then, based on the seeding result in stage 1, we determine the allocation in stage 2.


\begin{algorithm}[h]
\begin{spacing}{1}
\small
\caption{The Two-Stage Coordinate Descent Framework} \label{alg2}
\begin{algorithmic}[1]
\REQUIRE ~~\!\!\!\!Budget $B_1$, $B_2$, initially reachable users $X$
\ENSURE ~~\!\!\!\!Allocation $C_1$, $C_2$
     \STATE $C_1\leftarrow CD(\hat{f}(C_1;X), B_1, X)$
     \STATE Seed users in X with $C_1$
     \STATE $S\leftarrow$ users in $X$ who accept discounts
     \STATE $C_2\leftarrow CD(\hat{Q}(C_2;N(S)), B_2, N(S))$
\STATE Return $C_1$, $C_2$
\end{algorithmic}
\end{spacing}
\normalsize
\end{algorithm}


\section{Adaptive Influence Maximization}\label{AIM}
  In the adaptive case, the seeding and diffusion processes iteratively go on in a circle. In each round, one user is seeded in stage 1. Discounts are allocated to the newly reached neighbors in stage 2, and then the influence spread expands. The adaptive case is studied in two discount settings: discrete-continuous setting and discrete-discrete setting.

  \subsection{Discrete-Continuous Setting}
   \textbf{Discrete-Continuous Setting:} Users in stage 1 are probed with actions from $Y=X\times D$, where $D=\{d_{1}, d_{2}, \cdots, d_{l}\}$ is the set of optional discount rates (\textit{discrete}). In stage 2, discounts of newly reachable users take value in interval $[0,1]$ (\textit{continuous}).
   \subsubsection{Seeding Strategy}

    In this subsection, we first specify the selection of actions in stage 1. Given the previous seeding process $\psi_p$ in stage 1, recall that $\sigma(\psi_p,(\lambda_p, \phi_p))$ denotes the set of influenced users under $\psi_p$. Without causing ambiguity, we will write $\sigma(\psi_p)=\sigma(\psi_p,(\lambda_p, \phi_p))$. Then, the induced graph of uninfluenced users $V\setminus \sigma(\psi_p)$ can be denoted as $G(V\setminus \sigma(\psi_p))$. Let $\Delta(y|\psi_p)$ denote the expected number of users influenced in $G(V\setminus \sigma(\psi_p))$, if $v(y)$ becomes the seed and discounts are allocated to $v(y)$'s newly reached neighbors. $\Delta(y|\psi)$ is the marginal benefit brought by $y$, expressed as
\begin{equation}\label{eq12}
\begin{aligned}
\Delta(y|\psi_p):=& E[\hat{\sigma}(\psi_p\cup \{y\},(\Lambda, \Phi))-\\
                & \hat{\sigma}(\psi_p,(\Lambda, \Phi))|(\Lambda, \Phi)\sim \psi_p],
\end{aligned}
\end{equation}
where $(\Lambda, \Phi)\sim \psi_p$ denotes random realizations that contain the existing realization observed by $\psi_p$, and the expectation is taken with respect to $p((\lambda, \phi)):=P((\Lambda,\Phi)=(\lambda,\phi))$.
In each round, we select the action that maximizes the benefit-to-cost ratio with the remaining budget, i.e.
\begin{equation}\label{eq13}
y^*=\argmax_{y\in Y}\frac{\Delta(y|\psi_p)}{d(y)}.
\end{equation}
Supposing the targeted user $v(y^*)$ accepts the discount, we reach its neighbors and come to stage 2, where we further allocate discounts to newly reached neighbors $R$ according to the coordinate descent algorithm $CD(\cdot)$. However, if the user refuses the discount, we remove $y^*$ from $Y$ and proceed to the next round. Based on the above description, the pseudo-code is presented in Algorithm \ref{alg3} $\pi^{\text{greedy}}$. The $DCA(\cdot)$ function in Algorithm \ref{alg3} is presented in Algorithm \ref{alg4}, which describes the actions triggered after action $y^*$ is accepted. The framework of $\pi^{\text{greedy}}$ will be inherited in the subsequent adaptive greedy algorithms, except some modifications.

\begin{algorithm}[h]
\begin{spacing}{0.9}
\small
\caption{The Adaptive Greedy Algorithm $\pi^{\text{greedy}}$} \label{alg3}
\begin{algorithmic}[1]
\REQUIRE ~~\!\!\!\!Budget $B_1$, $B_2$, action space $Y=X\times D$
\ENSURE ~~\!\!\!\!Accepted actions $P_1$, allocation $C_2$
\STATE Initialize $P_1\leftarrow \emptyset$, $C_2\leftarrow \mathbf{0}$
\WHILE{$B_1\geq 0$}
\IF{$\exists y\in Y$ s.t. $d(y)\leq B_1$}
    \STATE Select $y^*=\argmax_{y\in Y}\frac{\Delta(y|\psi_p)}{d(y)}$ s.t. $d(y)\leq B_1$,
    \STATE Probe $v(y^*)$ with discount $d(y^*)$
\IF{$v(y^*)$ accepts $d(y^*)$}
    \STATE $R\leftarrow$ newly reachable users
    \STATE $C_2^p\leftarrow DCA(R, y^*, P_1, B_1, Y)$
    \STATE Update $C_2$ with $C_2^p$
\ELSE
    \STATE $Y\leftarrow Y\setminus y^*$
\ENDIF
\ENDIF
\ENDWHILE
\STATE Return $P_1$, $C_2$
\end{algorithmic}
\end{spacing}
\normalsize
\end{algorithm}


\begin{algorithm}[h]
\begin{spacing}{0.9}
\small
\caption{D-C Allocation $DCA(R, y^*, P_1, B_1, Y)$} \label{alg4}
\begin{algorithmic}[1]
\REQUIRE ~~\!\!\!\!Newly reachable users $R$, action newly accepted $y^*$, accepted actions $P_1$, remaining budget $B_1$, current action space $Y$
\ENSURE ~~\!\!\!\!Allocation $C_2^p$
    \STATE Initialize $C_2^p$
    \STATE $P_1\leftarrow P_1 \cup y^*$; $B_1\leftarrow B_1-d(y^*)$
    \STATE $Y\leftarrow Y\setminus \{y|v(y)=v(y^*)\}$
    \STATE $C_2^p\leftarrow CD(\hat{Q}(C_2^p;R)$,$\frac{B_2}{B-B_2}\times d(y^*),R)$
    \STATE Return $C_2^p$
\end{algorithmic}
\end{spacing}
\normalsize
\end{algorithm}

\begin{defi}\label{def5}
\textbf{(Adaptive Submodularity).} The function $\hat{\sigma}$ is adaptive submodular with respect to $p((\lambda, \phi))$, if for all $\psi \subseteq \psi^\prime$(i.e.$\psi$ is a subprocess of $\psi^\prime$), and $\forall y \in Y\setminus dom(\psi)$, we have
\begin{center}
$\Delta(y|\psi)\geq \Delta(y|\psi^\prime)$.
\end{center}
Furthermore, if for all $\psi$ and all $y\in Y$, $\Delta(y|\psi)\geq 0$ holds, then $\hat{\sigma}$ is adaptive monotone.
\end{defi}

  By Lemma \ref{adasub1}, we find that $\hat{\sigma}(\cdot|(\lambda,\phi))$ is adaptive submodular and adaptive monotone, where the proof is deferred to Section 2.5 of the supplemental file. Since a non-negative linear combination of monotone adaptive submodular functions is still monotone adaptive submodular, we have $\hat{\sigma}(\cdot)$ is monotone and adaptive submodular.

\begin{lem}\label{adasub1}
$\hat{\sigma}(\cdot|(\lambda,\phi))$ is adaptive monotone and adaptive submodular, under any realization $(\lambda,\phi)$.
\end{lem}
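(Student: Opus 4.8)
The plan is to reduce the adaptive statement to the classical live-edge coverage argument for influence functions under the independent cascade model, in the spirit of Golovin and Krause \cite{adaptive_submodularity_golovin}. The structural fact that makes this clean in our two-stage setting is that an accepted agent's ``footprint'' does not depend on the order in which actions are adopted: in $DCA(\cdot)$ the budget drawn from $B_2$ for the neighbor block $R$ of an accepted action $y^{*}$ is $\frac{B_2}{B-B_2}\,d(y^{*})$, which depends only on the intrinsic discount $d(y^{*})$; the allocation $CD(\hat{Q}(\cdot;R),\cdot,R)$ on $R$ depends only on $R$ and that budget; and the neighbor blocks of distinct agents in $X$ are assumed disjoint. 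Monotonicity is then immediate, since adding an accepted action only enlarges the influenced set, so $\hat{\sigma}(\psi\cup\{y\},(\lambda,\phi))\ge\hat{\sigma}(\psi,(\lambda,\phi))$ and, after averaging, $\Delta(y|\psi)\ge 0$.

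First, I would fix a full realization $(\lambda,\phi)$ --- a seeding realization on $N(X)$ together with a live/dead labelling of $E$ --- and, for an adopted-and-accepted action $y$, let $\Gamma_{(\lambda,\phi)}(y)$ be the set of users that become influenced because of $y$: the agent $v(y)$, the neighbors in its block $R$ that $\lambda$ marks as accepting their $CD$-assigned discount, and everyone reachable from those neighbor-seeds along live edges of $\phi$. By the order-independence above this set is well defined, and
\begin{equation*}
\hat{\sigma}(\psi,(\lambda,\phi))=\Big|\bigcup_{y\in dom(\psi)}\Gamma_{(\lambda,\phi)}(y)\Big|.
\end{equation*}
A size-of-union functional of a fixed family of sets is monotone and submodular, so for $\psi\subseteq\psi^\prime$ and $y\notin dom(\psi^\prime)$,
\begin{equation*}
\hat{\sigma}(\psi\cup\{y\},(\lambda,\phi))-\hat{\sigma}(\psi,(\lambda,\phi))\;\ge\;\hat{\sigma}(\psi^\prime\cup\{y\},(\lambda,\phi))-\hat{\sigma}(\psi^\prime,(\lambda,\phi)).
\end{equation*}
The stage-1 acceptance of $v(y)$ only scales such a gain by $p_{v(y)}(d(y))$, a factor independent of the process, so the inequality survives averaging over that coin flip; this is the per-realization monotone submodularity asserted by the lemma for a fixed $(\lambda,\phi)$.

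Second, I would lift this to the adaptive inequality of Definition \ref{def5}. Expanding $\Delta(y|\psi)$ via its definition \eqref{eq12} as the conditional expectation of $\hat{\sigma}(\psi\cup\{y\},\cdot)-\hat{\sigma}(\psi,\cdot)$ over realizations consistent with $\psi$, and observing that every realization consistent with $\psi^\prime$ is also consistent with $\psi$, I apply the per-realization inequality termwise. The delicate point is that $\Delta(\cdot)$ conditions on $\psi$ in one case and on the strictly richer $\psi^\prime$ in the other; to close the gap I would argue that, because $y\notin dom(\psi^\prime)$ and footprints are order-independent, the marginal of $y$ under a realization is a function only of the fresh randomness that $y$ ``touches'' --- the acceptances inside its disjoint block $R$ and the live-edge states within the residual graph $G(V\setminus\sigma(\psi^\prime))$ --- which is distributed identically given $\psi$ and given $\psi^\prime$, so the extra observations recorded in $\psi^\prime$ integrate out harmlessly. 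This yields $\Delta(y|\psi)\ge\Delta(y|\psi^\prime)\ge 0$, i.e. $\hat{\sigma}(\cdot\,|\,(\lambda,\phi))$ is adaptive monotone and adaptive submodular; the unconditional $\hat{\sigma}(\cdot)$ then inherits the property as a non-negative combination over realizations, as noted right after the lemma.

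I expect the lifting step to be the main obstacle: making rigorous the coupling of the two conditional distributions over $(\lambda,\phi)$ and the claim that conditioning on the richer $\psi^\prime$ cannot help the marginal of a new action $y$. A secondary item to pin down is the order-independence of $\Gamma_{(\lambda,\phi)}(y)$ itself --- in particular that the $DCA$ budget and the $CD$ allocation on $R$ genuinely do not depend on history (one may need to fix, or absorb into $(\lambda,\phi)$, any internal randomness of $CD$ so that $\Gamma_{(\lambda,\phi)}(y)$ is unambiguous). The coverage representation and both monotonicity claims are routine once these are in place.
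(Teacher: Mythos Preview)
Your coverage-function reduction is a cleaner route than the paper's and would succeed if the footprint $\Gamma_{(\lambda,\phi)}(y)$ were truly history-independent, but that premise is where the argument breaks. You justify order-independence by saying ``the allocation $CD(\hat{Q}(\cdot;R),\cdot,R)$ on $R$ depends only on $R$ and that budget''; however, in the adaptive setting the objective being optimized in stage~2 is the marginal spread in the \emph{residual} graph $G(V\setminus\sigma(\psi_p))$ (this is how $\Delta(y|\psi_p)$ is defined, and it is explicit in the discrete-discrete analogue, Eq.~\eqref{eq15}). Thus the stage-2 allocation---and with it the set of accepting neighbors and the footprint---genuinely depends on what $\psi_p$ has already influenced, even with disjoint neighbor blocks and a budget fixed by $d(y^*)$. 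Once $\Gamma_{(\lambda,\phi)}(y)$ varies with history, $\hat{\sigma}$ is no longer a size-of-union of a \emph{fixed} family, and the free submodularity you invoke does not apply. What you flagged as a ``secondary item to pin down'' is therefore the central obstacle, not a bookkeeping detail.

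The paper does not attempt to make the footprint history-independent. Instead it builds an explicit coupling $\rho$ between realizations consistent with $\psi$ and with $\psi'$ (forcing unobserved edge states to agree), shows $B\subseteq B'$ on the support, and then---crucially---invokes Lemma~\ref{le2} (more reachable users with the same budget $\Rightarrow$ larger $\max Q$) to conclude $D'\setminus B'\subseteq D\setminus B$, i.e., the marginal set under the sparser process dominates even though the stage-2 allocation differs between the two histories. Summing over the coupling yields $\Delta(y|\psi)\ge\Delta(y|\psi')$. This is precisely the step your sketch cannot replace with a coverage identity: you would need either to prove that the (optimal or $CD$-computed) allocation is insensitive to the residual graph, which is false in general, or to abandon the fixed-footprint picture and argue directly that ``whatever stage~2 achieves under $\psi'$ is feasible and no better under $\psi$''---which is exactly the Lemma~\ref{le2}/coupling route the paper takes.
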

  \subsubsection{Relaxation Analysis}
  Following a similar idea of analyzing the adaptive algorithm in \cite{mobihoc2017}, we relax the seeding process in stage 1 by assuming that the minimum discount rate $d_{\min}(u)\in D$ desired by each user $u$ is pre-known. In the seeding process, if $u$ is probed with a discount no smaller than $d_{\min}(u)$, then $u$ will accept it and become a seed, and vice versa. To maximize the benefit-to-cost ratio, each user will be probed from small discounts. By the definition of $d_{\min}(u)$, discounts smaller than $d_{\min}(u)$ will be rejected. When the discount becomes $d_{\min}(u)$, the user will accept it and become the seed, and polices with higher discount are abandoned. Thus, it becomes meaningless to probe user $u$ with discounts higher or lower than $d_{\min}(u)$. And, the action space is reduced to $Y^{\text{relaxed}}=\{(u, d_{\min}(u)), u\in X\}$. We denote the adaptive greedy algorithm under the relaxed setting as $\pi_{\text{relaxed}}^{\text{greedy}}$. In each round, we select an action $y^*$ from $Y^{\text{relaxed}}$ that maximizes the benefit-to-cost ratio $\frac{\Delta(y|\psi_p)}{d_{\min}(v(y))}$. Once an action in $Y^{\text{relaxed}}$ is adopted, the user will become a seed definitely, since $d_{\min}(u)$ is the desired discount of user $u$. The subsequent seeding strategy in newly reachable users remains unchanged.

With the relaxation analysis, we find that the seed sets of the relaxed setting and the original setting are the same in stage 1 by Lemma \ref{relax}, which is similar to \cite{mobihoc2017}. This lemma is proven by mathematical induction in Section 2.6 of the supplemental file.
\begin{lem}\label{relax}
Under any realization $(\lambda, \phi)$, $\pi_{\text{relaxed}}^{\text{greedy}}$ yields the same seed set in stage 1 as $\pi^{\text{greedy}}$.
\end{lem}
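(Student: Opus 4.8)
The plan is to fix an arbitrary realization $(\lambda,\phi)$ together with the stage-1 seeding thresholds (equivalently, the cutoffs $d_{\min}(u)$ the relaxation assumes known): since $p_u(\cdot)$ is nondecreasing and $p_u(1)=1$ with $1\in D$, each stage-1 user $u$ has a well-defined cutoff $d_{\min}(u)=\min\{d\in D: u\text{ accepts }d\}$, and $u$ accepts $d$ iff $d\ge d_{\min}(u)$. I would then prove by induction on the rounds that $\pi^{\text{greedy}}$ and $\pi^{\text{greedy}}_{\text{relaxed}}$ maintain a common ``state'' in stage 1 --- the same ordered list of accepted users, the same observed partial realization, and hence the same remaining budget $B_1$ --- from which equality of the two stage-1 seed sets (the agent sets) is immediate. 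The base case (nothing seeded, full $B_1$) is trivial.

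Next I would isolate the structural fact that drives the induction: for any partial process $\psi_p$ reachable by $\pi^{\text{greedy}}$ and any stage-1 user $u$ not yet seeded, (i) if $d<d_{\min}(u)$ then probing $u$ with $d$ is rejected and changes nothing, so $\Delta((u,d)|\psi_p)=0$; (ii) if $d\ge d_{\min}(u)$ then $u$ accepts and becomes a seed regardless of the precise value of $d$, and --- because the budget drawn from $B_2$ for $u$'s newly reached neighbors depends only on the intrinsic property of $u$ --- the marginal gain $\Delta((u,d)|\psi_p)$ equals one common value $\Delta_u(\psi_p)$ for every $d\ge d_{\min}(u)$. Consequently $d\mapsto\frac{\Delta((u,d)|\psi_p)}{d}$ is either identically zero (when $\Delta_u(\psi_p)=0$) or is zero below $d_{\min}(u)$ and strictly decreasing on $\{d\in D: d\ge d_{\min}(u)\}$; in the non-trivial case its unique maximizer over $D$ is $d_{\min}(u)$.

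For the inductive step, assume a common state with remaining budget $B_1$ and common induced process $\psi_p$. If some feasible action ($d(y)\le B_1$, $v(y)$ not yet seeded) has positive benefit-to-cost ratio, then by the structural fact $\argmax_{y\in Y}\frac{\Delta(y|\psi_p)}{d(y)}$ is attained at an action of the form $(u^*,d_{\min}(u^*))$, with value $\max\frac{\Delta((u,d_{\min}(u))|\psi_p)}{d_{\min}(u)}$ over not-yet-seeded $u$ with $d_{\min}(u)\le B_1$ --- which is exactly the optimization $\pi^{\text{greedy}}_{\text{relaxed}}$ performs over $Y^{\text{relaxed}}$; with a shared tie-breaking rule (preferring actions of the form $(u,d_{\min}(u))$) both runs pick the same $(u^*,d_{\min}(u^*))$. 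Since $d_{\min}(u^*)\le B_1$, user $u^*$ accepts in both, the same neighbors $R$ are reached and given the same stage-2 budget, $d_{\min}(u^*)$ is deducted from $B_1$ in both, and the diffusion from the newly seeded users unfolds identically under the fixed $\phi$, so the common state persists. In the remaining case --- every feasible action has ratio zero, so no seeding can enlarge the influenced set --- the same tie-breaking rule makes both keep selecting the same zero-gain actions $(u,d_{\min}(u))$ (any probe with $d<d_{\min}(u)$ is rejected, leaves $B_1$ untouched by ``refuse means reuse'', and is discarded), so the accepted sets stay equal until both halt. This closes the induction and yields the lemma.

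The hard part will be part (ii) of the structural fact --- that $\Delta((u,d)|\psi_p)$ is genuinely independent of the probed discount once $d\ge d_{\min}(u)$. This must be pinned on the modeling convention that the stage-2 budget handed to a seeded user's newly reached neighbors is governed by that user's intrinsic characteristics rather than by the probed discount (otherwise a larger probed discount could buy extra stage-2 budget and shift the ratio's maximizer away from $d_{\min}(u)$); and, more tediously, the induction hypothesis must be stated tightly enough to survive rejected probes and the zero-gain terminal phase while the two action spaces $Y$ and $Y^{\text{relaxed}}$ evolve differently.
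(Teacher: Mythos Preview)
Your structural fact (i) misreads the paper's definition of $\Delta(y|\psi_p)$. In the paper, $\Delta(y|\psi_p)$ is ``the expected number of users influenced \ldots\ \emph{if $v(y)$ becomes the seed}''; it is computed \emph{assuming} acceptance, and the algorithm $\pi^{\text{greedy}}$ uses this quantity when choosing its action. Consequently $\Delta((u,d)|\psi_p)$ equals the same value $\Delta_u(\psi_p)$ for \emph{every} $d\in D$, not only for $d\ge d_{\min}(u)$; in particular it is \emph{not} zero when $d<d_{\min}(u)$. Hence the ratio $d\mapsto \Delta_u(\psi_p)/d$ is strictly decreasing on all of $D$, and the $\argmax$ over feasible actions is attained at the smallest available discount, not at $d_{\min}(u^*)$. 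Your inductive step therefore fails: from a common state, $\pi^{\text{greedy}}$ typically selects an action $(u',\min D)$ that is \emph{rejected}, rather than directly selecting the accepted action $(u^*,d_{\min}(u^*))$ that $\pi^{\text{greedy}}_{\text{relaxed}}$ picks.

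The paper's argument (and the natural repair of yours) is different: since $\Delta_u$ is constant in $d$, the greedy ratio orders the actions about a fixed $u$ by increasing $d$, so $\pi^{\text{greedy}}$ always probes $u$ with smaller discounts first, each of which is rejected and removed at no budget cost, and the first action about $u$ that is \emph{accepted} is $(u,d_{\min}(u))$. One then argues that among all actions with ratio strictly greater than $R^*:=\max_v \Delta_v/d_{\min}(v)$, every one has $d<d_{\min}(v)$ and is rejected; the first accepted action is therefore the ratio-maximizer over $Y^{\text{relaxed}}$, which is exactly what $\pi^{\text{greedy}}_{\text{relaxed}}$ selects. So the induction must compare \emph{accepted} actions (equivalently, seeds), not raw rounds, and must absorb an arbitrary number of rejected probes between consecutive seeds of $\pi^{\text{greedy}}$ --- not only in a terminal zero-gain phase as you suggest.
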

  Let $\pi^{\text{OPT}}$ denote the optimal policy under the discrete-continuous setting. With the adaptive submodularity of $\hat{\sigma}(\cdot)$ and Lemma \ref{relax}, we obtain the performance guarantee of the greedy algorithm $\pi^{\text{greedy}}$ in Theorem \ref{th2}.
\begin{thm}\label{th2}
If global optimality is obtained in stage 2 in each round, then the adaptive greedy policy $\pi^{\text{greedy}}$ obtains at least $(1-e^{-\frac{B_1-1}{B_1}})$ of the value of the optimal policy $\pi^{\text{OPT}}$,\\
$\hat{\sigma}(\pi^{\text{greedy}})\geq (1-e^{-\frac{B_1-1}{B_1}})\hat{\sigma}(\pi^{\text{OPT}})$.
\end{thm}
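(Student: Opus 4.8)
The plan is to reduce, via Lemma \ref{relax}, from $\pi^{\text{greedy}}$ to its relaxed counterpart $\pi^{\text{greedy}}_{\text{relaxed}}$, whose stage-1 behaviour has a clean knapsack structure: each adopted action $(u,d_{\min}(u))$ deterministically turns $u$ into a seed at cost $d_{\min}(u)\in[0,1]$. Since Lemma \ref{relax} guarantees identical stage-1 seed sets under every realization, $\hat{\sigma}(\pi^{\text{greedy}})=\hat{\sigma}(\pi^{\text{greedy}}_{\text{relaxed}})$, so it suffices to bound the latter against $\hat{\sigma}(\pi^{\text{OPT}})$. Throughout I will use the adaptive monotonicity and adaptive submodularity of $\hat{\sigma}(\cdot)$ established in Lemma \ref{adasub1}, together with the hypothesis that stage 2 is solved to global optimality in each round, so that the marginal-benefit quantity $\Delta(y|\psi)$ that drives the greedy rule coincides with the increment the algorithm actually realizes.

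The core is a per-round ``density'' inequality in the spirit of the adaptive-submodular analysis of \cite{adaptive_submodularity_golovin} and \cite{mobihoc2017}. Let $\psi_p$ be the partial stage-1 process produced after some rounds of $\pi^{\text{greedy}}_{\text{relaxed}}$ and $a:=\mathbb{E}[\hat{\sigma}(\psi_p)]$ the expected influence so far. Running $\pi^{\text{OPT}}$ from the state reached by $\psi_p$ and invoking the standard concatenation property of adaptive monotone submodular functions gives $\hat{\sigma}(\psi_p @ \pi^{\text{OPT}})\ge\hat{\sigma}(\pi^{\text{OPT}})$; telescoping the left side over the actions $\pi^{\text{OPT}}$ adopts along the running realization and bounding each telescoped increment by $\Delta(y|\psi_p)$ through adaptive submodularity, then using that $\pi^{\text{OPT}}$ spends stage-1 cost at most $B_1$ on every realization, yields (conditioned on each partial realization) $\max_{y}\Delta(y|\psi_p)/d(y)\ge(\hat{\sigma}(\pi^{\text{OPT}})-\hat{\sigma}(\psi_p))/B_1$. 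Hence the expected gain of the next greedy pick $g$, of cost $c=d(g)$, obeys $\mathbb{E}[\hat{\sigma}(\psi_p\cup\{g\})]-a\ge\frac{c}{B_1}(\hat{\sigma}(\pi^{\text{OPT}})-a)$, i.e. $\hat{\sigma}(\pi^{\text{OPT}})-\mathbb{E}[\hat{\sigma}(\psi_p\cup\{g\})]\le(1-\tfrac{c}{B_1})(\hat{\sigma}(\pi^{\text{OPT}})-a)$.

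Iterating this recursion over the greedy picks $g_1,\dots,g_k$ (costs $c_1,\dots,c_k$, $a_0=0$) gives $\hat{\sigma}(\pi^{\text{OPT}})-\hat{\sigma}(\pi^{\text{greedy}}_{\text{relaxed}})\le\hat{\sigma}(\pi^{\text{OPT}})\prod_{i=1}^{k}(1-c_i/B_1)\le\hat{\sigma}(\pi^{\text{OPT}})\exp(-\sum_{i=1}^{k}c_i/B_1)$. It then remains to lower-bound $\sum_i c_i$: when $\pi^{\text{greedy}}_{\text{relaxed}}$ halts, every still-available action has cost $d_{\min}(u)\le\max D=1$ yet exceeds the residual budget $B_1-\sum_i c_i$, so $\sum_i c_i>B_1-1$ (the degenerate case where all of $X$ is already seeded within budget is handled separately and gives ratio $1$ directly by adaptive monotonicity). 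Substituting, $\hat{\sigma}(\pi^{\text{greedy}})=\hat{\sigma}(\pi^{\text{greedy}}_{\text{relaxed}})\ge(1-e^{-(B_1-1)/B_1})\hat{\sigma}(\pi^{\text{OPT}})$, as claimed.

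The main obstacle I expect is making the per-round density inequality fully rigorous: one has to build the coupled realization distribution that aligns the randomness already observed by $\psi_p$ with that of the concatenated optimal policy, push the expectations through correctly, and in particular cope with the mismatch between $\pi^{\text{OPT}}$'s per-realization stage-1 cost bound and the residual budget available to greedy at an intermediate round — precisely the place where adaptive submodularity (rather than plain submodularity) is indispensable, and where the assumption of exact stage-2 optimality is what keeps $\Delta(y|\psi)$ consistent so that the telescoping and the recursion genuinely close.
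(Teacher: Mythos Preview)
Your proposal is correct and mirrors the paper's proof: reduce to $\pi^{\text{greedy}}_{\text{relaxed}}$ via Lemma~\ref{relax}, use adaptive submodularity together with the observation that the relaxed greedy spends at least $B_1-1$ in stage~1, and combine. The only difference is that the paper packages your density--recursion argument into a direct citation of Theorem~A.10 in \cite{adaptive_submodularity_golovin} and inserts the relaxed optimal policy $\pi^{\text{OPT}}_{\text{relaxed}}$ as an intermediary (first bounding $\hat{\sigma}(\pi^{\text{greedy}}_{\text{relaxed}})$ against $\hat{\sigma}(\pi^{\text{OPT}}_{\text{relaxed}})$ entirely within the relaxed action space, then invoking $\hat{\sigma}(\pi^{\text{OPT}}_{\text{relaxed}})\ge\hat{\sigma}(\pi^{\text{OPT}})$), which cleanly sidesteps the cross-setting concatenation you correctly flag as the main obstacle.
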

\begin{proof}
Please refer to Section 2.7 in the supplemental file.

\end{proof}

  \subsection{Discrete-Discrete Setting}
    In the previous setting, the coordinate descent algorithm is applied to decide the continuous allocation in stage 2, which needs numerous iterative optimizations. In this subsection, we introduce a discrete solution in stage 2.

    \textbf{Discrete-Discrete Setting:} In each stage, we select actions from an action space defined by the Cartesian product of users and discount rates $D$. Thus, the discounts in both stages are discrete.
\begin{defi}\label{def6}
\textbf{(Submodularity)} For a real-valued function $h(\cdot)$ defined on subsets of a finite ground set $G$. If for all $A\subseteq B\subseteq G$, and for all $x\in G\setminus B$, we have

\begin{center}
$h(A \cup\{x\})-h(A)\geq h(B \cup\{x\})-h(B)$.
\end{center}
Then, we say $h(\cdot)$ is submodular. Furthermore, if $h(A)\leq h(B)$ holds for all $A\subseteq B\subseteq G$, $h(\cdot)$ is said to be monotone.
\end{defi}

    The seeding process in stage 1 is still sequential. In each round, we select an action $y^*$ from $Y:=X\times D$. If $v(y^*)$ refuses the discount $d(y^*)$, we delete $y^*$ from $Y$ and move to the next round. If $v(y^*)$ accepts it and becomes a seed, some users become newly reachable, denoted as $R$. An action space $Z=R\times D$ is defined in stage 2. We will select a subset of actions $L\subseteq Z$ to seed users in $R$ under some budgets drawn from $B_2$. The budget only depends on the intrinsic property of $v(y^*)$, just like the discrete-continuous setting. Different from selecting actions sequentially in stage 1, we decide actions in stage 2 all at once without observing the seeding and diffusion results of each action. Because in realistic scenario, we are not likely to have so much time to observe the diffusion of each action in stage 2.


    We next show how to select the set of actions in stage 2. Let $d(u|L)$ denote the discount allocated to user $u$ under actions $L$. Analogous to the definition in Equations \eqref{eq1} and \eqref{eq3}, we define the probability that a subset of users $T\subseteq R$ accept discounts as
\begin{equation}\label{eq14}
P_{r}(T;L,R)=\prod_{u\in R}p_{u}(d(u|L))\prod_{v\in R \setminus S}(1-p_{v}(d(v|L))).
\end{equation}
    Given the partial seeding process $\psi_p$ in stage 1 and the set of influenced users $\sigma(\psi_p)$, we denote the expected number of users newly influenced by $T$ as $I_{G(V\setminus \sigma(\psi_p))}(T)$. Then, the number of users influenced by actions $L$ is
\begin{equation}\label{eq15}
Q(L; R)=\sum_{T \subseteq R}P_{r}(T;L,R)I_{G(V\setminus \sigma(\psi_p))}(T).
\end{equation}

   We attempt to find a set of actions that maximize the influence spread. However, we find that maximizing $Q(L; R)$ is NP-hard by Lemma \ref{NP2}.
\begin{lem}\label{NP2}
Finding the optimal set of actions in stage 2 is NP-hard.
\end{lem}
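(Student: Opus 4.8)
The plan is to prove NP-hardness by a polynomial-time reduction from the classical cardinality-constrained influence maximization problem, which is NP-hard already for the independent cascade model~\cite{seminal_IM} (via a reduction from Set Cover). The key idea is to choose the discount setting so that the probabilistic stage-2 objective $Q(L;R)$ in Eq.~\eqref{eq15} degenerates exactly to the deterministic influence function $I(\cdot)$, after which optimal action selection becomes optimal seed selection.

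First I would specialize the instance: take $D=\{1\}$, which is admissible since $1\in[0,1]$ and $\max\{d_i\in D\}=1$. Then the stage-2 action space is $Z=R\times\{1\}$, every action has unit cost, and by property $p_u(1)=1$ probing any user with discount $1$ turns it into a seed with certainty. Hence for any $L\subseteq Z$ the seed set $T=\{v(y):y\in L\}$ forms with probability $1$ and every other $T'$ with probability $0$, so $P_r(T;L,R)$ in Eq.~\eqref{eq14} collapses and $Q(L;R)=I_{G(V\setminus\sigma(\psi_p))}(\{v(y):y\in L\})$. Next, given an arbitrary cardinality-constrained IM instance on a graph $H$ with seed budget $k$, embed $H$ as the induced subgraph $G(V\setminus\sigma(\psi_p))$ — for instance let $\psi_p$ consist of one accepted action seeding an isolated dummy vertex that influences no one, so $\sigma(\psi_p)$ is that dummy vertex and $G(V\setminus\sigma(\psi_p))=H$ — set $R=V(H)$, and let the budget drawn from $B_2$ equal $k$ (allowed, since that budget may depend arbitrarily on the intrinsic property of $v(y^*)$). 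Under the constraint $\sum_{y\in L}d(y)\le k$, maximizing $Q(L;R)$ is then identical to maximizing $I_H$ over seed sets of size at most $k$; the transformation is computable in polynomial time, and an optimal $L$ corresponds bijectively to an optimal seed set, so the decision version of the stage-2 problem is NP-hard.

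The main obstacle is not the reduction itself but justifying the degeneration rigorously: one must verify that restricting $D$ to $\{1\}$ and fixing the stage-2 budget to the integer $k$ are legitimate special cases of the problem as formulated, and that the probabilistic objective genuinely reduces to the deterministic influence function — which follows from $p_u(1)=1$ together with the monotonicity of $p_u(\cdot)$. A secondary point worth stating is that this claim concerns only the static optimization in Eq.~\eqref{eq15}: no observation of seeding or diffusion realizations is involved, so the adaptive machinery plays no role and the hardness is inherited purely from the underlying influence-maximization problem. (If one prefers a reduction that mirrors the budget structure more directly, the budgeted maximum coverage problem can be used in place of cardinality-constrained IM with essentially the same argument.)
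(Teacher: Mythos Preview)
Your proposal is correct and takes essentially the same approach as the paper: both reduce from the classical cardinality-constrained influence maximization problem of \cite{seminal_IM} by specializing to the single discount rate $D=\{1\}$, so that the action space $Z=R\times\{1\}$ and selecting actions becomes equivalent to selecting seeds. Your write-up is in fact more careful than the paper's own proof --- you explicitly justify via $p_u(1)=1$ why $Q(L;R)$ collapses to $I_{G(V\setminus\sigma(\psi_p))}(\{v(y):y\in L\})$ and explain how to set up $\psi_p$ and the stage-2 budget, whereas the paper simply asserts that classical IM is ``only a special case'' and leaves these details implicit.
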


   However, we can prove the monotonicity and submodularity of $Q(L; R)$ by Lemma \ref{le8}. Since $Q(L; R)$ is monotone and submodular, we are motivated to design the approximate algorithm $GS(R, y^*, P_1, B_1, Y)$ in Algorithm \ref{alg5} to determine the actions in stage 2.

\begin{lem}\label{le8}
$Q(L; R)$ is monotone and submodular w.r.t. $L$.
\end{lem}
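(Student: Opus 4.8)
The plan is to reduce the claim to the classical fact that, in any fixed directed graph, the number of nodes reachable from a seed set is monotone and submodular in that seed set, and then lift this through the two layers of randomness---discount acceptance and edge propagation---by a realization-wise coupling. First we would invoke the live-edge view of the independent cascade model: for a fixed diffusion realization $\phi$, the quantity $\hat{\sigma}_{\phi}(T):=|\{v\in V\setminus\sigma(\psi_p): v \text{ is reachable from } T \text{ along live edges of } \phi\}|$ is a deterministic, monotone and submodular function of $T\subseteq R$, and $I_{G(V\setminus\sigma(\psi_p))}(T)=E_{\phi}[\hat{\sigma}_{\phi}(T)]$. In parallel we would use the threshold view of discount acceptance: each $u\in R$ independently draws $T_u\sim U(0,1)$, so that under an action set $L$ the realized seed set is $S_L(\mathbf{t})=\{u\in R: p_u(d(u|L))\ge T_u\}$ and $Q(L;R)=E_{\mathbf{t},\phi}[\hat{\sigma}_{\phi}(S_L(\mathbf{t}))]$, where we adopt the convention $d(u|L)=\max\{d(y): y\in L,\ v(y)=u\}$ (and $d(u|L)=0$ if no action in $L$ seeds $u$).

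Monotonicity is then immediate: if $L\subseteq L'$, then $d(u|L)\le d(u|L')$ for every $u$, and since each $p_u(\cdot)$ is nondecreasing we get $S_L(\mathbf{t})\subseteq S_{L'}(\mathbf{t})$ for every fixed $\mathbf{t}$; monotonicity of $\hat{\sigma}_{\phi}$ together with taking expectations gives $Q(L;R)\le Q(L';R)$.

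For submodularity, fix $A\subseteq B\subseteq Z$, an action $x=(u,d)\in Z\setminus B$, and a realization pair $(\mathbf{t},\phi)$. Adding $x$ changes only user $u$'s discount, from $d(u|A)$ to $\max(d(u|A),d)$ (resp.\ from $d(u|B)$ to $\max(d(u|B),d)$), so $S_{A\cup\{x\}}(\mathbf{t})$ is either $S_A(\mathbf{t})$ or $S_A(\mathbf{t})\cup\{u\}$, the latter occurring exactly when $p_u(d(u|A))<T_u\le p_u(d)$; the analogue holds for $B$ with $d(u|A)$ replaced by $d(u|B)$. Because $A\subseteq B$ forces $p_u(d(u|A))\le p_u(d(u|B))$, whenever $x$ adds $u$ to $S_B(\mathbf{t})$ it also adds $u$ to $S_A(\mathbf{t})$. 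A three-way case split then finishes it pointwise: if $x$ adds $u$ to neither seed set, both marginals are $0$; if $x$ adds $u$ to $S_A(\mathbf{t})$ but not $S_B(\mathbf{t})$, the left marginal is $\hat{\sigma}_{\phi}(S_A(\mathbf{t})\cup\{u\})-\hat{\sigma}_{\phi}(S_A(\mathbf{t}))\ge 0$ while the right is $0$; and if $x$ adds $u$ to both, then since $S_A(\mathbf{t})\subseteq S_B(\mathbf{t})$ submodularity of $\hat{\sigma}_{\phi}$ gives $\hat{\sigma}_{\phi}(S_A(\mathbf{t})\cup\{u\})-\hat{\sigma}_{\phi}(S_A(\mathbf{t}))\ge\hat{\sigma}_{\phi}(S_B(\mathbf{t})\cup\{u\})-\hat{\sigma}_{\phi}(S_B(\mathbf{t}))$. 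Averaging over $\mathbf{t}$ and $\phi$ under their independent distributions, linearity of expectation yields $Q(A\cup\{x\};R)-Q(A;R)\ge Q(B\cup\{x\};R)-Q(B;R)$, as required. One may alternatively phrase this more abstractly, noting that $Q(L;R)$ here has the same structure as $Q(C_2;N(S))$ in Eq.~\eqref{eq3}, specialized to discrete discounts and the residual graph $G(V\setminus\sigma(\psi_p))$.

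The main obstacle is not any hard estimate but the bookkeeping: one must keep track that ``adding an action'' is a no-op whenever the targeted user already holds a weakly larger discount, and keep the coupling genuinely pointwise in $(\mathbf{t},\phi)$ so that the case split can be run realization-by-realization before integrating. Two smaller points to pin down are the convention for $d(u|L)$ when $L$ carries several actions on the same user, and the (easy) observation that restricting the reachability count to the residual graph $G(V\setminus\sigma(\psi_p))$ affects neither monotonicity nor submodularity, since it is merely the reachability function of a fixed subgraph.
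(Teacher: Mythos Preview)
Your proof is correct and follows essentially the same route as the paper: condition on a fixed realization pair $(\lambda,\phi)$, observe that the induced influence count is monotone and submodular in the seed set, and then average. The paper's argument is terser and leaves the coupling implicit in the realization $(\lambda,\phi)$, whereas you spell out the threshold view for acceptance and the live-edge view for diffusion, which makes the pointwise case split cleaner; you also pin down the $d(u|L)=\max$ convention for repeated actions on the same user, a detail the paper glosses over but which is needed for the ``adding $x$ is a no-op'' step to go through.
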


   The selection of action $y^*$ in stage 1 is similar to the discrete-continuous setting. We select the one that maximizes the benefit-to-cost ratio $\frac{\Delta(y|\psi_p)}{d(y)}$. Note that, when calculating $\Delta(y|\psi_p)$, discounts of users in stage 2 are determined by Algorithm \ref{alg5} rather than the coordinate descent algorithm.

   We continue to examine the property of $\hat{\sigma}(\cdot |(\lambda, \phi))$ in the discrete-discrete setting and find that it is still adaptive submodular by Lemma \ref{adasub2}.
\begin{lem}\label{adasub2}
In the Discrete-Discrete Setting, $\hat{\sigma}(\cdot |(\lambda, \phi))$ is still adaptive submodular, under any realization ($\lambda$, $\phi$).
\end{lem}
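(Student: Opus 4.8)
The plan is to re-run the coupling (``probabilistic method'') argument used for Lemma~\ref{adasub1}, changing only the stage-2 ingredient. By Definition~\ref{def5} it suffices to show $\Delta(y|\psi)\geq\Delta(y|\psi')$ for every $\psi\subseteq\psi'$ and every $y\in Y\setminus dom(\psi)$. Expanding both sides as expectations over $p((\lambda,\phi))$ and writing $\hat{\sigma}=|\sigma|$, I would construct a coupling $\rho$ of the realizations consistent with $\psi$ with those consistent with $\psi'$ so that (a) every edge unobserved by both processes gets the same live/dead state, and (b) every node not yet seeded by either process stays a non-seed; this is verbatim the distribution $\hat{\rho}$ from the proof of Lemma~\ref{adasub1}. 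Summing the target inequality against $\rho$ and using the marginalization identities $p((\lambda,\phi)\,|\,\psi)=\sum_{(\lambda',\phi')}\rho$ and $p((\lambda',\phi')\,|\,\psi')=\sum_{(\lambda,\phi)}\rho$ then reduces everything to a pointwise claim: for each coupled pair in $\mathrm{support}(\rho)$,
\[
\hat{\sigma}(dom(\psi')\cup\{y\},(\lambda',\phi'))-\hat{\sigma}(dom(\psi'),(\lambda',\phi'))\ \leq\ \hat{\sigma}(dom(\psi)\cup\{y\},(\lambda,\phi))-\hat{\sigma}(dom(\psi),(\lambda,\phi)).
\]

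To prove the pointwise claim I would assemble three facts, only the last of which is new relative to Lemma~\ref{adasub1}. First, $\sigma(dom(\psi),(\lambda,\phi))\subseteq\sigma(dom(\psi'),(\lambda',\phi'))$: any node on a live path from a stage-1 seed of $\psi$ is reached by the same path under $\psi'$, since $\psi'$ retains all of $\psi$'s seeds and all edge states $\psi$ observed. Second, the neighbors newly reached by $v(y)$ are nested the other way, $R_\psi:=N(v(y))\setminus N(dom(\psi))\supseteq N(v(y))\setminus N(dom(\psi'))=:R_{\psi'}$, because $dom(\psi)\subseteq dom(\psi')$; and the stage-2 budget released when $y$ is accepted is identical in both runs, since it depends only on the intrinsic property of $v(y)$. (If $v(y)$ rejects the discount --- the same outcome in both coupled runs --- both marginals are $0$ and the claim is trivial.) Third, working in the residual graph $G(V\setminus\sigma(dom(\psi)))$, which is a supergraph of $G(V\setminus\sigma(dom(\psi')))$, with the larger reachable set $R_\psi\supseteq R_{\psi'}$ and the same budget, the stage-2 seed set chosen for $\psi$ influences (under the coupled $\phi$) a superset of the users freshly influenced on the $\psi'$-side; combined with the first fact, this is exactly the pointwise inequality.

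The main obstacle is the third fact, because in the discrete-discrete setting stage~2 is solved by the greedy routine $GS$ over $Z=R\times D$, and greedy selection need not be monotone in its ground set $R$. I would resolve this exactly as Lemma~\ref{adasub1} resolves the analogous step: under the hypothesis that stage~2 attains its optimum (the same hypothesis under which the approximation guarantee is stated, cf.\ Theorem~\ref{th2} in the discrete-continuous case), the $\psi$-side may simply replay the $\psi'$-side's seed set --- it lies in $R_\psi$ and obeys the same budget --- and, because $G(V\setminus\sigma(dom(\psi)))$ contains $G(V\setminus\sigma(dom(\psi')))$ and $\phi$ is coupled, this replay influences at least as many new users; hence the optimal stage-2 value on the $\psi$-side dominates. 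This is the discrete counterpart of Lemma~\ref{le2}. Adaptive monotonicity $\Delta(y|\psi)\geq 0$ is immediate, since adding a seed can only enlarge $\sigma$. Finally, $\hat{\sigma}(\cdot)$ is a non-negative linear combination of the $\hat{\sigma}(\cdot\,|(\lambda,\phi))$, so it inherits adaptive monotonicity and adaptive submodularity; together with the submodularity of $Q(L;R)$ from Lemma~\ref{le8} (which drives the knapsack-greedy analysis of $GS$), this is what upgrades the present structural lemma to a constant-factor approximation for the discrete-discrete algorithm.
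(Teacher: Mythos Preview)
Your proposal is correct and follows essentially the same route as the paper. The paper, too, re-runs the coupling from Lemma~\ref{adasub1}, keeps the proof of $B\subseteq B'$ and the nesting $R_{\psi}\supseteq R_{\psi'}$ verbatim, and then handles the one new ingredient exactly as you do: it lets $L'$ be the stage-2 action set adopted under $\psi'$, observes that $L'$ is feasible on the $\psi$-side (same budget, larger action space $Z=R_{\psi}\times D\supseteq R_{\psi'}\times D$), and hence the $\psi$-side ``allows the possibility to achieve a larger diffusion,'' yielding $D'\setminus B'\subseteq D\setminus B$. Your explicit flag that this replay step only works if $\hat{\sigma}$ is defined via the \emph{optimal} stage-2 selection (rather than whatever $GS$ happens to return) is a useful clarification the paper leaves implicit.
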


 For readability, the proofs of Lemma \ref{NP2}, \ref{le8} and \ref{adasub2} in this subsection are deferred to Sections 2.8, 2.9 and 2.10 of the supplemental file respectively. Based on the above description, we obtain the adaptive greedy algorithm $\pi^{\text{greedy}}_{\text{discrete}}$ in the discrete-discrete setting. Since the only difference from $\pi^{\text{greedy}}$ lies in the allocation in stage 2, we can derive $\pi^{\text{greedy}}_{\text{discrete}}$ by replacing $DCA(R, y^*, P_1, B_1, Y)$ with $GS(R, y^*, P_1, B_1, Y)$. For the sake of the space, we omit the detailed description here. Let $\pi^{\text{OPT}}_{\text{discrete}}$ denote the optimal policy. With the adaptive submodularity in stage 1 and submodularity in stage 2, we obtain the performance guarantee of $\pi^{\text{greedy}}_{\text{discrete}}$ in Theorem \ref{th3}.
\begin{algorithm}[h]
\begin{spacing}{1}
\small
\caption{The Greedy Selection $GS(R, y^*, P_1, B_1, Y)$}\label{alg5}
\begin{algorithmic}[1]
\REQUIRE ~~\!\!\!\!Newly reachable users $R$, action newly accepted $y^*$, actions accepted $P_1$, remaining budget $B_1$, current action space $Y$
\ENSURE ~~\!\!\!\!Actions $P_2$
    \STATE Initialize $P_2\leftarrow \emptyset$
    \STATE $P_1\leftarrow P_1 \cup \{y^*\}$; $B_1\leftarrow B_1-d(y^*)$
    \STATE $Y\leftarrow Y\setminus \{y|v(y)=v(y^*)\}$
    \STATE $S_1\leftarrow \emptyset$, $Z\leftarrow R\times D$, $S_2\leftarrow \argmax_{z\in Z}\{Q(\{z\};R)|z\in Z, d(z)\leq \frac{B_2}{B-B_2}\times d(y^*)\}$
\WHILE{$\sum_{z\in S_1}d(z)\leq \frac{B_2}{B-B_2}\times d(y^*)$}
    \STATE $z^*\leftarrow \argmax_{z\in Z}\frac{Q(S_1\cup {z})-Q(S_1)}{d(z)}$ 
    \IF{$d(z^*)+\sum_{z\in S_1}d(z)\leq \frac{B_2}{B-B_2}d(y^*)$}
    \STATE $S_1\leftarrow S_1\cup \{z^*\}$, $Z\leftarrow Z\setminus \{z^*\}$
    \ENDIF
\ENDWHILE
    \STATE $P_2\leftarrow \argmax_{S\in \{S_1,S_2\}}Q(S;R)$
    \STATE Return $P_2$
\end{algorithmic}
\end{spacing}
\normalsize
\end{algorithm}
\begin{thm}\label{th3}
The greedy policy $\pi^{\text{greedy}}_{\text{discrete}}$ obtains at least $(1-e^{-\frac{B_1-1}{2B_1}(1-\frac{1}{e})})$ of the value of $\pi^{\text{OPT}}_{\text{discrete}}$,
\begin{equation*}
\hat{\sigma}(\pi^{\text{greedy}}_{\text{discrete}})\geq (1-e^{-\frac{B_1-1}{2B_1}(1-\frac{1}{e})})\hat{\sigma}(\pi^{\text{OPT}}_{\text{discrete}}).
\end{equation*}
\end{thm}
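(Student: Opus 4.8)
The plan is to reproduce the argument behind Theorem~\ref{th2}, but bookkeeping the extra loss incurred because stage~2 is now solved only approximately, by the greedy selection $GS(\cdot)$ of Algorithm~\ref{alg5}, rather than to global optimality. The two losses compose multiplicatively inside the exponent: the $\frac{B_1-1}{B_1}$ factor comes, exactly as for Theorem~\ref{th2}, from the adaptive-greedy analysis of the sequential seeding in stage~1 together with the fact that the budget actually consumed in stage~1 is at least $B_1-1$; the additional $\frac{1}{2}(1-\frac{1}{e})$ factor is the performance ratio of $GS(\cdot)$ on the stage-2 subproblem. Concretely, I would argue that running the benefit-to-cost greedy in stage~1 while evaluating each marginal $\Delta(y|\psi_p)$ with $GS(\cdot)$ in stage~2 amounts to an $\alpha$-approximate adaptive greedy policy, with $\alpha=\frac{1}{2}(1-\frac{1}{e})$, for the idealized objective in which stage~2 is solved optimally, and then invoke the $\alpha$-approximate version of the adaptive-submodular greedy guarantee under a budget constraint.

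First I would pin down the stage-2 ratio. By Lemma~\ref{le8}, $Q(L;R)$ is monotone and submodular in $L$, and in stage~2 actions are chosen subject to the knapsack constraint $\sum_{z}d(z)\le\frac{B_2}{B-B_2}d(y^*)$. Algorithm~\ref{alg5} returns the better of the single best feasible action $S_2$ and the cost-effective greedy set $S_1$; this is precisely the classical ``modified greedy'' for budgeted monotone submodular maximization, which guarantees $Q(P_2;R)\ge\frac{1}{2}(1-\frac{1}{e})\max_{L}Q(L;R)$. Since the spread generated by $v(y^*)$ itself does not depend on the stage-2 allocation, and $Q(\cdot;R)$ captures exactly the additional spread produced by the stage-2 seeds, this ratio lifts to the full marginal, $\Delta_{GS}(y|\psi_p)\ge\frac{1}{2}(1-\frac{1}{e})\Delta_{\mathrm{opt}}(y|\psi_p)$, where $\Delta_{\mathrm{opt}}$ denotes the marginal under an optimal stage-2 allocation.

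Next I would transplant the relaxation argument. As in Lemma~\ref{relax} (and similarly to \cite{mobihoc2017}), I introduce the relaxed policy $\pi^{\text{greedy}}_{\text{discrete,relaxed}}$ whose stage-1 action space is $\{(u,d_{\min}(u)):u\in X\}$; by the same induction used in Section~2.6 of the supplemental file it produces the same stage-1 seed set, hence the same influence spread, as $\pi^{\text{greedy}}_{\text{discrete}}$. In the relaxed setting each user is probed at most once and accepts deterministically, $\hat{\sigma}(\cdot)$ is adaptive monotone submodular by Lemma~\ref{adasub2} together with linearity of expectation, and the stage-1 budget consumed is at least $B_1-1$ (in the worst case every remaining user desires discount $1$). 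Combining the $\alpha$-approximate greedy property from the previous paragraph with the approximate-greedy analogue of Theorem~A.10 in \cite{adaptive_submodularity_golovin} --- which yields a $(1-e^{-\alpha(B_1-1)/B_1})$ fraction --- and with $\hat{\sigma}(\pi^{\text{OPT}}_{\text{discrete,relaxed}})\ge\hat{\sigma}(\pi^{\text{OPT}}_{\text{discrete}})$ since relaxation only enlarges the feasible policy set, we get
\[
\hat{\sigma}(\pi^{\text{greedy}}_{\text{discrete}})=\hat{\sigma}(\pi^{\text{greedy}}_{\text{discrete,relaxed}})\ge\Big(1-e^{-\frac{B_1-1}{2B_1}(1-\frac{1}{e})}\Big)\hat{\sigma}(\pi^{\text{OPT}}_{\text{discrete}}),
\]
which is the claimed bound.

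The step I expect to be the main obstacle is the middle one: rigorously propagating the $\frac{1}{2}(1-\frac{1}{e})$ stage-2 ratio into a clean $\alpha$-approximate adaptive greedy statement. One must check that the approximation bounds the \emph{marginal} $\Delta(y|\psi)$ (not just a static objective), that it holds uniformly over all histories $\psi_p$ and all realizations $(\lambda,\phi)$, and that the budget drawn from $B_2$ for stage~2 depends only on the intrinsic property of $v(y^*)$, so that the stage-1 and stage-2 knapsack constraints decouple exactly as in the discrete-continuous setting; only under these conditions does the approximate-greedy theorem of \cite{adaptive_submodularity_golovin} apply. The accounting of which users already belong to $\sigma(\psi_p)$ versus which are newly influenced must also be kept consistent, so that the $Q$-ratio genuinely lifts to a $\Delta$-ratio and $v(y^*)$'s own contribution is not inadvertently scaled down.
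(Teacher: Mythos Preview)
Your proposal is correct and follows essentially the same approach as the paper: invoke Lemma~\ref{le8} together with the budgeted-submodular guarantee of \cite{a_note_on} to obtain the stage-2 ratio $\alpha=\frac{1}{2}(1-\frac{1}{e})$, combine with the adaptive submodularity of $\hat{\sigma}$ (Lemma~\ref{adasub2}) and the $\alpha$-approximate version of Theorem~A.10 in \cite{adaptive_submodularity_golovin}, and carry over the $(B_1-1)/B_1$ budget-utilization factor from the relaxation argument of Theorem~\ref{th2}. The paper's own proof is terser---it does not re-spell-out the relaxation step or the caveats you raise about lifting the $Q$-ratio to a $\Delta$-ratio---but the logical skeleton is identical.
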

\begin{proof}

Please refer to Section 2.11 in the supplemental file.

\end{proof}
The performance guarantee of $\pi^{\text{greedy}}_{\text{discrete}}$ is not appealing enough. The enumeration method can be further applied to improve the approximation ratio. In fact, the size of newly reachable users is relatively small compared with the whole network, hence, the enumeration will not be so computationally costly. The modified greedy algorithm in stage 2 is described in Algorithm \ref{alg6} whose approximation ratio is $(1-e^{-1})$ \cite{a_note_on}. The complete algorithm $\pi^{\text{greedy}}_{\text{enum}}$ can be derived by replacing $DCA(R, y^*, P_1, B_1, Y)$ in Algorithm \ref{alg3} with the $MGS(R, y^*, P_1, B_1, Y)$. Following similar argument in the proof of Theorem \ref{th3}, we obtain the approximation ratio of $\pi^{\text{greedy}}_{\text{enum}}$ in Theorem \ref{th4}.
\begin{thm}\label{th4}
The policy $\pi^{\text{greedy}}_{\text{enum}}$ achieves an approximation ratio of $1-e^{-\frac{B_1-1}{B_1}(1-\frac{1}{e})}$.
\end{thm}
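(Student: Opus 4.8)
The plan is to replay the proof of Theorem \ref{th3} essentially verbatim, changing only the approximation factor contributed by the stage-2 subroutine. First I would observe that $\pi^{\text{greedy}}_{\text{enum}}$ differs from $\pi^{\text{greedy}}_{\text{discrete}}$ only in how the stage-2 allocation is computed — by $MGS$ (Algorithm \ref{alg6}) instead of $GS$ — while the sequential seeding in stage 1 (the action space $Y=X\times D$, the accept/reject dynamics, the ``refuse means reuse'' budget rule, and the benefit-to-cost selection of $y^*$) is unchanged. Hence Lemma \ref{adasub2} still applies: $\hat{\sigma}(\cdot|(\lambda,\phi))$ is adaptive monotone and adaptive submodular under every realization $(\lambda,\phi)$, and therefore $\hat{\sigma}(\cdot)$ is monotone adaptive submodular with respect to $p((\lambda,\phi))$.

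Next I would quote the stage-2 guarantee. By Lemma \ref{le8}, $Q(L;R)$ is monotone and submodular in $L$, and $MGS$ is exactly the enumeration-augmented greedy for maximizing such a function under a knapsack (budget) constraint, which by \cite{a_note_on} returns a solution within a factor $\alpha=1-e^{-1}$ of the optimum. Then, as in the proof of Theorem \ref{th3}, I would invoke the generalized analysis of the adaptive greedy policy (Theorem A.10 of \cite{adaptive_submodularity_golovin}): when in each round the stage-2 optimum is attained only up to a factor $\alpha$, every greedy step of stage 1 still realizes a benefit-to-cost ratio at least $\alpha$ times the ratio the optimal policy could realize at that point, because if each candidate neighbor cluster $i$ has computed value $b_i\ge\alpha o_i$ with true optimum $o_i$ and cost $c_i$, then $\max_i b_i/c_i\ge\alpha\max_i o_i/c_i$. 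Combining this with the fact that at most a discount of $1$ can remain unspent in stage 1 (so the effectively usable budget is $B_1-1$), adaptive submodularity yields $\hat{\sigma}(\pi^{\text{greedy}}_{\text{enum}})\ge(1-e^{-\alpha\frac{B_1-1}{B_1}})\hat{\sigma}(\pi^{\text{OPT}}_{\text{discrete}})$.

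Finally, substituting $\alpha=1-\frac{1}{e}$ gives the claimed ratio $1-e^{-\frac{B_1-1}{B_1}(1-\frac{1}{e})}$, completing the proof.

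I expect the only genuinely delicate point to be the per-round ratio-preservation step under the approximate stage-2 oracle: one must verify that replacing the exact stage-2 optimum by the $MGS$ output degrades the gain attributed to each accepted stage-1 action — a gain that already folds in the downstream diffusion triggered from stage 2 — by no more than the factor $\alpha$, and that the coupling between the random acceptance events of stage 1 and the at-once, unobserved stage-2 choices (the same coupling used to establish Lemma \ref{adasub2}) is not disturbed by this substitution. Everything else is a transcription of the Theorem \ref{th3} argument with $\frac{1}{2}(1-\frac{1}{e})$ replaced by $1-\frac{1}{e}$.
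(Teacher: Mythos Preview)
Your proposal is correct and follows exactly the route the paper takes: it simply replays the proof of Theorem~\ref{th3} with the stage-2 approximation factor upgraded from $\tfrac{1}{2}(1-e^{-1})$ to $1-e^{-1}$, citing \cite{a_note_on} for the modified greedy ratio and Theorem~A.10 of \cite{adaptive_submodularity_golovin} for the resulting $1-e^{-\alpha\frac{B_1-1}{B_1}}$ bound. The paper itself merely writes ``following similar argument in the proof of Theorem~\ref{th3}'' without spelling out details, so your write-up is in fact more explicit than the original.
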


\begin{algorithm}[h]
\begin{spacing}{1}
\small
\caption{Modified Greedy $MGS(R, y^*, P_1, B_1,Y)$} \label{alg6}
\begin{algorithmic}[1]
\REQUIRE ~~\!\!\!\!Newly reachable users $R$, action newly accepted $y^*$, actions accepted $P_1$, remaining budget $B_1$, current action space $Y$
\ENSURE ~~\!\!\!\!Actions $P_2$
    \STATE $P_1\leftarrow P_1 \cup \{y^*\}$; $B_1\leftarrow B_1-d(y^*)$
    \STATE $Y\leftarrow Y\setminus \{y|v(y)=v(y^*)\}$
    \STATE $S_1\leftarrow \emptyset$, $Z\leftarrow R\times D$, $S_2\leftarrow \argmax\{Q(A;R)|A\subseteq Z, |A|<3,\sum_{z\in A}d(z)\leq \frac{B_2}{B-B_2} d(y^*)\}$
    \FOR{$A\subseteq Z$, $|A|=3$, $\sum_{z\in A}d(z)\leq \frac{B_2}{B-B_2}d(y^*)$}
        \STATE $Z^\prime\leftarrow Z\setminus A$
        \WHILE{$Z^\prime \neq \emptyset$}
            \STATE $z^*\leftarrow \argmax_{z\in Z^\prime}\frac{Q(A\cup {z};R)-Q(A;R)}{d(z)}$ 
            \IF{$d(z^*)+\sum_{z\in A}d(z)\leq \frac{B_2}{B-B_2}d(y^*)$}
                \STATE $A\leftarrow A\cup \{z^*\}$
            \ENDIF
            \STATE $Z^\prime\leftarrow Z^\prime \setminus z^*$
        \ENDWHILE
        \IF{$Q(A)>Q(S_1)$}
          \STATE $S_1\leftarrow A$
        \ENDIF
    \ENDFOR
    \STATE $P_2\leftarrow \argmax_{S\in \{S_1,S_2\}}Q(S;R)$
\STATE Return $P_2$
\end{algorithmic}
\end{spacing}
\normalsize
\end{algorithm}
\section{Experiments}\label{experiments}

In this section, we examine the performance of the proposed algorithms on four real-world datasets. The purpose lies in three parts. First, we compare the expected influence spread of our two-stage algorithms with other algorithms to show the advantage of our proposed algorithms. Second, we examine the scalability of our algorithms with respect to the total budget. Third, we test the sensitivity to different settings of seed probability function. Fourth, the impact of the friendship paradox is evaluated. All algorithms were implemented in C\# and simulated on a Linux x64 server (Intel Xeon E5-2650 v2 @2.6Ghz, 128GB RAM).

  \subsection{Experimental setup}\label{Experimental_setup}
    \textbf{Dataset Description.} We test our algorithms on four networks derived from SNAP \cite{snap}. The parameters of the four datasets are presented in Table \ref{dataset}. Undirected networks are converted to directed networks, which means that every undirected edge $(u,v)$ is replaced by two directed edges $(u,v)$ and $(v,u)$, and the number of edges is doubled. In each network, we randomly selected 100 nodes as the initially accessible users $X$.
\begin{table}[h]
\caption{Datasets}\label{dataset}
\vspace{-3mm}
\begin{center}
\renewcommand\arraystretch{1.2}
\footnotesize
\begin{tabular}{|p{2.2cm}<{\centering}|p{1.2cm}<{\centering}|p{1.3cm}<{\centering}|p{1.25cm}<{\centering}|p{0.7cm}<{\centering}|}
\hline
Datasets & Nodes & Edges & Type &$\theta$ \\
\hline
wiki-Vote & 7,115 & 103,689 & Directed& 0.25M\\
\hline
ca-CondMat & 23,133 & 186,936 & Directed& 2M\\
\hline
com-DBLP & 317,080 & 1,049,866 & Undirected& 20M\\
\hline
soc-LiveJournal1 & 4,847,571 & 68,993,773& Directed & 40M\\
\hline
\end{tabular}
\end{center}
\vspace{-3mm}
\end{table}
\normalsize

    \textbf{Propagation Probability.} The diffusion model adopted is the independent cascade model, which is widely employed in the literature of influence maximization \cite{a_martingale_approach, continuous_IM, time_complexity_practical_efficiency,scalable_IM_in_large_scale}. Each edge $(u,v)$ is associated with a propagation probability $p_{uv}$, set to be $\frac{\alpha}{\text{in-degree of}\, v}$, where $\alpha\in \{0.6, 0.8, 1\}$. This setting is quite common in existing works \cite{seminal_IM, mobihoc2017, continuous_IM, scalable_IM_in_large_scale}.

    \textbf{Seed Probability Function.} Recall that whether a user $u$ accepts the discount $c_u$ is captured by the seed probability function $p_u(c_u)$, which means the probability that $u$ accepts the discount $c_u$ and becomes a seed. User's behavior is affected by various factors, such as time and demand. The best way to estimate the probability function may be learning from data, which is out of the scope of our research. Thus, we apply synthesized seed probability functions, which satisfy the four properties mentioned in Section \ref{model}. For each network, we randomly select 5\% nodes and set $p_u(c_u)$ with $p_u(c_u)=c_u^2$, 10\% nodes with $p_u(c_u)=c_u$ and 85\% with $p_u(c_u)=2c_u-c_u^2$.

    \textbf{Discount Rate.} The action space is defined as the Cartesian product of users and discount rates $D$. In our experiment, the discount rate $D$ is set to be an arithmetic progression from 10\% to 100\% with common difference 10\%. That is, 10 candidate discounts are considered. 

    \textbf{Implementation.} Influence estimation is frequently demanded in the algorithms to determine discount allocation. To obtain an unbiased estimation, we adopt the polling based technique proposed in \cite{a_martingale_approach} \cite{time_complexity_practical_efficiency}, where $\theta$ (given in Table \ref{dataset}) reverse reachable sets are generated. For detailed description, please refer to Section 3 of the supplemental file. All the reported influence spreads are estimated by running 20K times Monte Carlo simulations.

  \subsection{Algorithms Evaluated}

  To validate the performance of our four algorithms, we include six more algorithms for comparison. All the algorithms are tested under budgets $B\in \{10, 20, 30, 40, 50\}$ with $B_1:B_2=1:4$. For one-stage algorithms, the budget $B$ is all spent in $X$.

  \begin{itemize}[leftmargin=*]
    \item \textbf{Non-adaptive Algorithms}
  \end{itemize}

    \textbf{Random Friend (RF):} We introduce RF as a basic two-stage algorithm. We uniformly and randomly select $B_1$ users in $X$ as agents $S$. Then, $B_2$ users are selected from $N(S)$ in a similar way.

    \textbf{Discrete Influence Maximization (IM):} First applied by Kempe \emph{et al.} \cite{seminal_IM}, IM is a classic algorithm  with performance guarantee $1-\frac{1}{e}$. Users in $X$ are greedily selected in a manner that, given the first $k$ seeds, the user $u$, which maximizes the expected marginal benefit $I(T\cup \{u\})-I(T)$, is selected as the $(k+1)$-th seed.

    \textbf{Coordinate Descent (CD):} This is a one-stage algorithm carried out in initially accessible users $X$. Detailed description could be found in Algorithm \ref{alg1}. As for the initial allocation, we first rank users in $X$ with respect to the degree in a non-increasing order. Then, the budget $B$ is uniformly allocated to the first $1.5B$ users. The number of iterations is set to be 50, enough for the refinement in 100 nodes.

    \textbf{Two-stage CD (2CD):} This is the two-stage algorithm described in Algorithm \ref{alg2}. The initial allocation in each stage is determined in the same way as CD. The number of iterations is 10 in both stages.

  \begin{itemize}[leftmargin=*]
    \item \textbf{Adaptive Algorithms}
  \end{itemize}

   \textbf{Adaptive Selection (Ada):} As the only one-stage algorithm in the adaptive case, adapted from \cite{mobihoc2017}, Ada sequentially seeds users in $X$ with budget $B$. Each time, we select the action $y$ from $Y$ which maximizes the benefit-to-cost ratio $\frac{\Delta(y|\psi_p)}{d(y)}$, where $\Delta(y|\psi_p)$ is the expected influence spread brought by $v(y)$ under the previous diffusion result.

   \textbf{LP-Based Approach (LP):} A two-stage algorithm proposed in \cite{WWW_Singer}, where the degree of a user is regarded as its influence. The maximization problem is formulated as an integer linear programming. The solution returns an allocation with $(1-\frac{1}{e})$-approximation ratio.

  \textbf{A-Greedy:} Proposed in \cite{A_Greedy}, A-Greedy is an adaptive one-stage algorithm. Thus, we transform it into a two-stage algorithm. In both stages, A-Greedy is applied to select the seeds. The optimal seeding pattern $A^*$ is adopted, where one user is selected each time and the next selection takes place until no user is further influenced. The activation probability $f_u$ of each user $u$ in \cite{A_Greedy} is set to be 0.6.

  $\boldsymbol{\pi^{\text{greedy}}}$ \textbf{(Ada+CD):} This two-stage algorithm is described in detail in Alg. \ref{alg3}. The coordinate descent algorithm is applied in newly reachable users. The initialization and number of iterations are the same as the CD algorithm.

  $\boldsymbol{\pi^{\text{greedy}}_{\text{discrete}}}$ \textbf{(Ada+GS):} The framework is the same as the Ada+CD algorithm. But in stage 2, the coordinate descent algorithm is replaced by the Greedy Selection in Alg. \ref{alg5}. It is worth noting that the discount rate $D$ in stage 2 becomes $\{0.5, 1\}$, because we find that a fine-grained discount rate with granularity 0.1 will lead to worse results. The explanation is that greedy selection prefers giving small discounts to many users, while the number of newly reachable users is relatively small. Then, the budget is left, making the experiment unfair.

  $\boldsymbol{\pi^{\text{greedy}}_{\text{enum}}}$ \textbf{(Ada+MGS):} The actions in the second stage are determined by the Modified Greedy Selection described in Alg. \ref{alg6}. The discount rate in stage 2 remains to be $\{0.5, 1\}$ for the same reason.

  \subsection{Experimental Results}

    The ten algorithms are mainly evaluated on four metrics: (1) the expected influence spread; (2) the scalability regarding the total budget; (3) the sensitivity of algorithms with respect to different settings of seed probability functions; (4) the impact of the FP phenomenon. Due to space limitation, the influence spread and scalability under $\alpha=0.8$ are presented in Section 4 of the supplemental file.

  \begin{figure*}[h]
    \centering
    \subfigure[Wiki-Vote]
        {
            \begin{minipage}[h]{0.24\textwidth}
            \centerline{\includegraphics[width=1\textwidth]{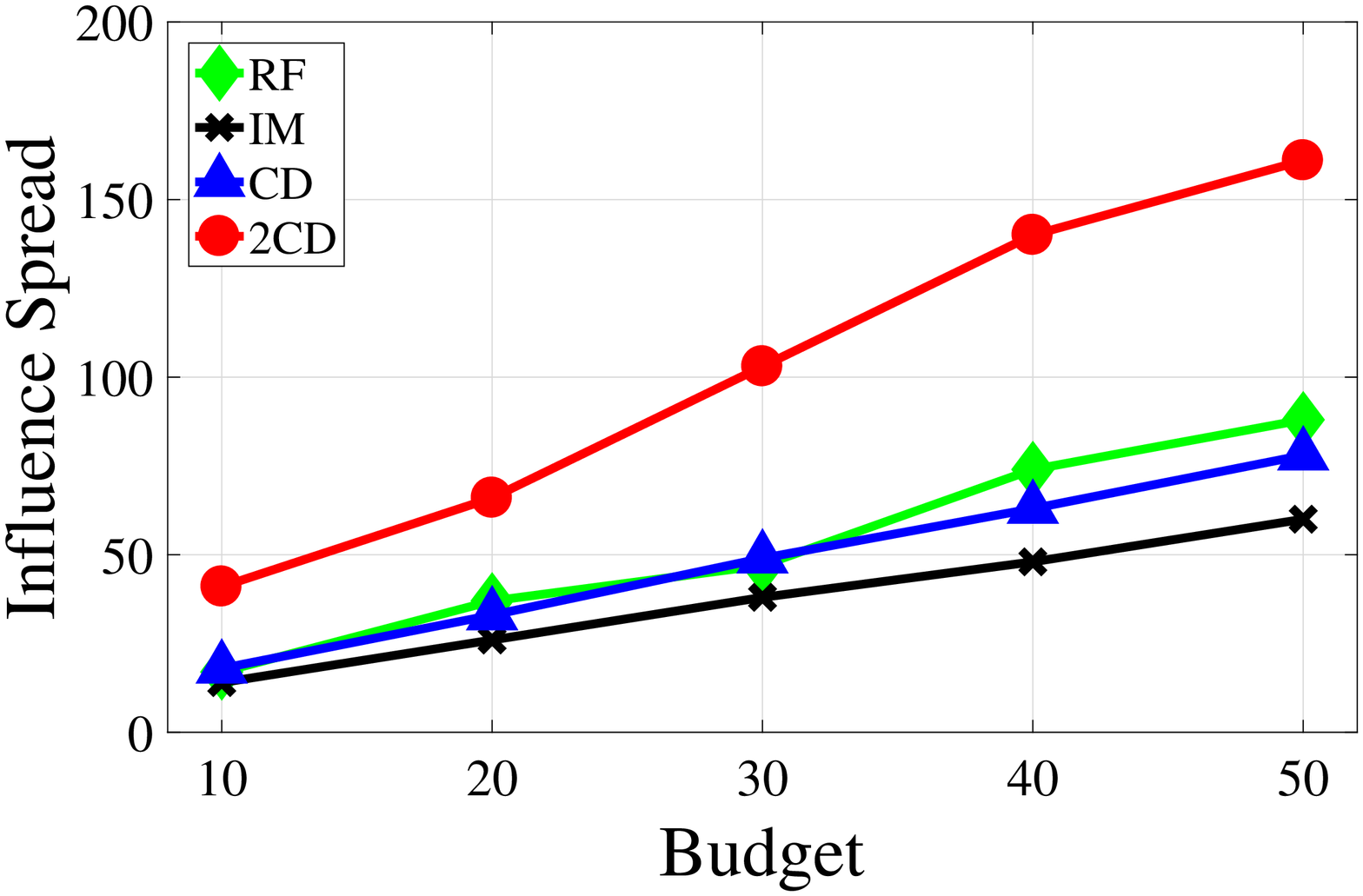}\label{wiki-vote_Non-ada_alpha=06}}
            \vspace{-2mm}
            \centerline{\quad \footnotesize{$\alpha$=0.6}}
            \vspace{1mm}
            \end{minipage}
            \hspace{-1mm}
            \begin{minipage}[h]{0.24\textwidth}
            \centerline{\includegraphics[width=1\textwidth]{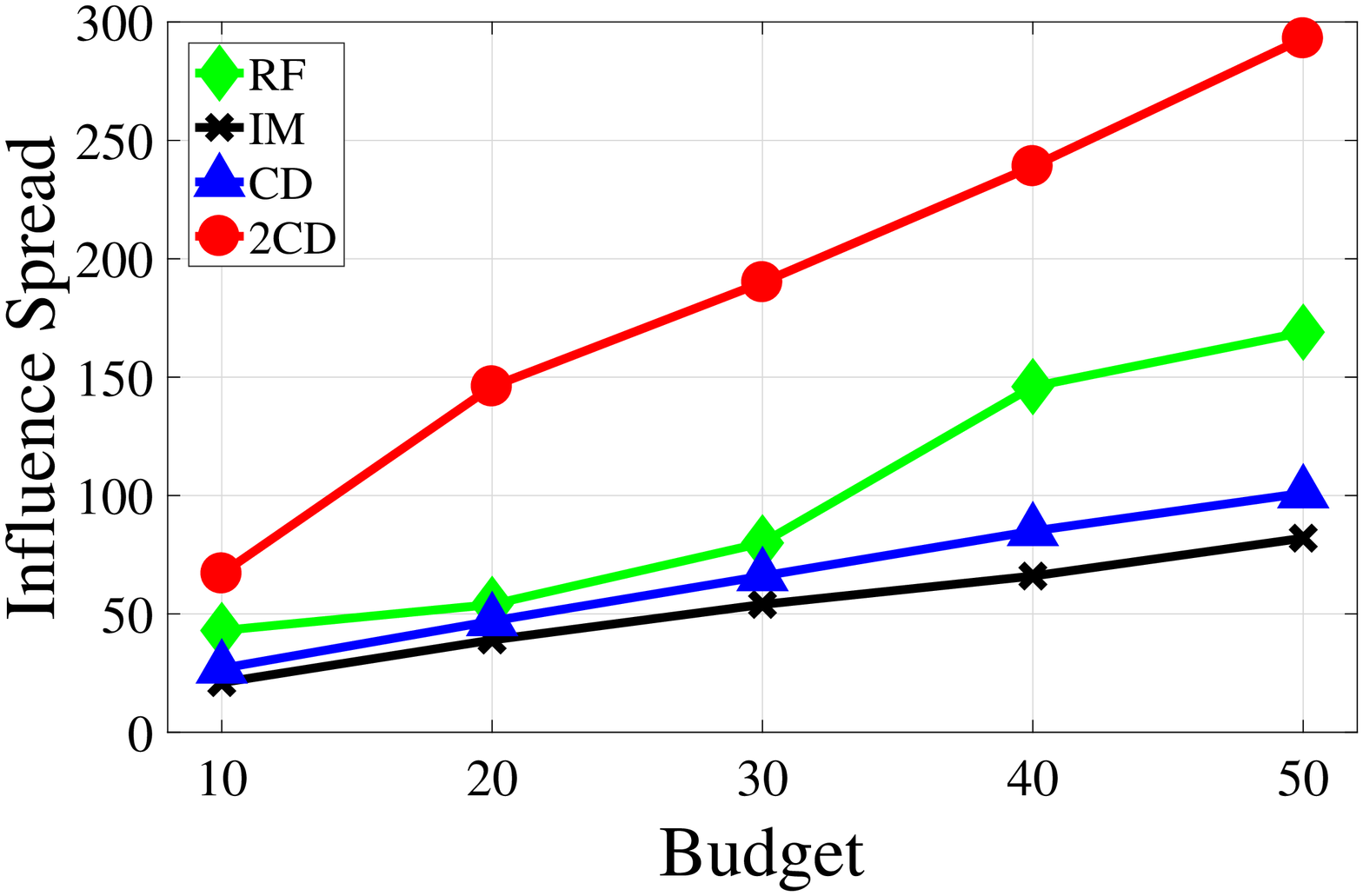}\label{wiki-vote_Non-ada_alpha=10}}
            \vspace{-2mm}
            \centerline{\quad \footnotesize{$\alpha$=1.0}}
            \vspace{1mm}
            \end{minipage}
        }
    \hspace{-3mm}\vspace{-0.5mm}
    \subfigure[Ca-CondMat]
        {
            \begin{minipage}[h]{0.24\textwidth}
            \centerline{\includegraphics[width=1\textwidth]{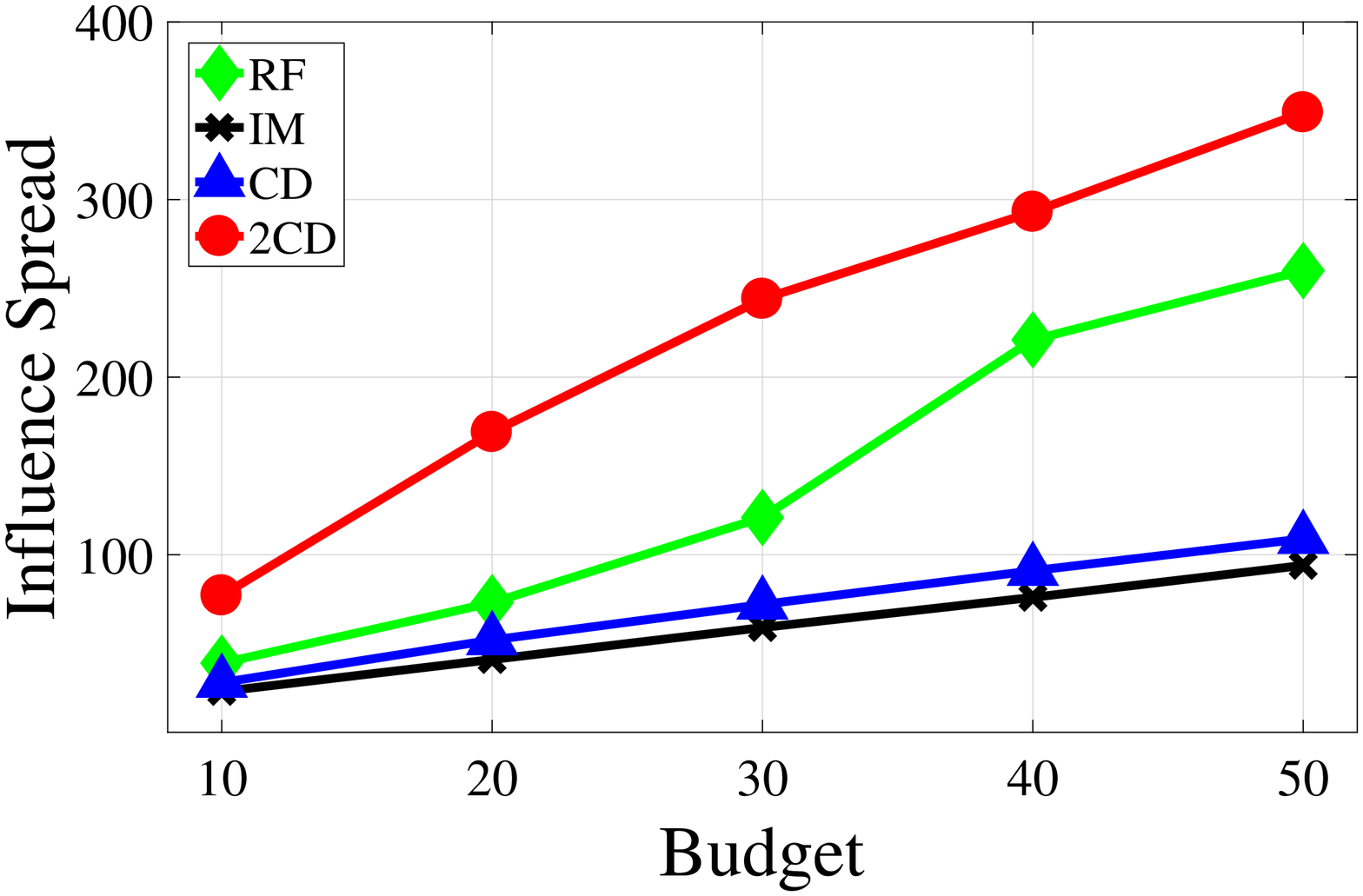}\label{Condmat_Non-ada_alpha=06}}
            \vspace{-2mm}
            \centerline{\quad \footnotesize{$\alpha$=0.6}}
            \vspace{1mm}
            \end{minipage}
            \hspace{-1mm}
            \begin{minipage}[h]{0.24\textwidth}
            \centerline{\includegraphics[width=1\textwidth]{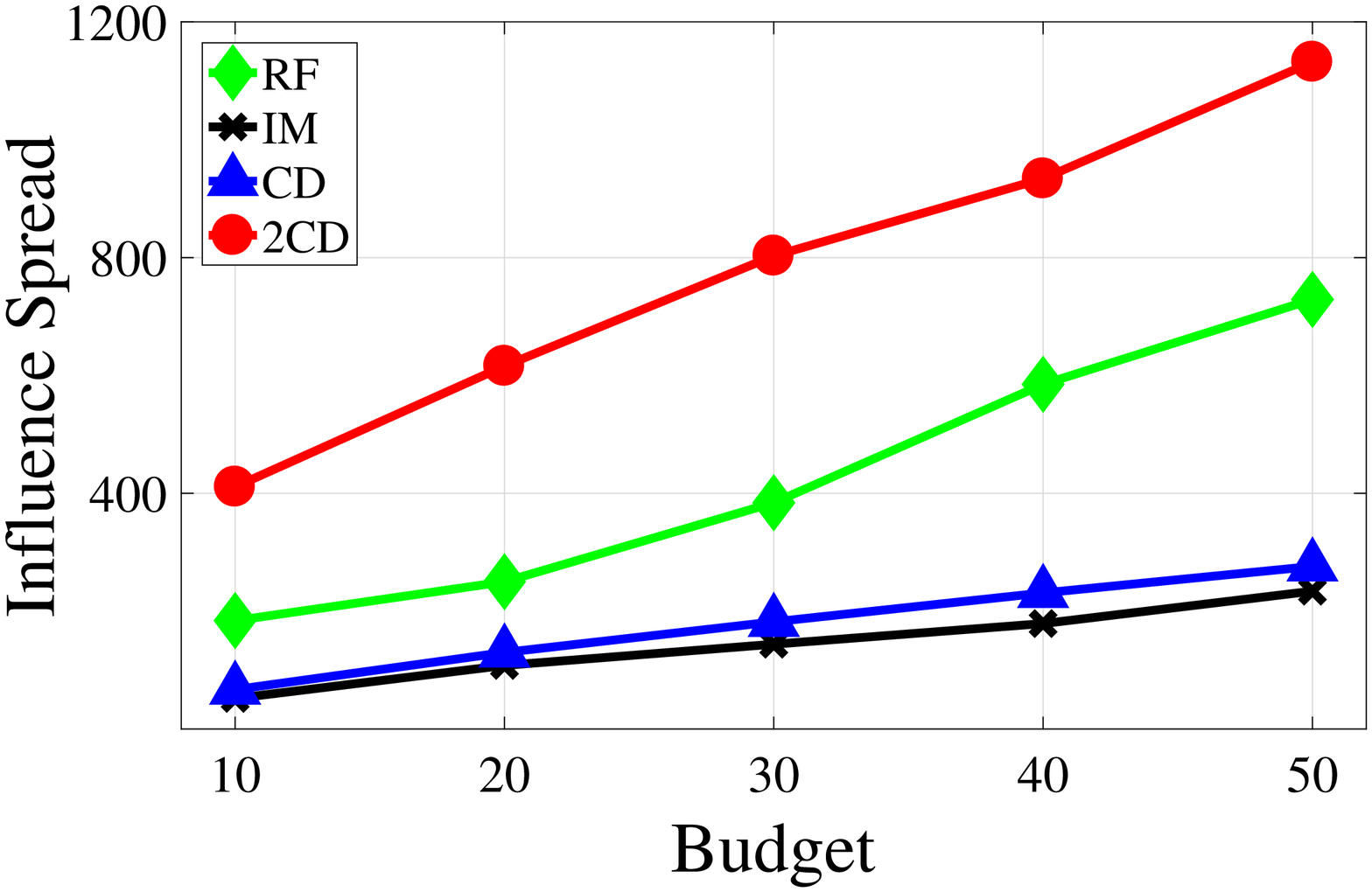}\label{Condmat_Non-ada_alpha=10}}
            \vspace{-2mm}
            \centerline{\quad \footnotesize{$\alpha$=1.0}}
            \vspace{1mm}
            \end{minipage}
        }
    \vspace{-0.5mm}
    \\
      \subfigure[com-Dblp]
        {
            \begin{minipage}[h]{0.24\textwidth}
            \centerline{\includegraphics[width=1\textwidth]{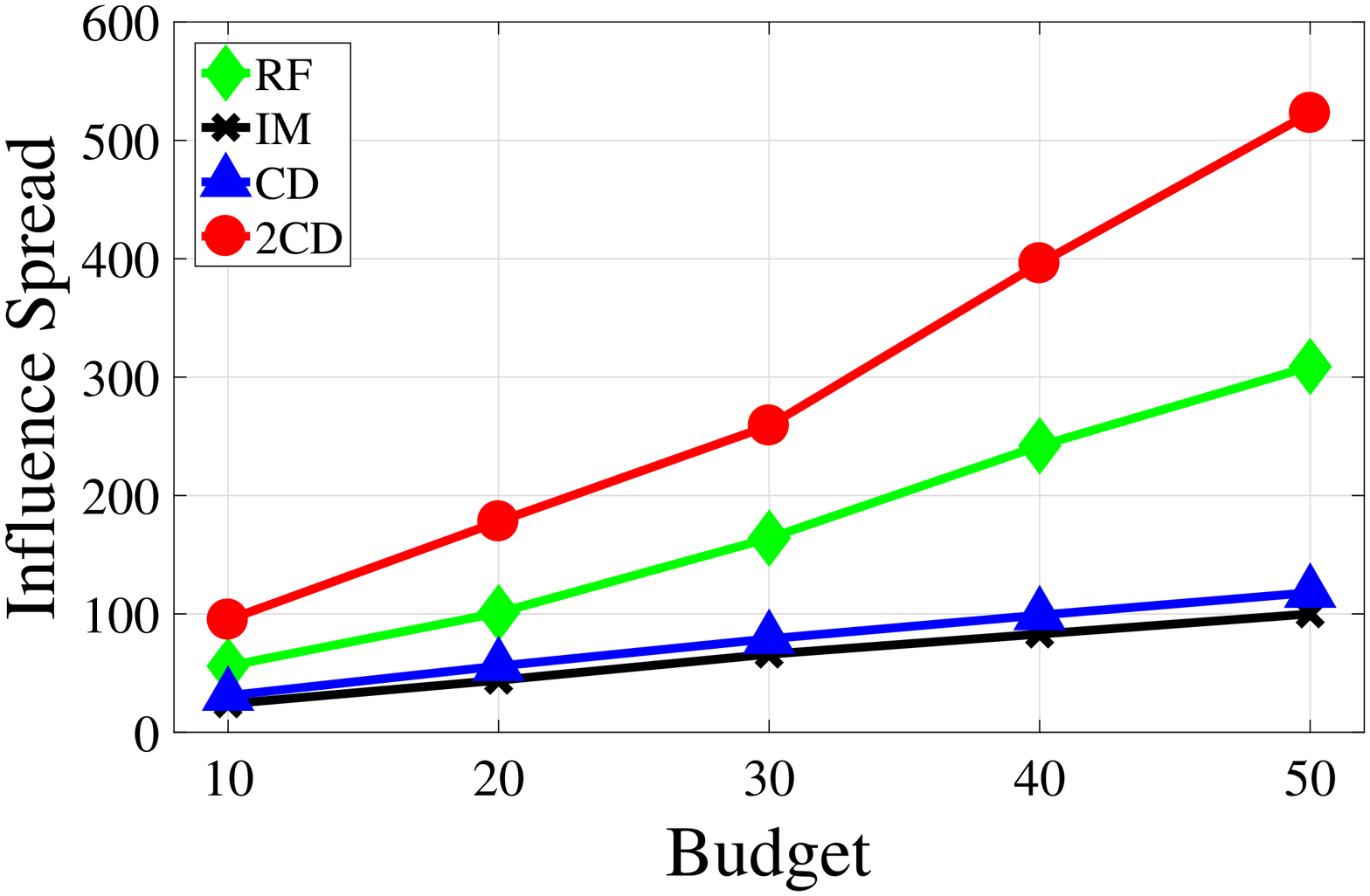}\label{Dblp_Non-ada_alpha=06}}
            \vspace{-2mm}
            \centerline{\quad \footnotesize{$\alpha$=0.6}}
            \vspace{1mm}
            \end{minipage}
            \hspace{-1mm}
            \begin{minipage}[h]{0.24\textwidth}
            \centerline{\includegraphics[width=1\textwidth]{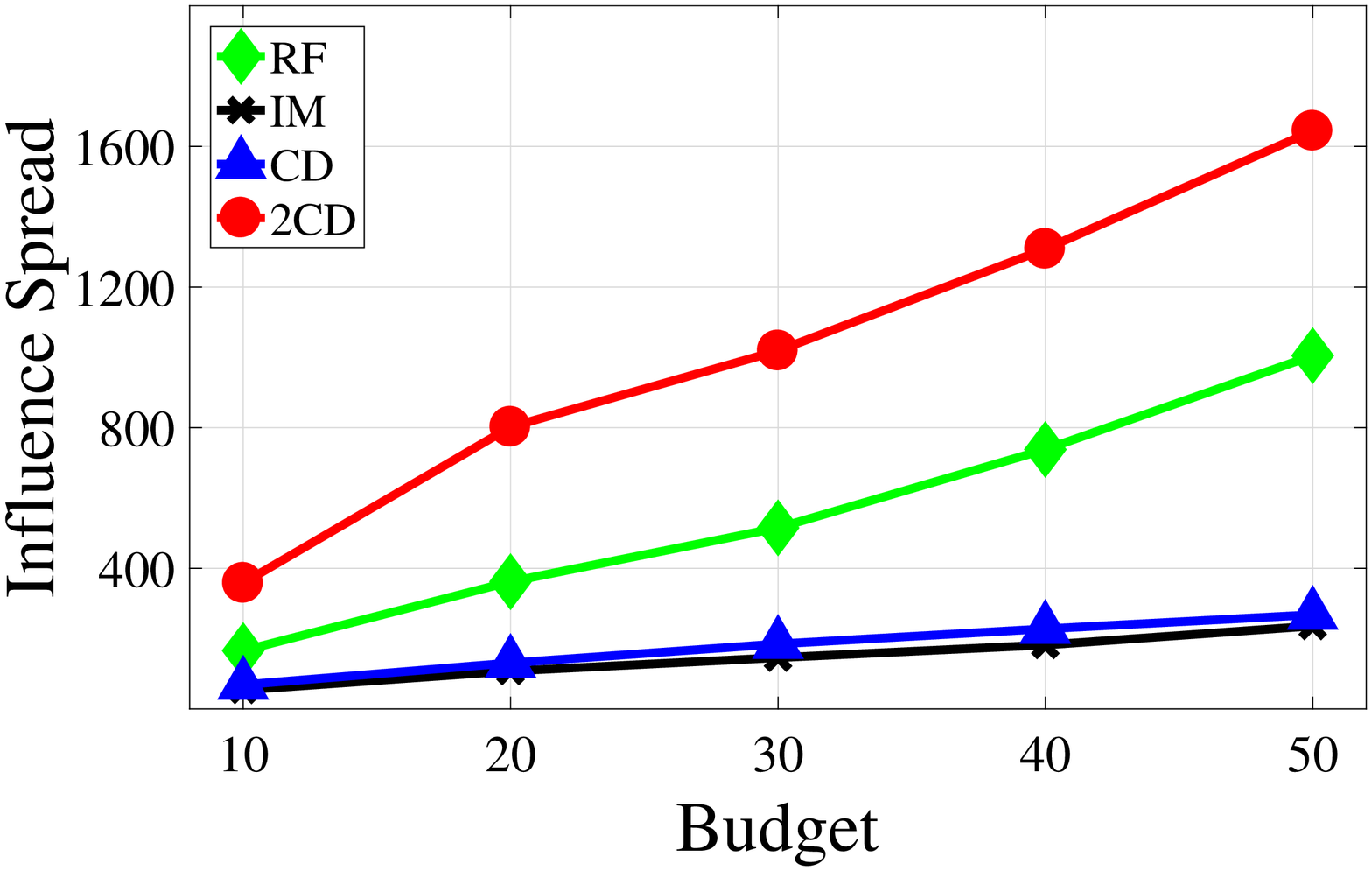}\label{Dblp_Non-ada_alpha=10}}
            \vspace{-2mm}
            \centerline{\quad \footnotesize{$\alpha$=1.0}}
            \vspace{1mm}
            \end{minipage}
        }
    \hspace{-3mm}
      \subfigure[soc-Livejournal]
        {
            \begin{minipage}[h]{0.24\textwidth}
            \centerline{\includegraphics[width=1\textwidth]{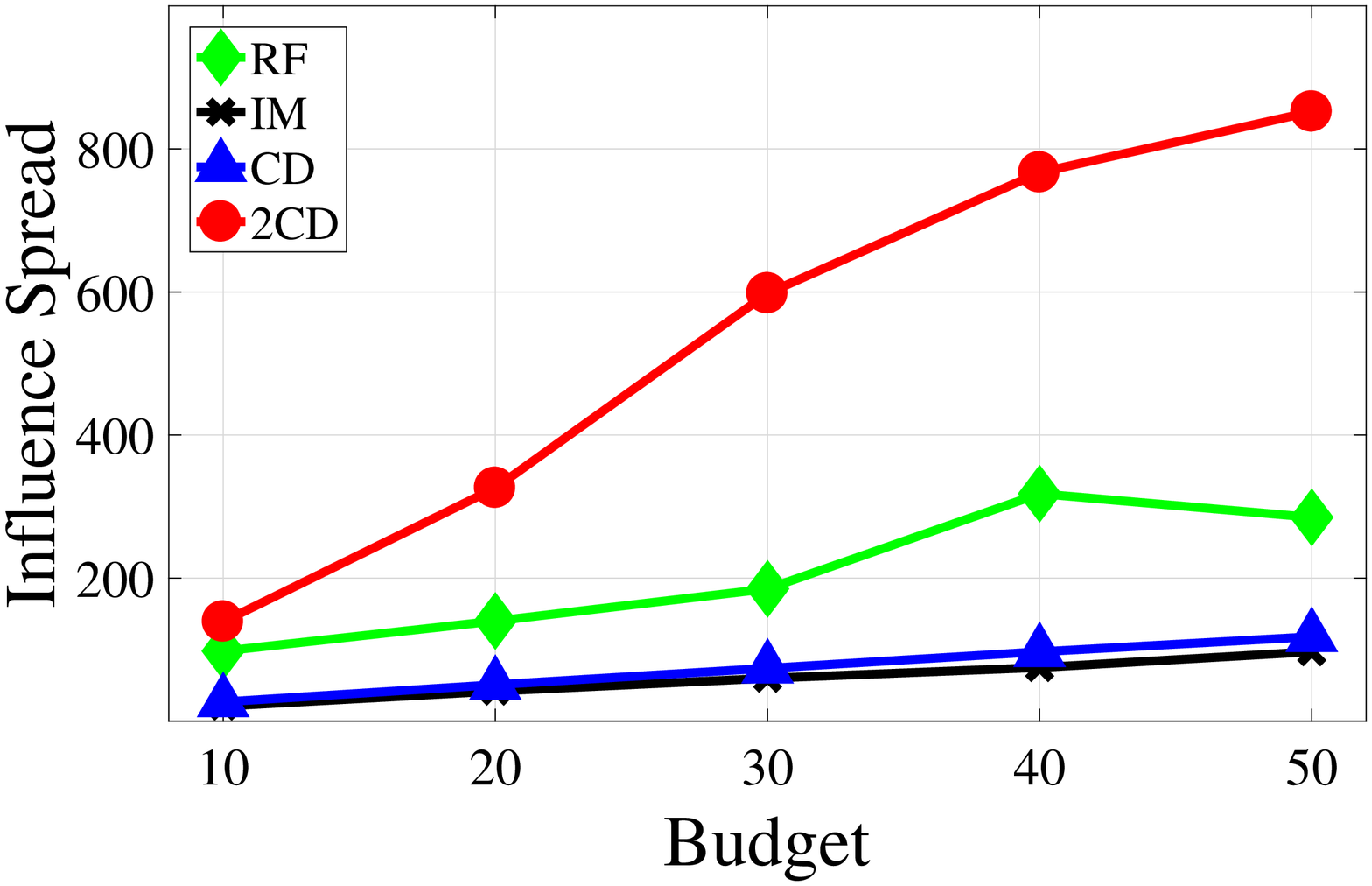}\label{livejournal_Non-ada_alpha=06}}
            \vspace{-2mm}
            \centerline{\quad \footnotesize{$\alpha$=0.6}}
            \vspace{1mm}
            \end{minipage}
            \hspace{-1mm}
            \begin{minipage}[h]{0.24\textwidth}
            \centerline{\includegraphics[width=1\textwidth]{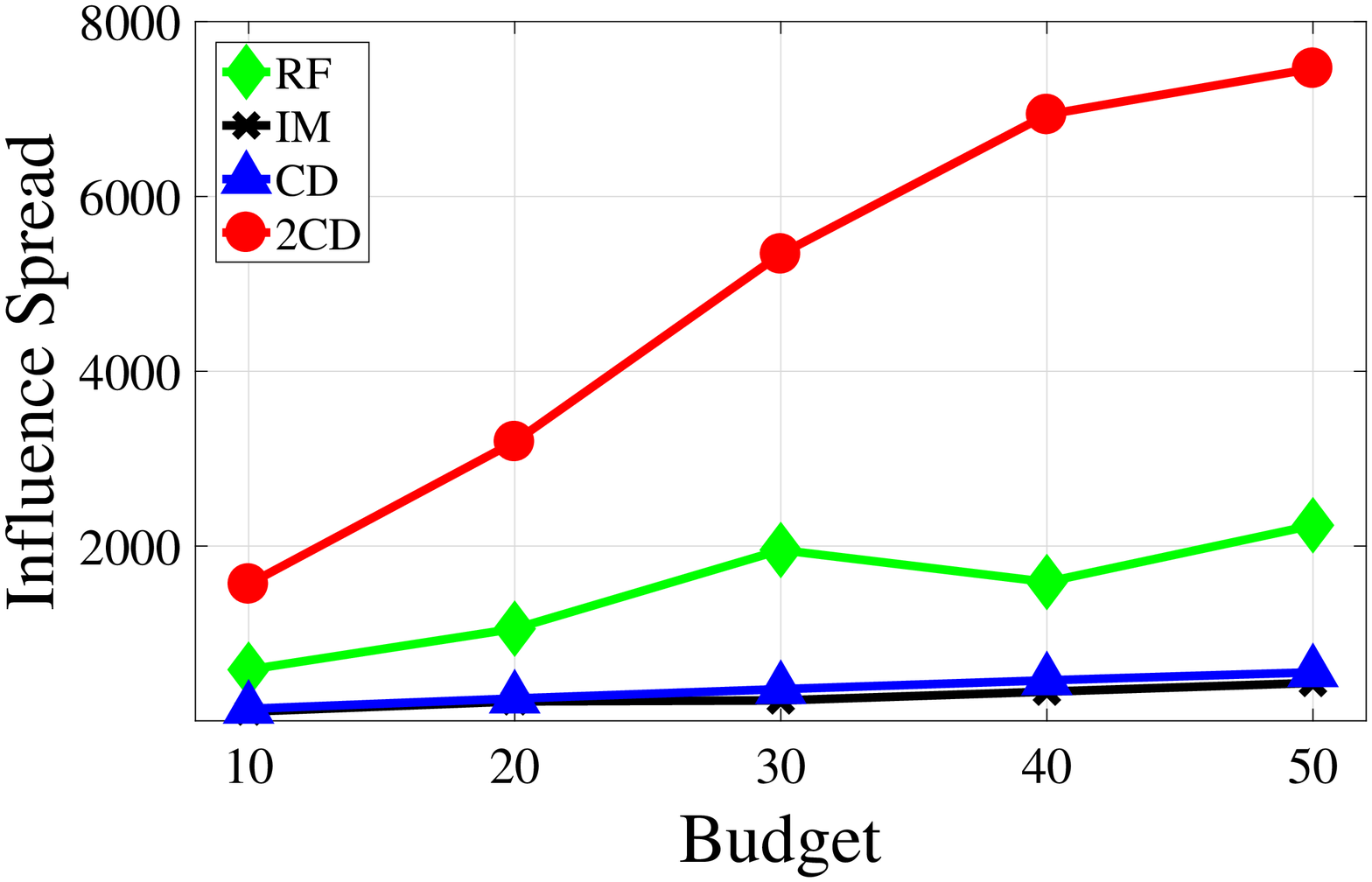}\label{livejournal_Non-ada_alpha=10}}
            \vspace{-2mm}
            \centerline{\quad \footnotesize{$\alpha$=1.0}}
            \vspace{1mm}
            \end{minipage}
        }
      \vspace{-1.5mm}
      \caption{Influence Spread in the Non-adaptive Case.}\label{Influence_spread_in_Non-ada}
      \vspace{-2mm}
  \end{figure*}

  \begin{figure*}[h]
    \centering
    \subfigure[Wiki-Vote]
        {
            \begin{minipage}[h]{0.24\textwidth}
            \centerline{\includegraphics[width=1\textwidth]{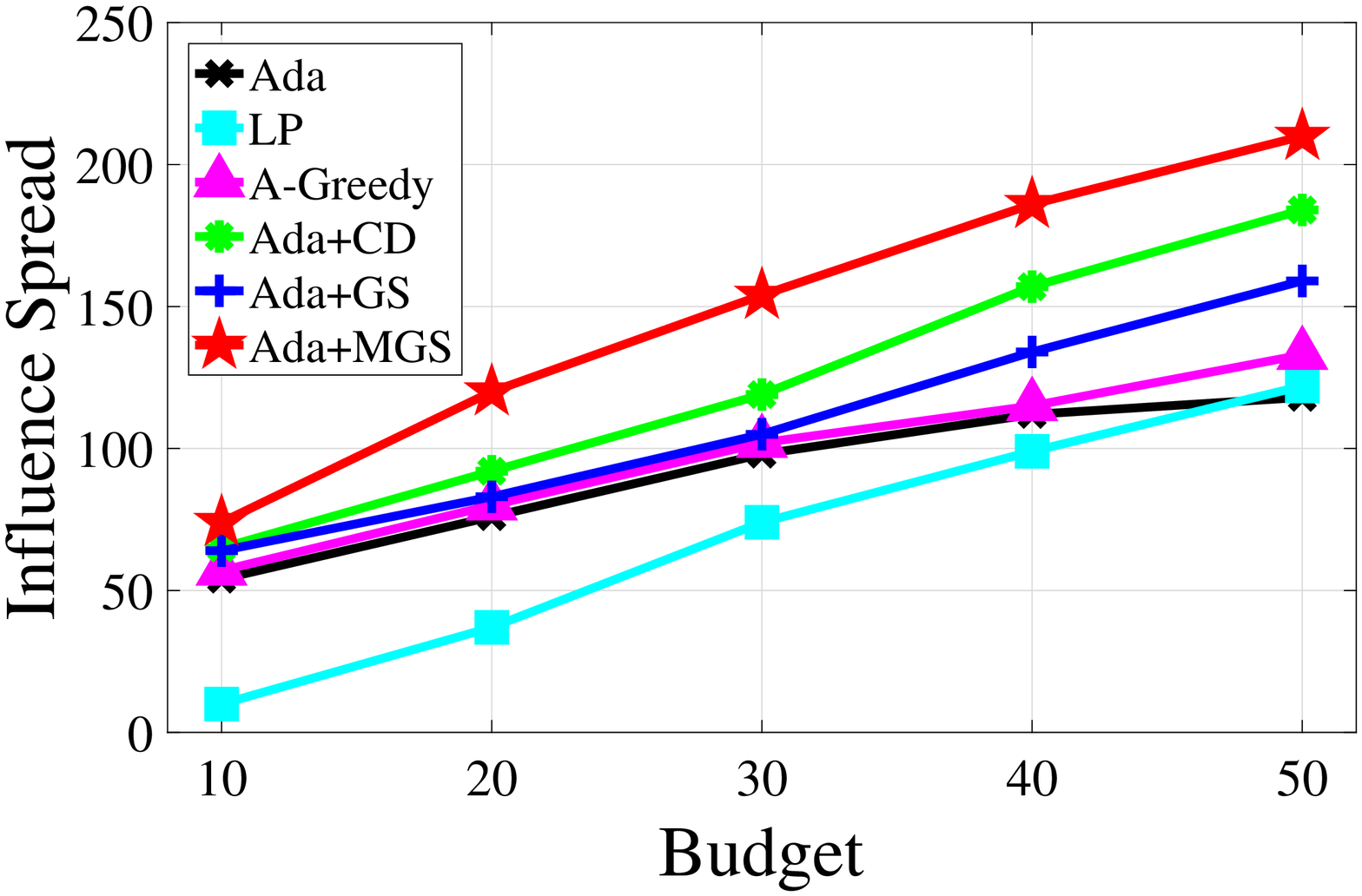}\label{wiki-vote_Ada_alpha=06}}
            \vspace{-2mm}
            \centerline{\quad \footnotesize{$\alpha$=0.6}}
            \vspace{1mm}
            \end{minipage}
            \hspace{-1mm}
            \begin{minipage}[h]{0.24\textwidth}
            \centerline{\includegraphics[width=1\textwidth]{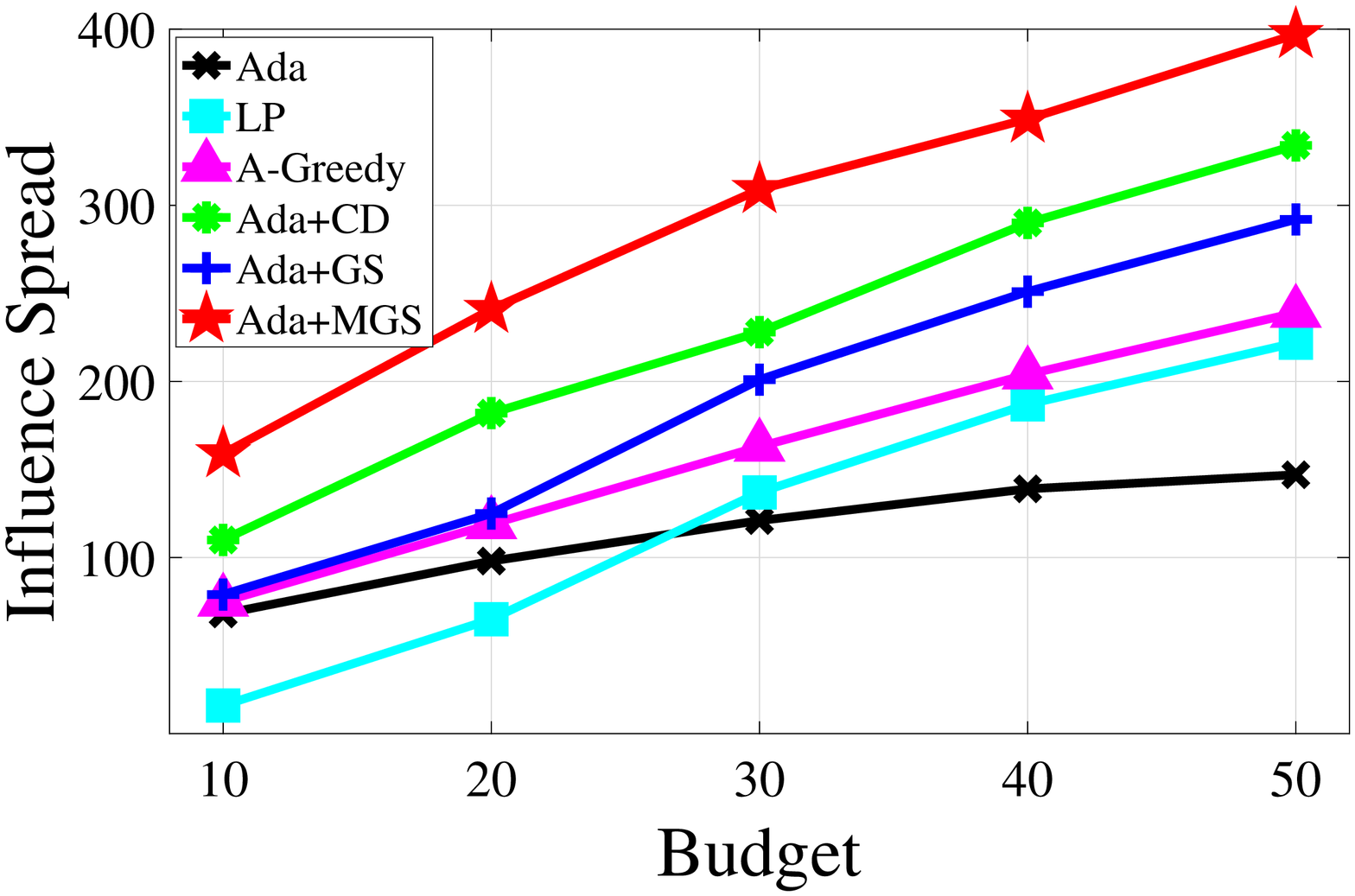}\label{wiki-vote_Ada_alpha=10}}
            \vspace{-2mm}
            \centerline{\quad \footnotesize{$\alpha$=1.0}}
            \vspace{1mm}
            \end{minipage}
        }
    \hspace{-3mm}\vspace{-0.5mm}
    \subfigure[Ca-CondMat]
        {
            \begin{minipage}[h]{0.24\textwidth}
            \centerline{\includegraphics[width=1\textwidth]{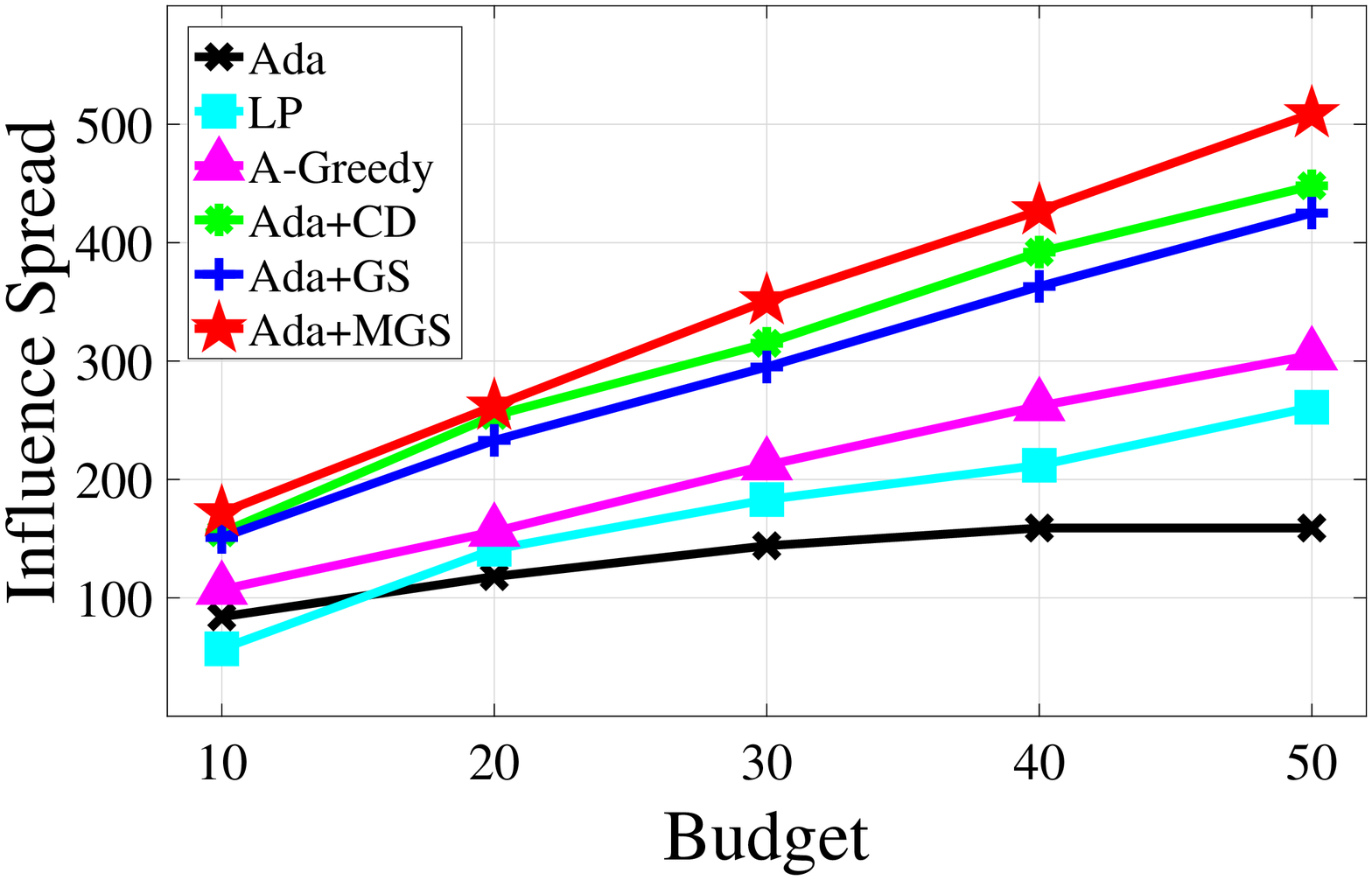}\label{Condmat_Ada_alpha=06}}
            \vspace{-2mm}
            \centerline{\quad \footnotesize{$\alpha$=0.6}}
            \vspace{1mm}
            \end{minipage}
            \hspace{-1mm}
            \begin{minipage}[h]{0.24\textwidth}
            \centerline{\includegraphics[width=1\textwidth]{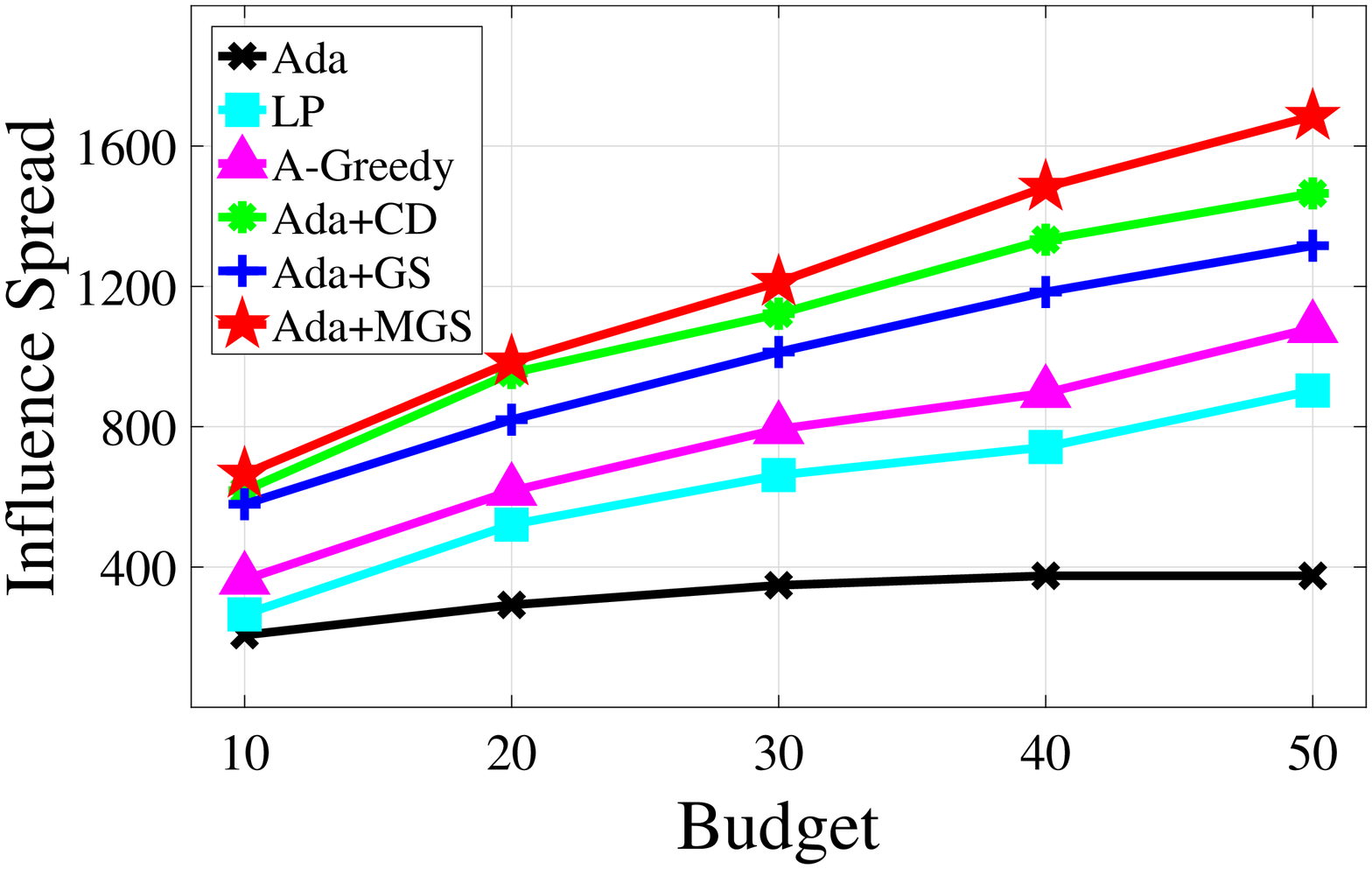}\label{Condmat_Ada_alpha=10}}
            \vspace{-2mm}
            \centerline{\quad \footnotesize{$\alpha$=1.0}}
            \vspace{1mm}
            \end{minipage}
        }
    \vspace{-0.5mm}
    \\
      \subfigure[com-Dblp]
      {
            \begin{minipage}[h]{0.24\textwidth}
            \centerline{\includegraphics[width=1\textwidth]{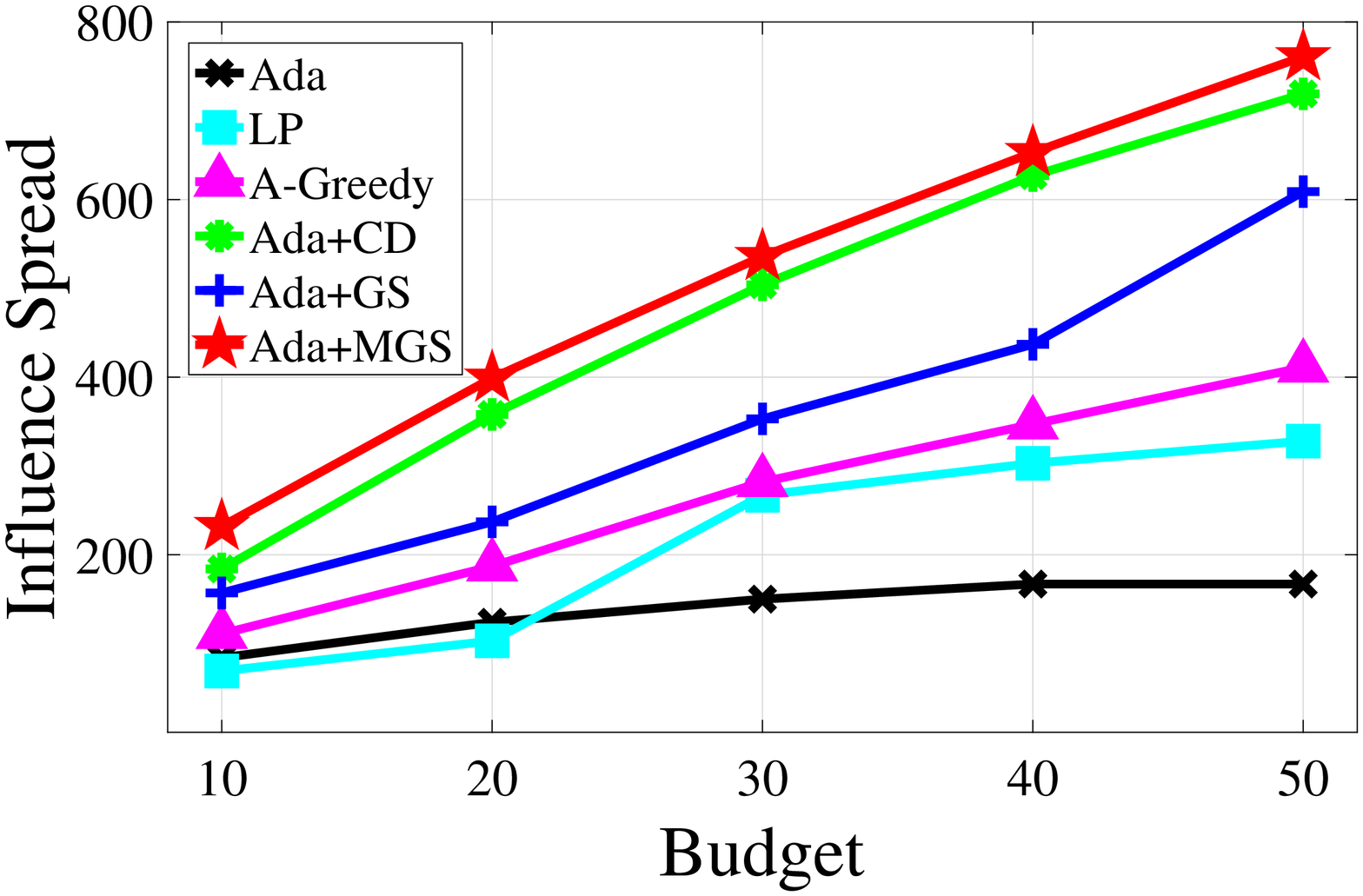}\label{Dblp_Ada_alpha=06}}
            \vspace{-2mm}
            \centerline{\quad \footnotesize{$\alpha$=0.6}}
            \vspace{1mm}
            \end{minipage}
            \hspace{-1mm}
            \begin{minipage}[h]{0.24\textwidth}
            \centerline{\includegraphics[width=1\textwidth]{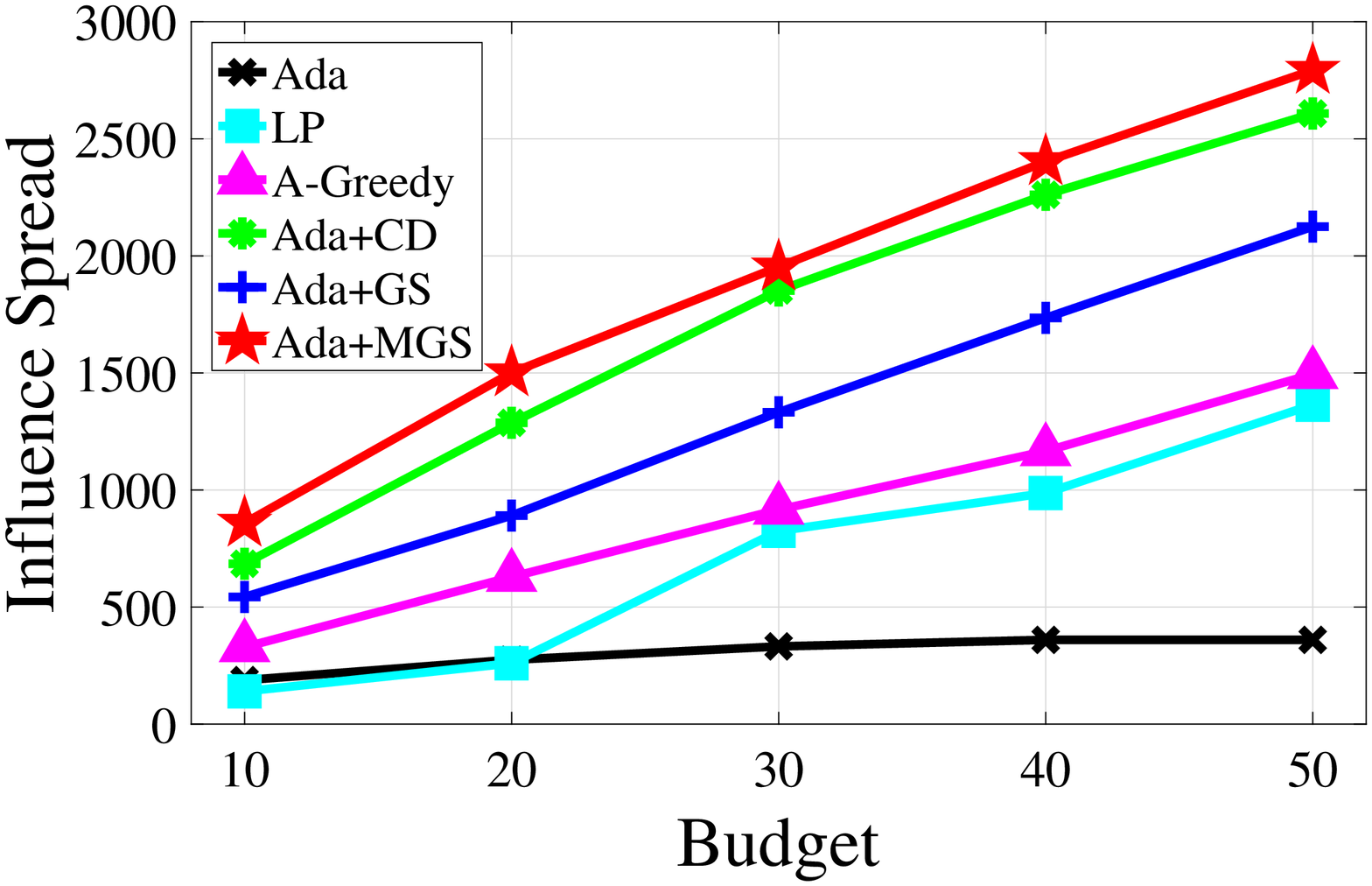}\label{Dblp_Ada_alpha=10}}
            \vspace{-2mm}
            \centerline{\quad \footnotesize{$\alpha$=1.0}}
            \vspace{1mm}
            \end{minipage}
        }
      \hspace{-3mm}
      \subfigure[soc-Livejournal]
      {
            \begin{minipage}[h]{0.24\textwidth}
            \centerline{\includegraphics[width=1\textwidth]{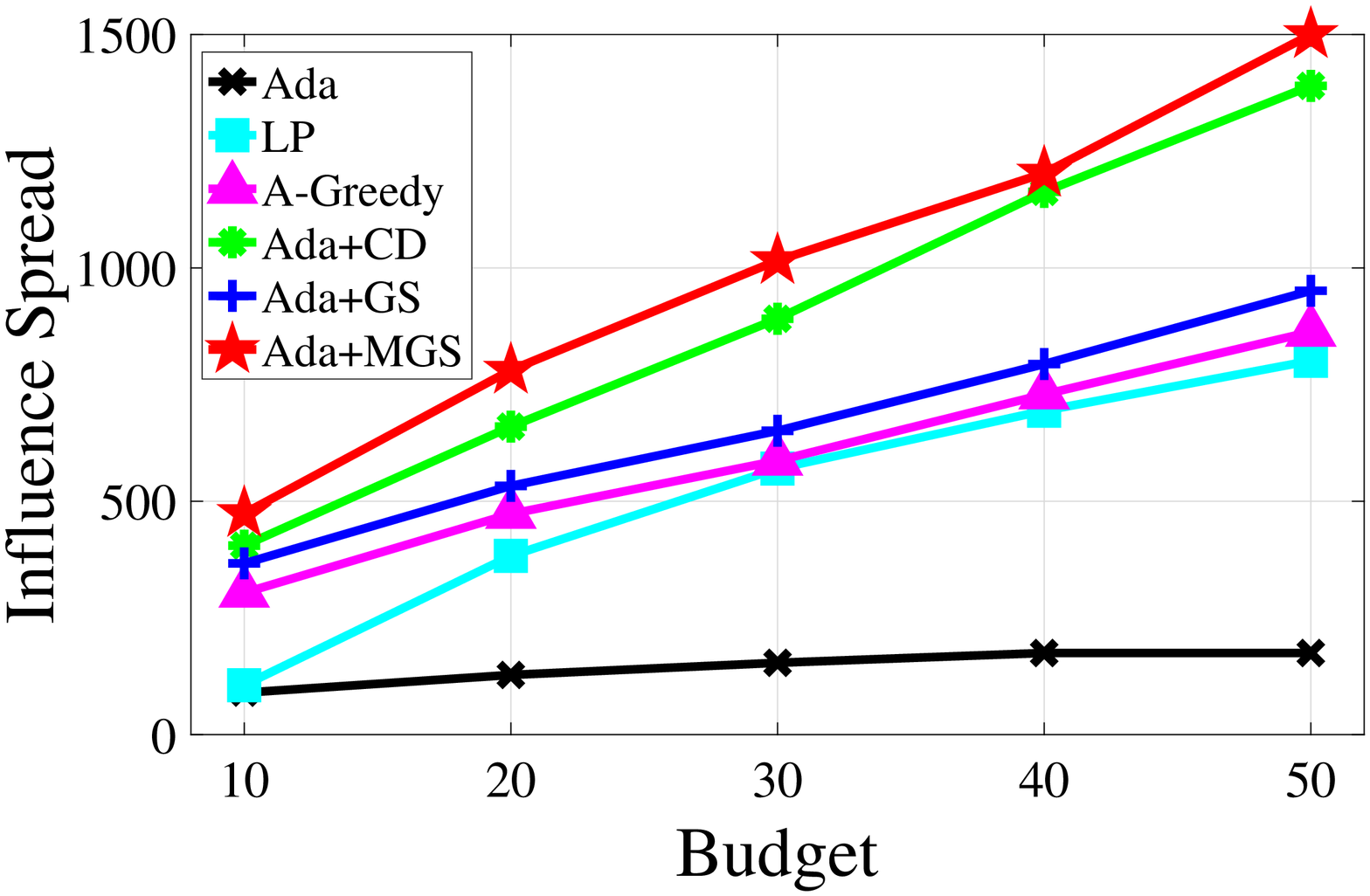}\label{livejournal_Ada_alpha=06}}
            \vspace{-2mm}
            \centerline{\quad \footnotesize{$\alpha$=0.6}}
            \vspace{1mm}
            \end{minipage}
            \hspace{-1mm}
            \begin{minipage}[h]{0.24\textwidth}
            \centerline{\includegraphics[width=1\textwidth]{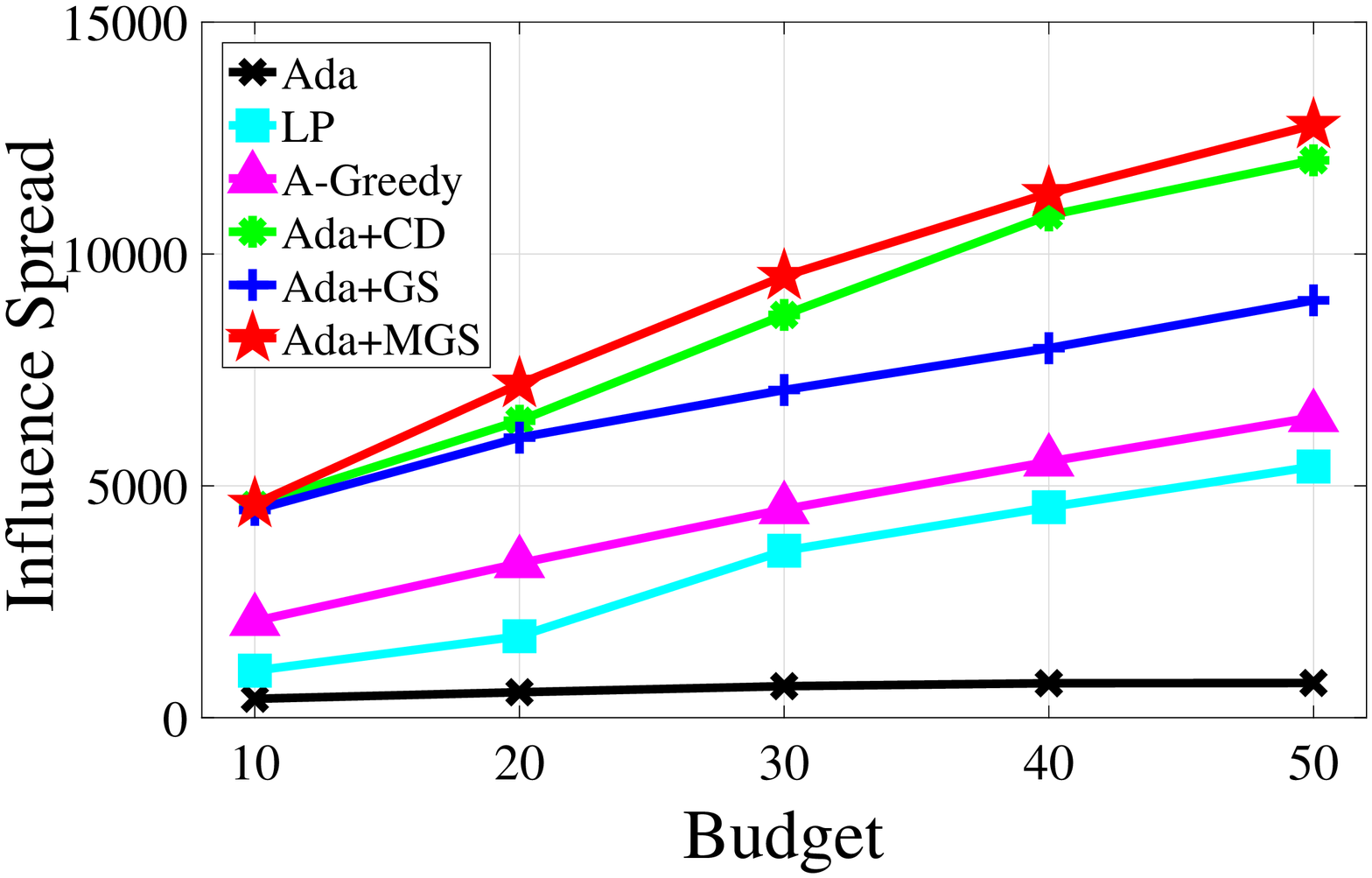}\label{livejournal_Ada_alpha=10}}
            \vspace{-2mm}
            \centerline{\quad \footnotesize{$\alpha$=1.0}}
            \vspace{1mm}
            \end{minipage}
        }
     \vspace{-1.5mm}
      \caption{Influence Spread in the Adaptive Case.}\label{Influence_spread_in_Ada}
      \vspace{-2mm}
  \end{figure*}

  \begin{figure*}[h]
    \centering
    \subfigure[Wiki-Vote]
        {
            \begin{minipage}[h]{0.24\textwidth}
            \centerline{\includegraphics[width=1\textwidth]{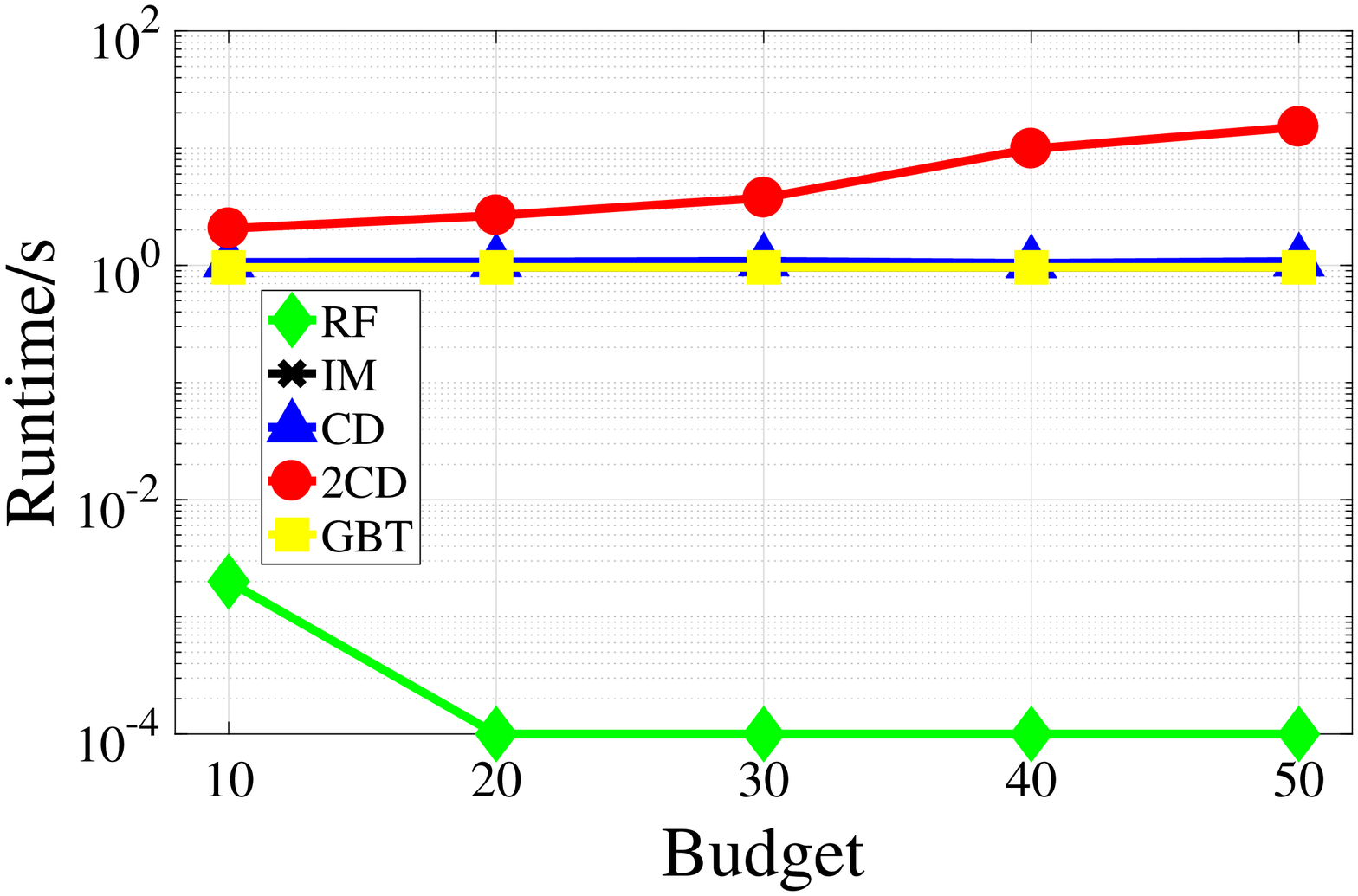}\label{wiki-vote_Non-ada_alpha-Runtime=06}}
            \vspace{-2mm}
            \centerline{\quad \footnotesize{$\alpha$=0.6}}
            \vspace{1mm}
            \end{minipage}
            \hspace{-1mm}
            \begin{minipage}[h]{0.24\textwidth}
            \centerline{\includegraphics[width=1\textwidth]{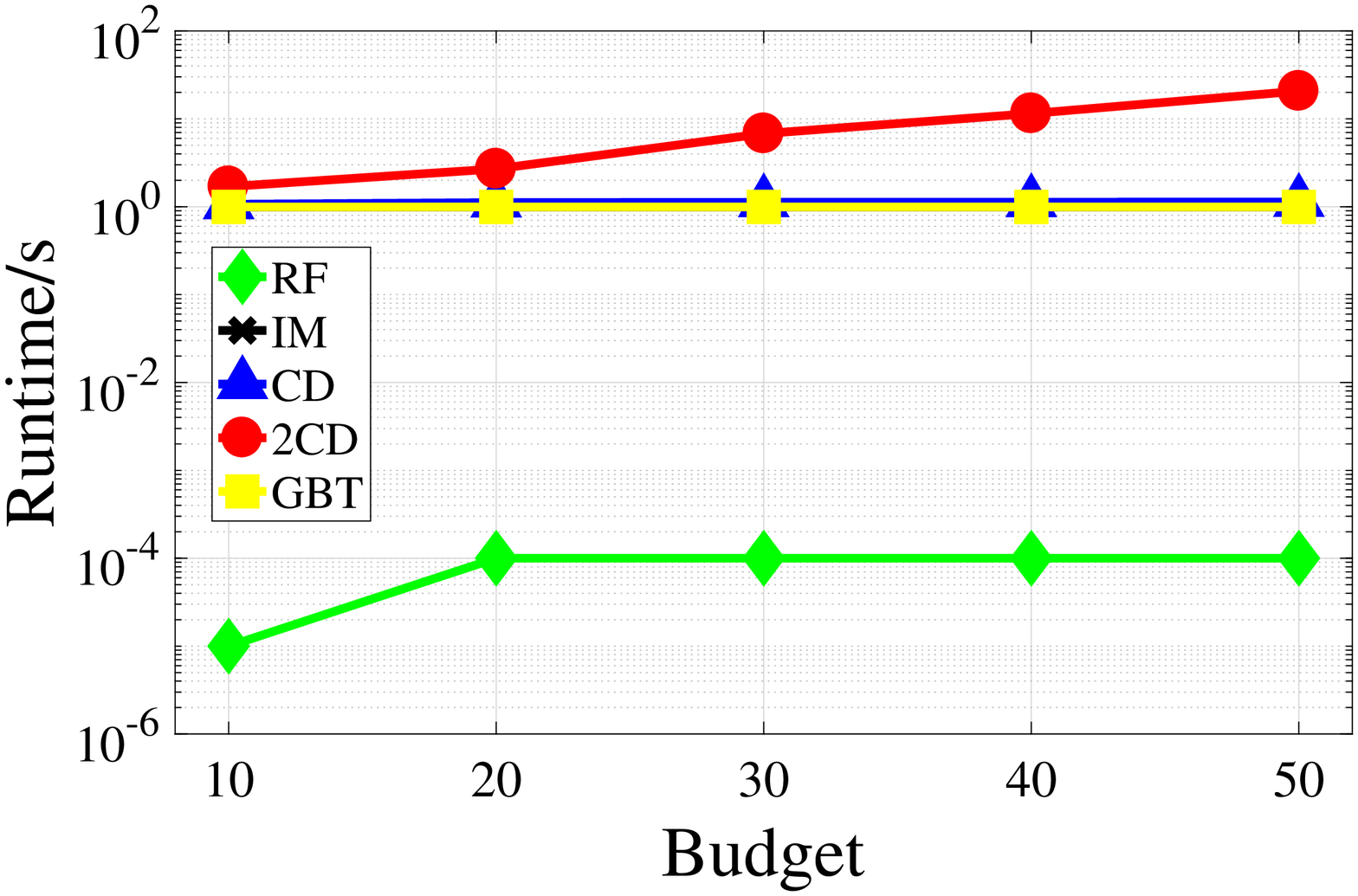}\label{wiki-vote_Non-ada_alpha-Runtime=10}}
            \vspace{-2mm}
            \centerline{\quad \footnotesize{$\alpha$=1.0}}
            \vspace{1mm}
            \end{minipage}
        }
    \hspace{-3mm}\vspace{-0.5mm}
    \subfigure[Ca-CondMat]
        {
            \begin{minipage}[h]{0.24\textwidth}
            \centerline{\includegraphics[width=1\textwidth]{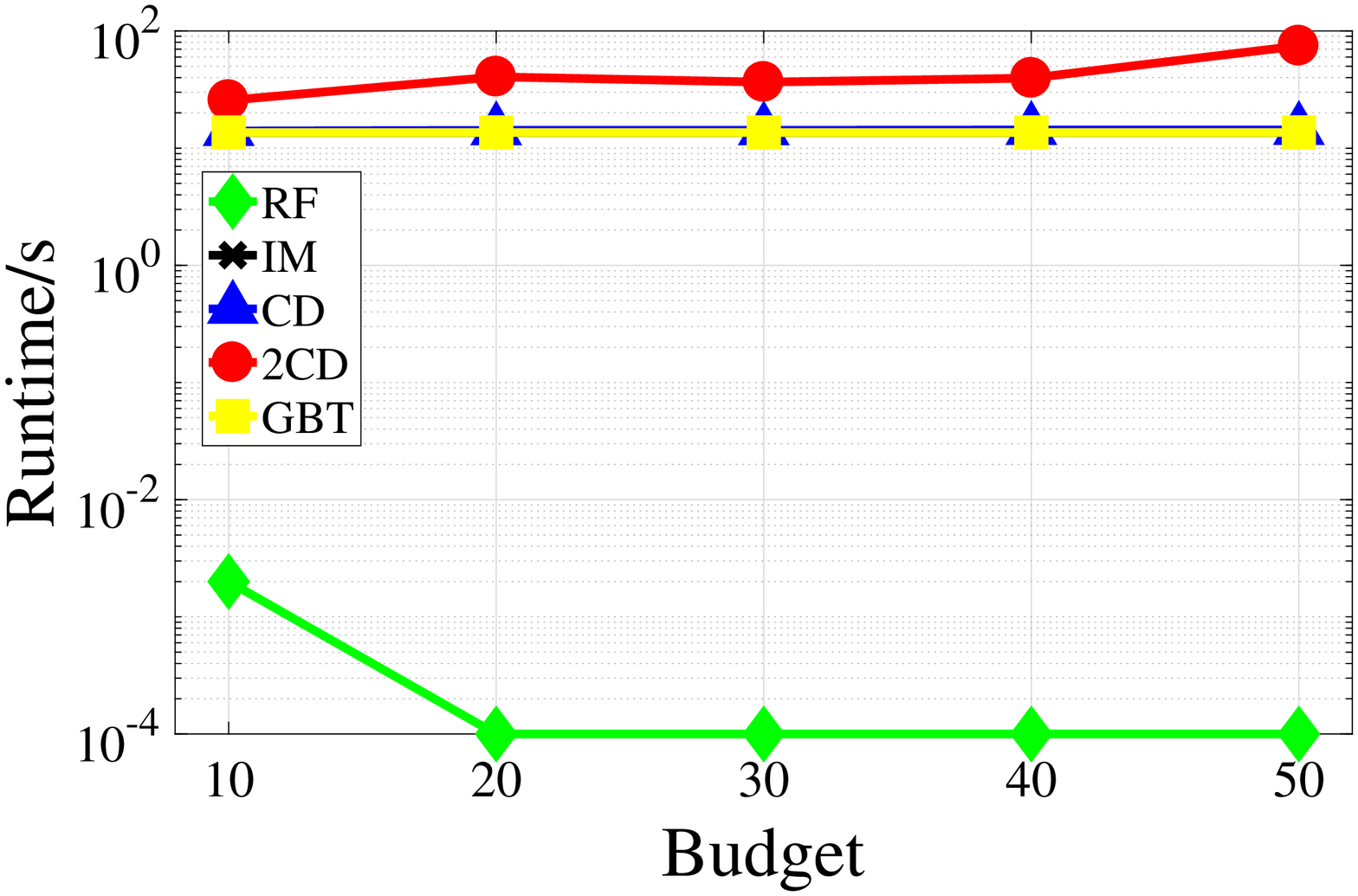}\label{Condmat_Non-ada_alpha-Runtime=06}}
            \vspace{-2mm}
            \centerline{\quad \footnotesize{$\alpha$=0.6}}
            \vspace{1mm}
            \end{minipage}
            \hspace{-1mm}
            \begin{minipage}[h]{0.24\textwidth}
            \centerline{\includegraphics[width=1\textwidth]{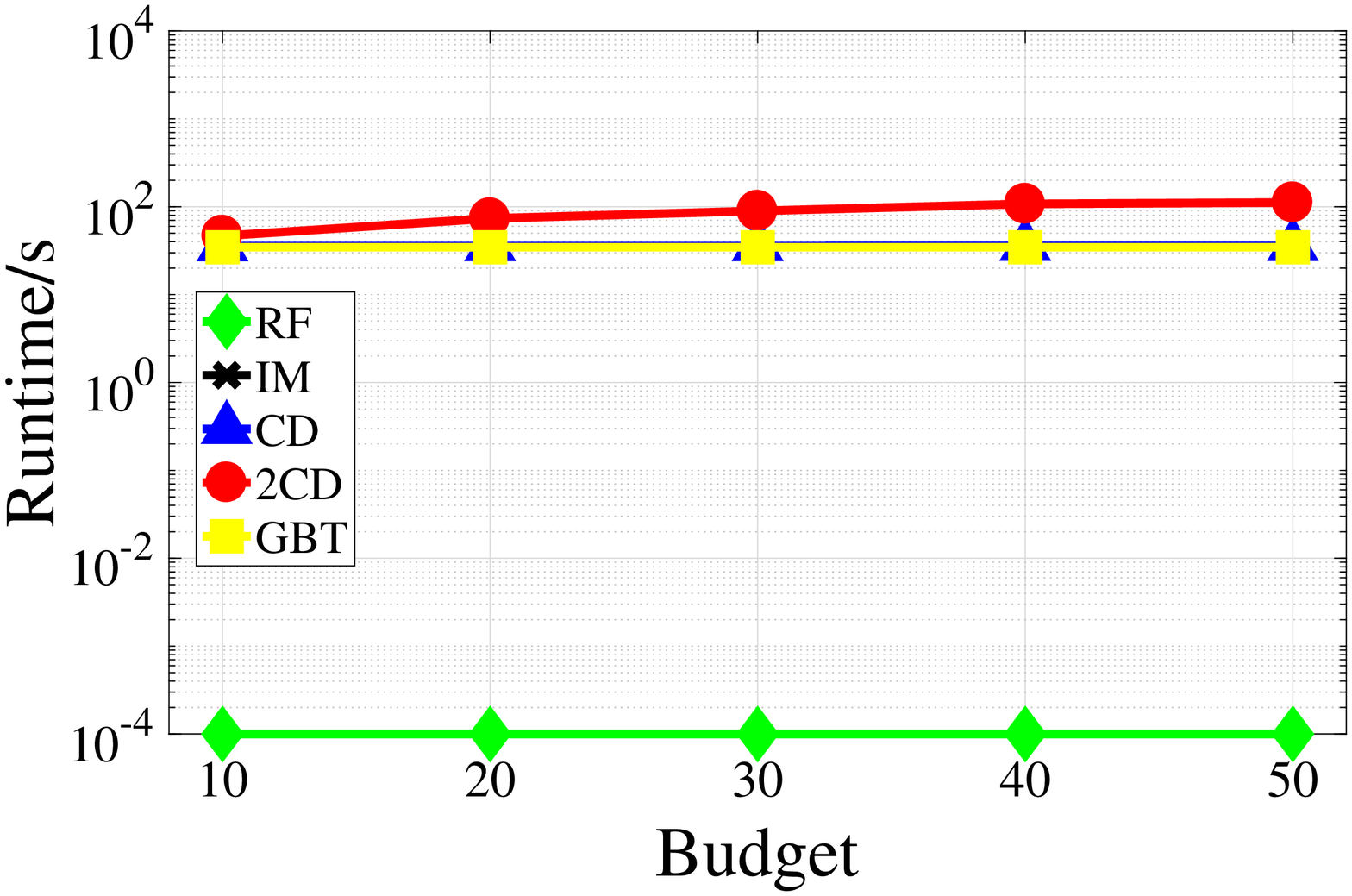}\label{Condmat_Non-ada_alpha-Runtime=10}}
            \vspace{-2mm}
            \centerline{\quad \footnotesize{$\alpha$=1.0}}
            \vspace{1mm}
            \end{minipage}
        }
    \vspace{-0.5mm}
    \\
      \subfigure[com-Dblp]
      {
            \begin{minipage}[h]{0.24\textwidth}
            \centerline{\includegraphics[width=1\textwidth]{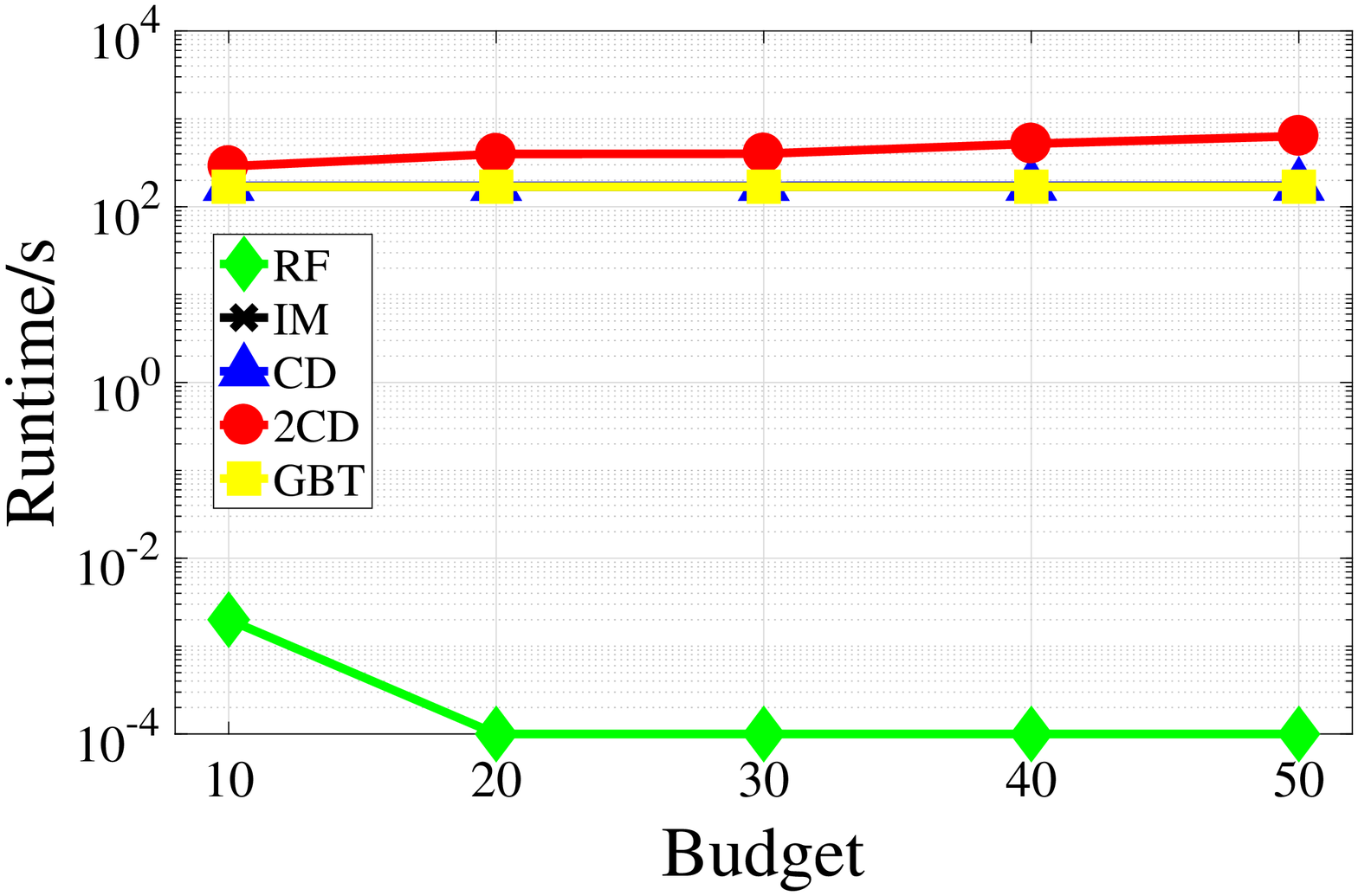}\label{Dblp_Non-ada_alpha-Runtime=06}}
            \vspace{-2mm}
            \centerline{\quad \footnotesize{$\alpha$=0.6}}
            \vspace{1mm}
            \end{minipage}
            \hspace{-1mm}
            \begin{minipage}[h]{0.24\textwidth}
            \centerline{\includegraphics[width=1\textwidth]{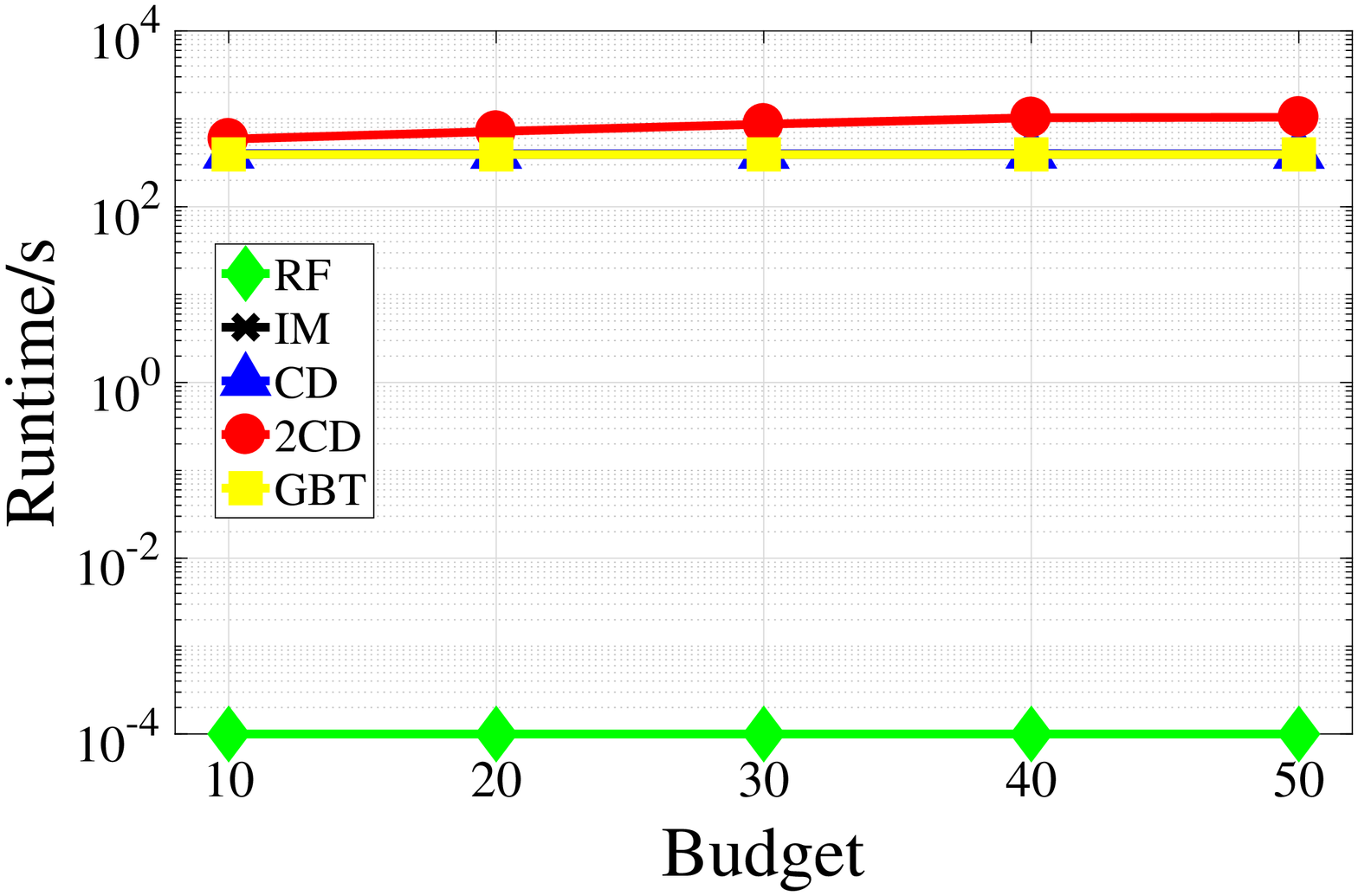}\label{Dblp_Non-ada_alpha-Runtime=10}}
            \vspace{-2mm}
            \centerline{\quad \footnotesize{$\alpha$=1.0}}
            \vspace{1mm}
            \end{minipage}
        }
      \hspace{-3mm}
      \subfigure[soc-Livejournal]
      {
            \begin{minipage}[h]{0.24\textwidth}
            \centerline{\includegraphics[width=1\textwidth]{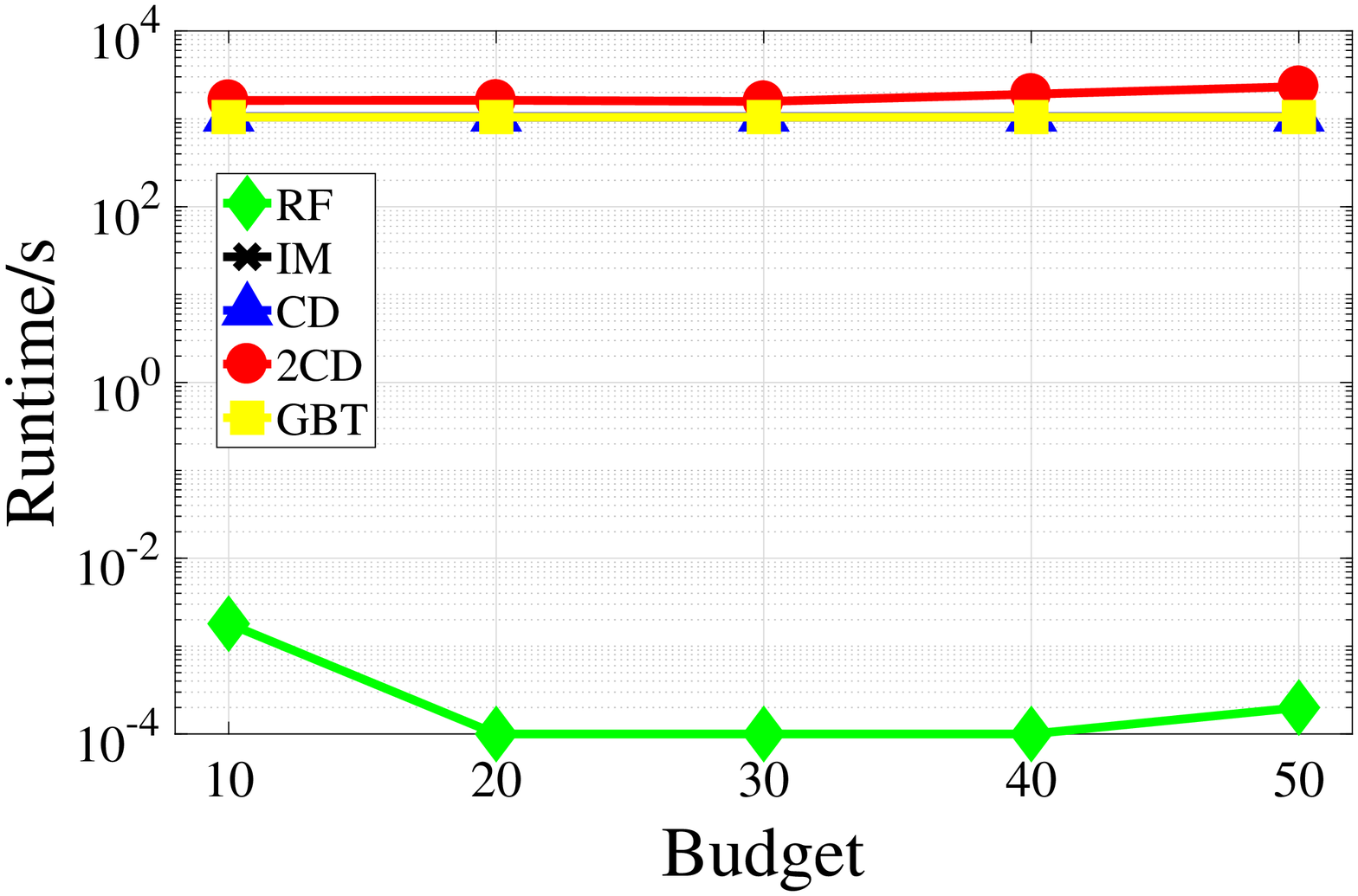}\label{livejournal_Non-ada_alpha-Runtime=06}}
            \vspace{-2mm}
            \centerline{\quad \footnotesize{$\alpha$=0.6}}
            \vspace{1mm}
            \end{minipage}
            \hspace{-1mm}
            \begin{minipage}[h]{0.24\textwidth}
            \centerline{\includegraphics[width=1\textwidth]{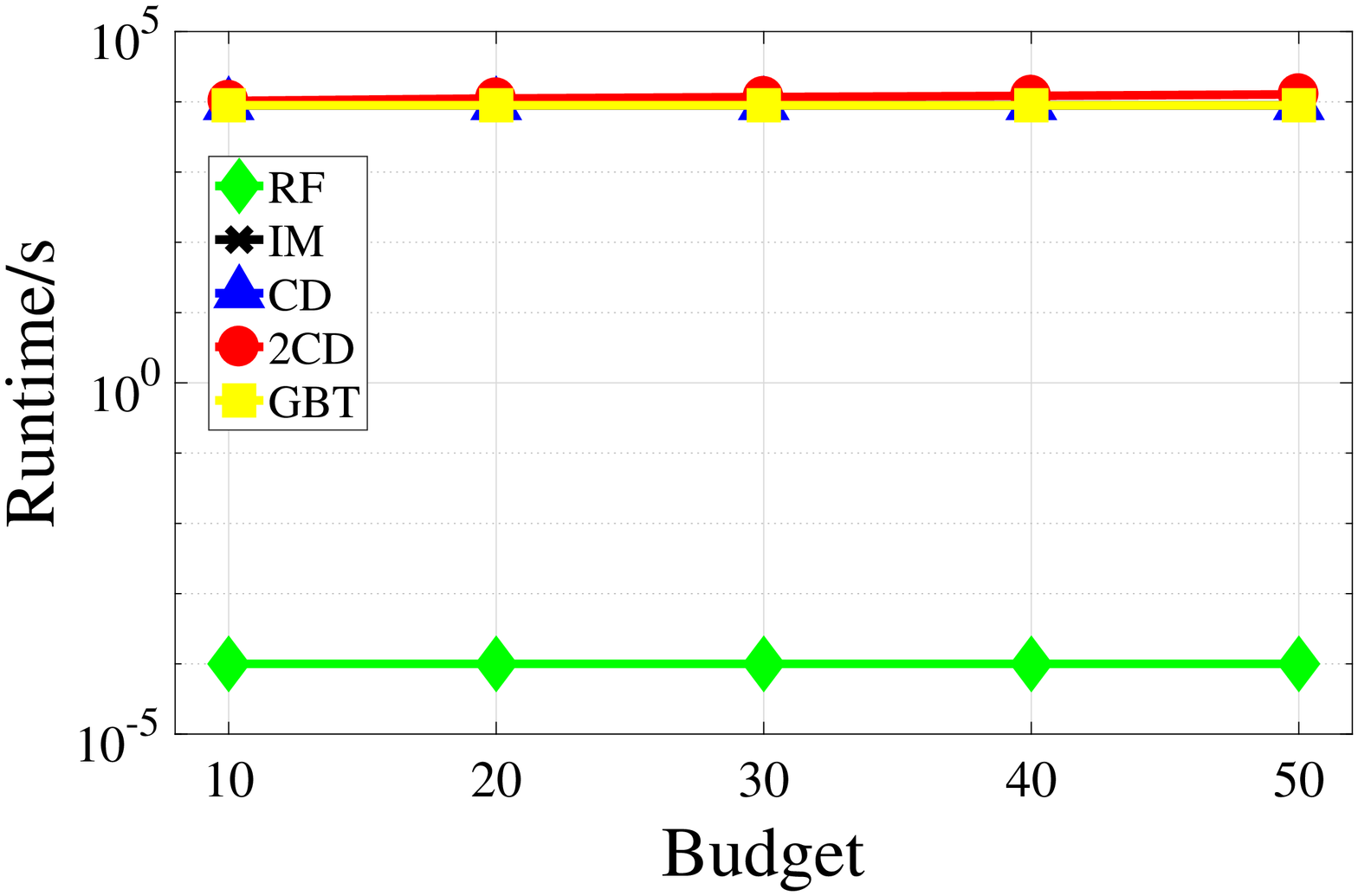}\label{livejournal_Non-ada_alpha-Runtime=10}}
            \vspace{-2mm}
            \centerline{\quad \footnotesize{$\alpha$=1.0}}
            \vspace{1mm}
            \end{minipage}
        }
     \vspace{-1.5mm}
      \caption{Running Time in the Non-adaptive Case.}\label{Running_Time_in_Non-ada}
      \vspace{-2mm}
  \end{figure*}

  \begin{figure*}[h]
    \centering
    \subfigure[Wiki-Vote]
        {
            \begin{minipage}[h]{0.24\textwidth}
            \centerline{\includegraphics[width=1\textwidth]{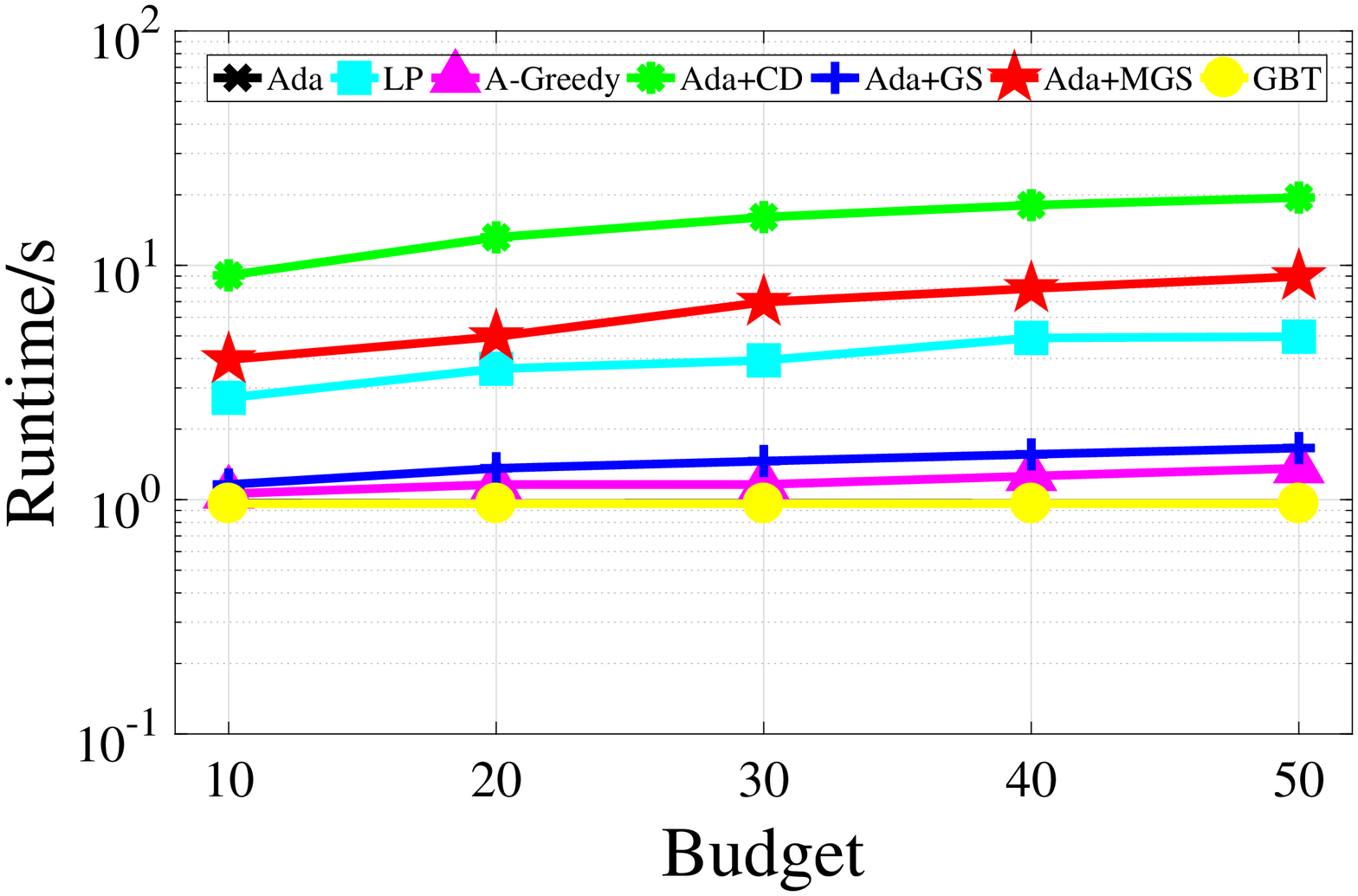}\label{wiki-vote_Ada_alpha=06-Runtime}}
            \vspace{-2mm}
            \centerline{\quad \footnotesize{$\alpha$=0.6}}
            \vspace{1mm}
            \end{minipage}
            \hspace{-1mm}
            \begin{minipage}[h]{0.24\textwidth}
            \centerline{\includegraphics[width=1\textwidth]{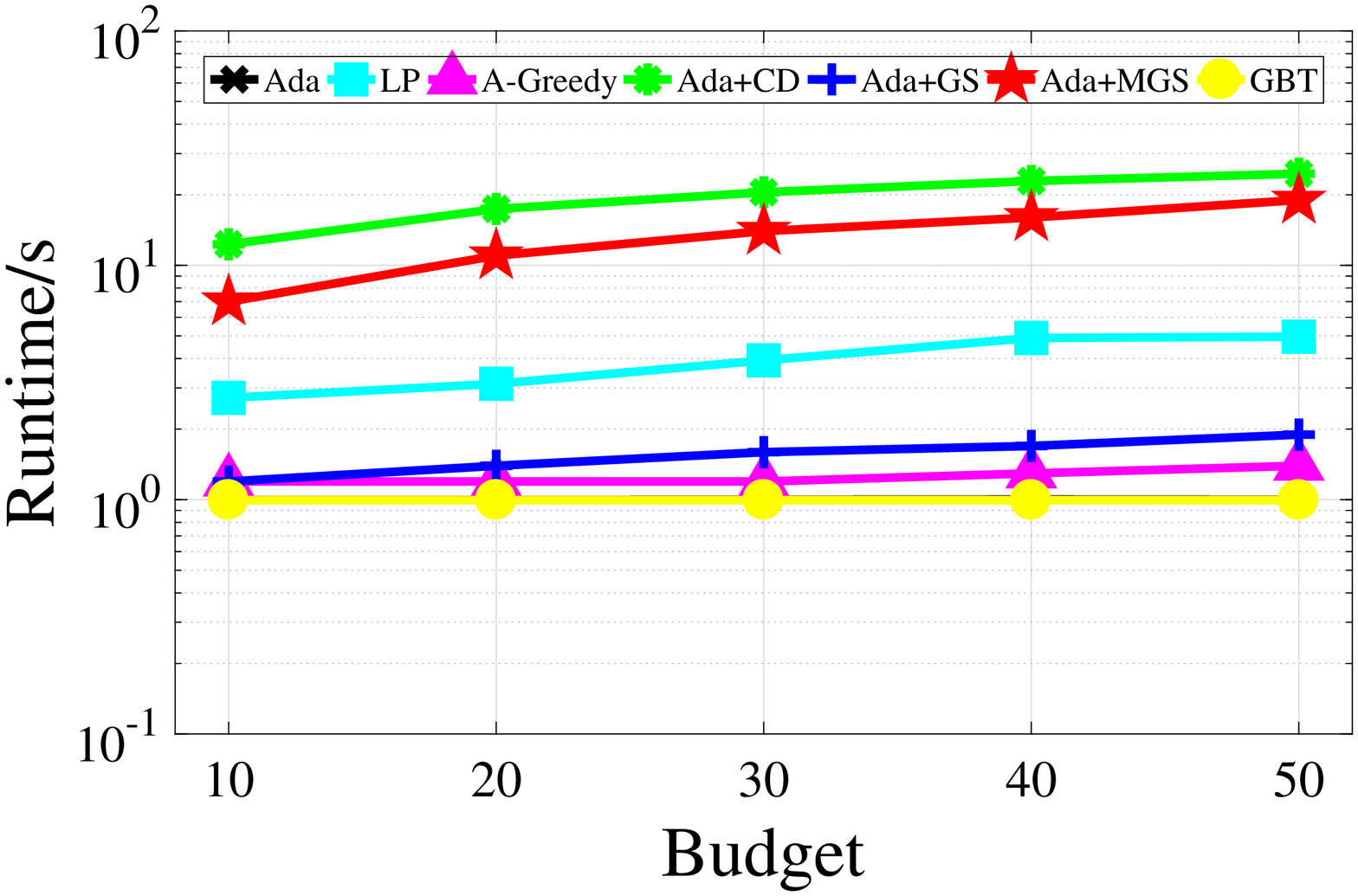}\label{wiki-vote_Ada_alpha=10-Runtime}}
            \vspace{-2mm}
            \centerline{\quad \footnotesize{$\alpha$=1.0}}
            \vspace{1mm}
            \end{minipage}
        }
    \hspace{-3mm}\vspace{-0.5mm}
    \subfigure[Ca-CondMat]
        {
            \begin{minipage}[h]{0.24\textwidth}
            \centerline{\includegraphics[width=1\textwidth]{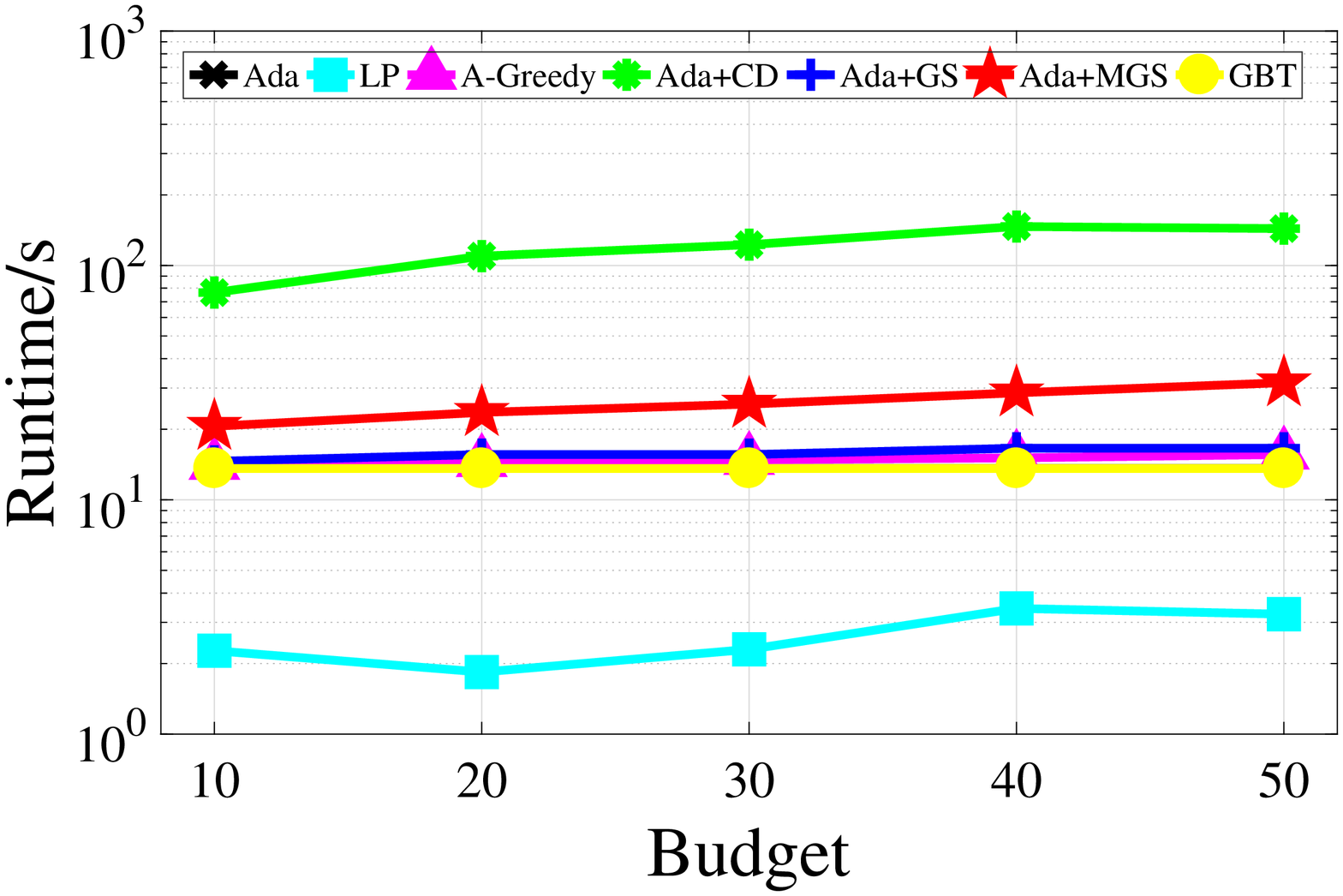}\label{Condmat_Ada_alpha=06-Runtime}}
            \vspace{-2mm}
            \centerline{\quad \footnotesize{$\alpha$=0.6}}
            \vspace{1mm}
            \end{minipage}
            \hspace{-1mm}
            \begin{minipage}[h]{0.24\textwidth}
            \centerline{\includegraphics[width=1\textwidth]{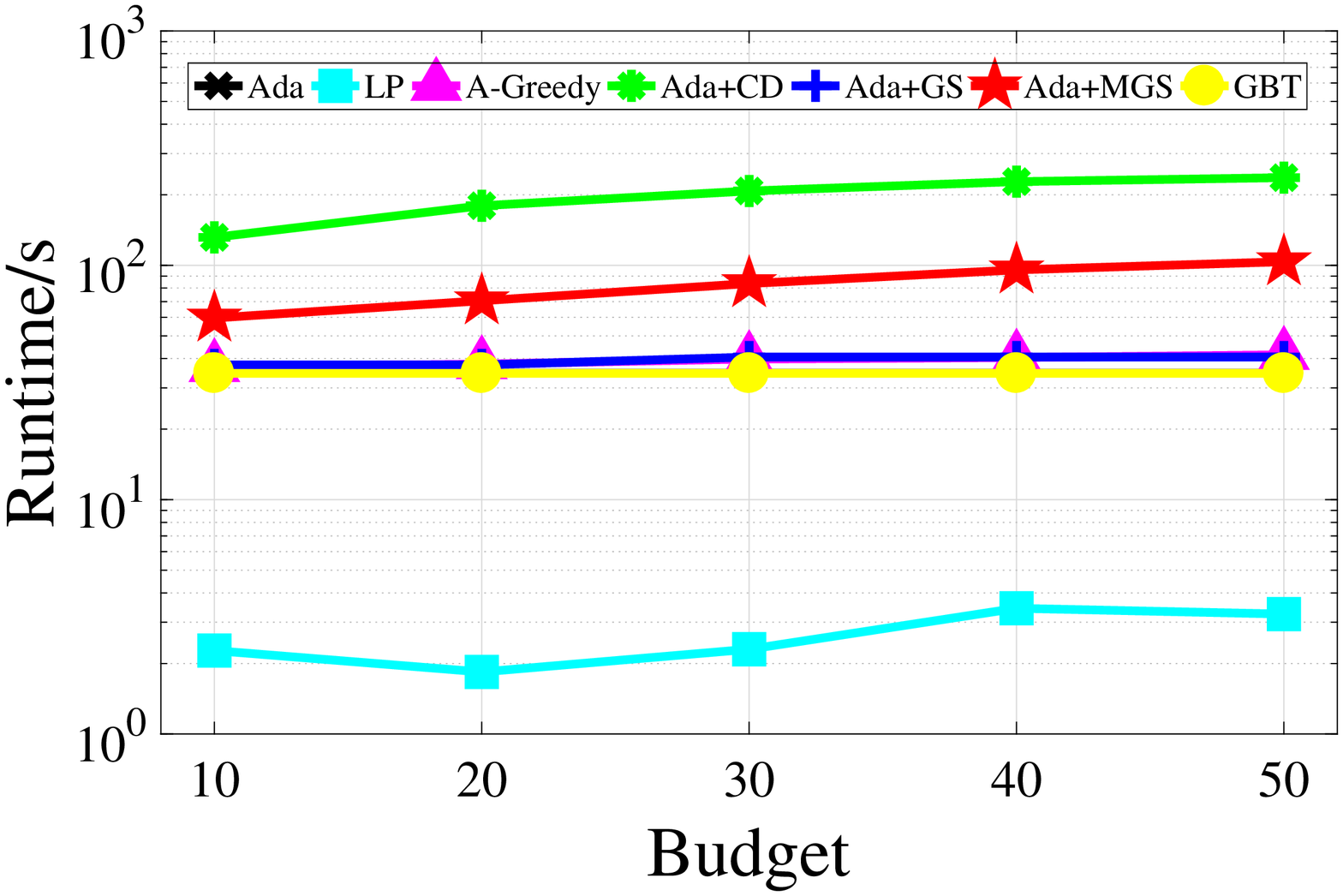}\label{Condmat_Ada_alpha=10-Runtime}}
            \vspace{-2mm}
            \centerline{\quad \footnotesize{$\alpha$=1.0}}
            \vspace{1mm}
            \end{minipage}
        }
      \vspace{-1.5mm}
      \caption{Running Time in the Adaptive Case.}\label{Run_time_in_Ada}
      \vspace{-2mm}
  \end{figure*}

\begin{table*}[h]
\caption{Sensitivity to the seed probability function}\label{table_sensitivity}
\vspace{-0.6cm}
\footnotesize
\begin{center}
\renewcommand\arraystretch{1}
\begin{tabular}{|p{1.3cm}<{\centering}|p{0.25cm}<{\centering}|p{0.65cm}<{\centering}|p{0.65cm}<{\centering}|p{1.15cm}<{\centering}|p{0.65cm}<{\centering}|p{0.65cm}<{\centering}|p{1.15cm}<{\centering}|p{0.65cm}<{\centering}|p{0.65cm}<{\centering}|p{1.15cm}<{\centering}|p{0.65cm}<{\centering}|p{0.65cm}<{\centering}|p{1.15cm}<{\centering}|}
\hline
\multirow{2}*{Dataset} & \multirow{2}*{B} & \multicolumn{3}{|c|}{2CD} & \multicolumn{3}{|c|}{Ada+CD} & \multicolumn{3}{|c|}{Ada+GS} & \multicolumn{3}{|c|}{Ada+MGS}\\
\cline{3-14}
\multirow{2}*{} & \multirow{2}*{} & S1 & S2 & Reduction & S1 & S2 & Reduction & S1 & S2 & Reduction & S1 & S2 & Reduction\\
\hline
\multirow{3}*{Wiki-Vote} & 10 & 67 & 57 & 14.9\% & 110 & 65 & 40.9\% & 79 & 70 & 11.4\% & 159 & 135 & 15.1\%\\
\cline{2-14}
\multirow{5}*{} &20& 146 & 98 & 32.9\% & 182 & 121 & 33.5\% & 125 & 97 & 22.4\% & 241 & 218 & 9.5\%\\
\cline{2-14}
\multirow{3}*{} &30& 171 & 119 & 30.4\% & 228 & 162 & 28.9\% & 201 & 170 & 15.4\% & 309 & 274 & 11.3\%\\
\cline{2-14}
\multirow{5}*{} &40& 239 & 177 & 25.9\% & 290 & 184 & 36.6\% & 251 & 206 & 17.9\% &349 & 330 & 5.4\%\\
\cline{2-14}
\multirow{3}*{} &50& 293 & 208 & 29\% & 334 & 205 & 38.6\% & 292 & 272 & 6.8\% & 397 & 379 & 4.5\% \\
\hline
\multirow{3}*{Condmat} & 10& 411 & 252 & 38.7\% & 618 & 321 & 48.1\% & 579 & 407 & 29.7\% & 665 & 566 & 14.9\%\\
\cline{2-14}
\multirow{5}*{} & 20 & 616 & 390 & 35.6\% & 953 & 558 & 41.4\% & 820 & 593 & 27.7\% & 985 & 872 & 11.5\%\\
\cline{2-14}
\multirow{3}*{} & 30 & 803 & 595 & 25.9\% & 1122 & 733 & 34.7\% & 1013 & 747 & 26.3\% & 1212 & 1099 & 9.3\%\\
\cline{2-14}
\multirow{5}*{} & 40 & 934 & 686 & 26.6\% & 1333& 814 & 38.9\% & 1184 & 881 & 25.6\% & 1483 & 1282 & 13.6\%\\
\cline{2-14}
\multirow{3}*{} & 50 & 1132 & 863 & 23.8\% & 1465 & 859 & 41.3\% & 1376 & 1029 & 25.2\% & 1685 & 1498 & 11.1\%\\
\hline
\multirow{3}*{Dblp} & 10 & 358 & 251 & 29.9\% & 685 & 266 & 46.6\% & 543 & 345 & 36.5\% & 859 & 703 & 18.2\%\\
\cline{2-14}
\multirow{5}*{} & 20 & 802 & 581 & 27.6\% & 1287 & 553 & 57\% & 891 & 621 & 30.3\% & 1501 & 1176 & 21.7\%\\
\cline{2-14}
\multirow{3}*{} & 30 & 1019 & 724 & 28.9\% & 1852 & 851 & 54\% & 1332 & 880 & 33.9\% & 1953 & 1612 & 17.5\%\\
\cline{2-14}
\multirow{5}*{} & 40 & 1308 & 954 & 27.1\% & 2261 & 983 & 56.5\% & 1735 & 1322 & 23.8\% & 2403 & 2119 & 11.8\%\\
\cline{2-14}
\multirow{3}*{} & 50 & 1644 & 1201 & 26.9\% & 2608 & 1191 & 54.3\% & 2125 & 1607 & 24.4\% & 2793 & 2440 & 12.6\%\\
\hline
\multirow{3}*{Livejournal} & 10 & 1564 & 1073 & 31.4\% & 4591 & 2524 & 45\% & 4481 & 2626 & 41.4\% & 4610 & 3632 & 21.2\%\\
\cline{2-14}
\multirow{5}*{} & 20 & 3193 & 2333 & 26.9\% & 6405 & 4325 & 32.5\% & 6040 & 4159 & 31.1\% & 7199 & 6843 & 4.9\%\\
\cline{2-14}
\multirow{3}*{} & 30 & 5340 & 3222 & 39.7\% & 8686 & 5148 & 40.7\% & 7067 & 5114 & 27.6\% & 9523 & 8785 & 7.7\%\\
\cline{2-14}
\multirow{5}*{} & 40 & 6935 & 5462 & 21.2\% & 10831 & 6590 & 39.2\% & 7969 & 6776 & 15\% & 11310 & 10793 & 4.6\%\\
\cline{2-14}
\multirow{3}*{} & 50 & 7464 & 6280 & 15.9\% & 12913 & 7246 & 56.1\% & 9000 & 7541 & 16.2\% & 12781 & 12719 & 0.5\%\\
\hline
\end{tabular}
\end{center}
\vspace{-0.2cm}
\end{table*}
\normalsize

\begin{table*}[h]
\caption{The impact of the friendship paradox phenomenon}\label{table_FP}
\vspace{-0.6cm}
\footnotesize
\begin{center}
\renewcommand\arraystretch{1}
\begin{tabular}{|p{1.3cm}<{\centering}|p{0.3cm}<{\centering}|p{0.75cm}<{\centering}|p{0.75cm}<{\centering}|p{0.9cm}<{\centering}|p{0.75cm}<{\centering}|p{0.75cm}<{\centering}|p{0.9cm}<{\centering}|p{0.75cm}<{\centering}|p{1.25cm}<{\centering}|p{0.9cm}<{\centering}|p{0.75cm}<{\centering}|p{1.25cm}<{\centering}|p{0.9cm}<{\centering}|}
\hline
\multirow{3}*{Dataset} & \multirow{3}*{B} & \multicolumn{6}{|c|}{Non-adaptive} & \multicolumn{6}{|c|}{Adaptive}\\
\cline{3-14}
\multirow{3}*{} & \multirow{3}*{} & \multicolumn{3}{|c|}{$|X|=100$} & \multicolumn{3}{|c|}{$|X|=1000$} &  \multicolumn{3}{|c|}{$|X|=100$} & \multicolumn{3}{|c|}{$|X|=1000$}\\
\cline{3-14}
\multirow{3}*{} & \multirow{3}*{} & CD & 2CD & Increase & CD & 2CD & Increase & Ada & Ada+MGS & Increase & Ada & Ada+MGS & Increase\\
\hline
\multirow{3}*{Wiki-Vote} & 10 & 27 & 67 &40 &34 & 50 & 16 & 68 & 159 & 91 & 149 & 205 & 56\\
\cline{2-14}
\multirow{5}*{} &20 & 47 & 146 & 99 & 60 & 106 & 46 & 98 & 241 & 143 & 260 & 301 & 41\\
\cline{2-14}
\multirow{3}*{} &30 & 66 & 190 & 124 & 84 & 149 & 65 & 121 & 309 & 188 & 336 & 382 & 46\\
\cline{2-14}
\multirow{5}*{} &40 & 85 & 239 & 154 & 110 & 220 & 110 & 139 & 349 & 210 & 398 & 439 & 41\\
\cline{2-14}
\multirow{3}*{} &50 & 101 & 293 & 192 & 128 & 270 & 142 & 146 & 397 & 251 & 451 & 489 & 38\\
\hline
\multirow{3}*{Condmat} & 10 &67 & 411 & 344 & 83 & 135 & 52 & 206 & 665 & 459 & 453 & 652 & 199\\
\cline{2-14}
\multirow{5}*{} & 20 & 130 & 616 & 486 & 156 & 295 & 139 & 292 & 985 & 693 & 725 & 1039 & 314\\
\cline{2-14}
\multirow{3}*{} & 30 & 182 & 803 & 621 & 234 & 558 & 324 & 348 & 1272 & 924 & 935 & 1314 & 379\\
\cline{2-14}
\multirow{5}*{} & 40 & 231 & 934 & 703 & 306 & 744 & 438 & 375 & 1483 & 1108 & 1105 & 1570 & 465\\
\cline{2-14}
\multirow{3}*{} & 50 & 276 & 1132 & 856 & 377 & 762 & 385 & 375 & 1685 & 1310 & 1258 & 1774 & 516\\
\hline
\multirow{3}*{Dblp} & 10 & 69 & 358 & 289 & 75 & 251 & 176 & 188 & 859 & 671 & 412 & 1126 & 714\\
\cline{2-14}
\multirow{5}*{} & 20 & 131 & 802 & 671 & 146 & 680 & 534 & 275 & 1501 & 1226 & 657 & 2007 & 1350\\
\cline{2-14}
\multirow{3}*{} & 30 & 185 & 1019 & 798 & 213 & 996 & 783 & 332 & 1953 & 1621 & 850 & 2763 & 1913\\
\cline{2-14}
\multirow{5}*{} & 40 & 228 & 1308 & 1080 & 281 & 1157 & 876 & 360 & 2403 & 2043 & 1011 & 3417 & 2406\\
\cline{2-14}
\multirow{3}*{} & 50 & 268 & 1644 & 1376 & 346 & 1424 & 1328 & 360 & 2793 & 2433 & 1159 & 4057 & 2898\\
\hline
\multirow{3}*{Livejournal} & 10 & 138 & 1564 & 1426 & 237 & 1508 & 1271 & 408 & 4436 & 4028 & 980 & 24863 & 23883\\
\cline{2-14}
\multirow{5}*{} & 20 & 256 & 3193 & 2937 & 425 & 2523 & 2098 & 550 & 7199 & 6649 & 1603 & 30029 & 28426\\
\cline{2-14}
\multirow{3}*{} & 30 & 363 & 5340 & 4977 & 580 & 5389 & 4809 & 680 & 9523 & 8843 & 2089 & 34059 & 31970\\
\cline{2-14}
\multirow{5}*{} & 40 & 464 & 6935 & 6471 & 713 & 6101 & 5388 & 744 & 11310 & 10566 & 2465 & 38132 & 35667\\
\cline{2-14}
\multirow{3}*{} & 50 & 556 & 7464 & 6908 & 863 & 7261 & 6398 & 749 & 12781 & 12032 & 2788 & 41280 & 38492\\
\hline
\end{tabular}
\end{center}
\vspace{-0.2cm}
\end{table*}
\normalsize

  \subsubsection{Influence Spread}

    The expected influence spread of the non-adaptive case and the adaptive case is presented in Fig. \ref{Influence_spread_in_Non-ada} and Fig. \ref{Influence_spread_in_Ada} respectively.

  \begin{itemize}[leftmargin=*]
    \item \textbf{Non-adaptive Case}
  \end{itemize}

    From Fig. \ref{Influence_spread_in_Non-ada}, we can see that CD performs better than IM, since the discount allocation in CD is allowed to be fractional and thus more fine-grained. It is a little surprising that the simple two-stage algorithm RF shows larger influence spread than the elaborate CD in most settings, except two points in Wiki-Vote ($\alpha=0.6$, $B=10 \text{ and } 30$) due to its randomness. The reason is that RF has access to influential neighbors. As can be seen, the two-stage coordinate descent algorithm outperforms the other three algorithms in all the settings. The ratio between 2CD and the second best result varies from 1.3 to nearly 3. This result is easy to understand since 2CD not only has access to the influential neighbors but also makes refinements in both stages.

    Moreover, although the scale of the four datasets is quite different, we find that the influence spread of one-stage algorithms (i.e., CD and IM) is nearly in the same scale, while the influence spread of two-stage algorithms scales as the size of networks. We can infer that simply allocating discounts to initially accessible users restricts the spread of influence. Meanwhile, exploiting the friendship paradox helps expand the influence spread.

  \begin{itemize}[leftmargin=*]
    \item \textbf{Adaptive Case}
  \end{itemize}

    As can be seen from Fig. \ref{Influence_spread_in_Ada}, in most settings the one-stage adaptive algorithm Ada has the smallest influence spread, since only initially accessible users are seeded. In some settings, the LP algorithm is even worse than Ada, especially in small datasets and small budgets, since in LP the degree is directly regarded as the influence. This treatment helps with the complexity but losses accuracy, and causes possible blindness when selecting seeds. In A-Greedy, the influence is unbiasedly estimated by the hyper-graph. The result of A-Greedy is thus better than that of LP. Our proposed three algorithms achieve larger influence spread than the above three algorithms. The reason is three-fold. First, the discount is fractional and thus more fine-grained. Second, the influence is accurately estimated. Third, the FP phenomenon is leveraged. There is an evident gap between Ada+MGS and Ada+GS. This phenomenon conforms with the theoretical result that the approximation ratio of Ada+MGS is larger than Ada+GS. In the experiment, Ada+CD shows smaller influence spread than Ada+GS and Ada+MGS. However, we can not say that Ada+CD is definitely inferior to the other two algorithms, since its performance is closely related to the initial allocation. With a better initial allocation and more iterations, Ada+CD could deliver a better performance.

    Comparing our four proposed algorithms, we can see that adaptive algorithms yield larger influence spread than 2CD, except some minor points in the Wiki-Vote dataset. The reason is that adaptive algorithms make the most of the remaining budget by seeding the next user wisely based on the observation on the previous influence spread.

  \subsubsection{Scalability}

      The scalability of algorithms in the non-adaptive case and the adaptive case are reported in Fig. \ref{Running_Time_in_Non-ada} and Fig. \ref{Run_time_in_Ada} respectively. GBT is the building time of reversely reachable sets.

  \begin{itemize}[leftmargin=*]
    \item \textbf{Non-adaptive Case}
  \end{itemize}

    According to Fig. \ref{Running_Time_in_Non-ada}, the running time of IM and CD is almost the same as the GBT, since computing the allocation in 100 users will not take too much time. With the increase of the network size, the gap between 2CD and GBT decreases. In the smallest dataset Wiki-Vote, the running time of 2CD is about 20 times that of GBT. However, the ratio becomes less than 1.3 in the largest dataset soc-Livejournal. The reason is that the computation cost of the hyper-graph is high, while the execution of the 2CD algorithm is relatively efficient. It is worth noting that, the RF algorithm has the least running time and the best scalability, since RF does not need to build the hyper-graph. Thus, the simple two-stage algorithm RF outperforms IM and CD in both influence spread and scalability.

  \begin{itemize}[leftmargin=*]
    \item \textbf{Adaptive Case}
  \end{itemize}

    It is shown in Fig. \ref{Run_time_in_Ada} that the running time of the adaptive algorithms follows the sequence: Ada+CD $>$ Ada+MGS $>$ Ada+GS $>$ A-Greedy $>$ Ada. We next explain the sequence in an increasing order. The running time of Ada is the smallest since it does not need to consider the allocation in neighbors. A-Greedy computes the allocation in both stages, so the running time is higher than Ada. The running time of Ada+GS is larger than that of A-Greedy, since fine-grained discounts incur more sophisticated computation. It is not a surprise to see that Ada+MGS takes more time than Ada+GS, since enumeration is applied in Ada+MGS. As for Ada+CD, to estimate the benefit of each action, numerous coordinate descent algorithms are carried out in stage 2, incurring tremendous iterations. Thus, Ada+CD is less efficient than Ada+MGS. Similar to the non-adaptive case, the gap between algorithms and GBT is decreasing as the network size grows, due to the same reason in the non-adaptive case. The running time of LP is almost in the same scale over the four satasets, since it is not based on the hyper-graph and only needs to solve an LP problem.

    From the results in both cases, we find that most time is spent on building the hyper-graph. The execution time (GBT not included) of the proposed algorithms is less than one hour in all the cases except one point ($B=50$ in Fig. \ref{livejournal_Non-ada_alpha-Runtime=10}). Generally, the running time of 2CD is larger than Ada+MGS while smaller than Ada+CD. Thus, we can roughly obtain the sequence of running time of the proposed algorithms: Ada+CD $>$ 2CD $>$ Ada+MGS $>$ Ada+GS.

  \subsubsection{Sensitivity}

    We also test the sensitivity of our proposed algorithms with respect to different settings of seed probability functions. To this end, we introduce a second setting with different portion of seed probability functions. The previous setting in Section \ref{Experimental_setup} is denoted as Setting 1 (S1). In Setting 2 (S2), 65\% users are assigned with $p_u(c_u)=2c_u-c_u^2$, and 20\% users with $p_u(c_u)=c_u$, and 15\% users with $p_u(c_u)=c_u^2$. The algorithms are run again in Setting 2.

    Table \ref{table_sensitivity} reports the influence spread of the proposed algorithms with $\alpha=1.0$. As can be seen, the influence spread of the four algorithms decreases in setting 2. This phenomenon indicates that users are harder to satisfy under Setting 2. In terms of the ability to cope with the change of seed probability functions, 2CD shows even better performance than Ada+CD and comparable performance with Ada+GS. The possible reason is that 2CD makes refinement over a large amount of users in $N(S)$, while Ada+CD and Ada+GS only allocate discounts to neighbors of one agent each time. Thus, it is easier for 2CD to find an alternative user to seed when the seed probability function of a user becomes not favorable. However, Ada+MGS shows the best performance in coping with the change of settings. The explanation is that the enumeration process helps find good action combinations in Setting 2.

  \subsubsection{Impact of the Friendship Paradox Phenomenon}

    In this part, we evaluate the impact of the FP phenomenon. The algorithms are evaluated under two settings, i.e., $|X|=100 \text{ and } 1000$. To show the effect of the FP phenomenon, we compare the influence spread between one-stage algorithms and two-stage algorithms. In the non-adaptive case, CD and 2CD are selected for experiment. In the adaptive case, the only one-stage algorithm Ada is tested and Ada+MGS is selected due to its impressive performance.



    The influence spread is shown in Table \ref{table_FP}. Regardless of the size of $X$, two-stage algorithms show larger influence spread than one-stage algorithms by exploiting the influential neighborhood. We next focus on the influence spread of algorithms in the larger $|X|$. For the two one-stage algorithms, when $|X|=1000$, the influence spreads both become larger, since the number of influential users is likely to be larger in a larger $X$. However, the performance of two-stage algorithms is different. When $|X|=1000$, the influence spread of 2CD becomes smaller, while Ada+MGS shows even better performance. We further compare the increase of influence spread brought by two-stage algorithms. In the two relatively smaller datasets, i.e., Wiki-Vote and Ca-Condmat, the increases of 2CD and Ada+MGS both become smaller when $|X|=1000$. However, in Dblp and LiveJournal, the increase of Ada+MGS is larger when $|X|=1000$, while the increase of 2CD is still smaller. This observation indicates that Ada+MGS has a better ability to utilize the friendship paradox phenomenon.

\section{Conclusion and Future Work}\label{conclusion}
    This paper studies the influence maximization problem with limited initially accessible users. To overcome the access limitation, we propose a new two-stage seeding model with the FP phenomenon embedded, where neighbors are further seeded. Based on this model, we solve the limited influence maximization problem under both non-adaptive and adaptive cases. In the non-adaptive case, we examine the properties of this problem and establish a two-stage coordinate descent framework to determine the discount allocation in two stages. In the adaptive case, we first consider the discrete-continuous setting and design the adaptive greedy algorithm with theoretical guarantee. Then, in the discrete-discrete setting, the allocation in stage 2 is considered to be discrete. Accordingly, two algorithms are proposed based on greedy selection. Finally, extensive experiments are carried out on real-world datasets to evaluate the performance of the proposed algorithms.
        Moreover, our work is only a primary study into the two-stage IM, since the derivation of diffusion probabilities is not considered. While by serving as a subroutine, our study would continue to benefit the design of effective online algorithms which consider the learning of diffusion parameters.


    In the future, we would like to devote to finding better allocations than the convergent one to improve the coordinate descent algorithm. In the adaptive case, it is implicitly assumed that we have enough time to observe the whole diffusion process, which may take lots of time and thus impractical. It is worthwhile to study the problem when only part of the diffusion is observed in each round. In footnote \ref{footnote_fix_budget}, although we proposed solutions to the case of unified budget, it is still necessary to treat it as an individual problem and more effective algorithms could be designed.


%
\footnotesize
\bibliographystyle{IEEEtranN}
\bibliography{reference}
\normalsize
%
%

\begin{IEEEbiography}[{\includegraphics[width=1in,height=1.25in,clip,keepaspectratio]{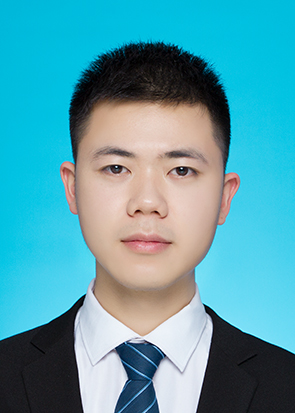}}]{Chen Feng}
received his B.E. degree in Communication Engineering from Tianjin University, China, in 2016. He is currently pursuing the Ph.D. degree in Electronic Engineering at Shanghai Jiao Tong University, Shanghai, China. His current research interests are in the area of social networks and information diffusion.
\end{IEEEbiography}


\begin{IEEEbiography}[{\includegraphics[width=1in,height=1.25in,clip,keepaspectratio]{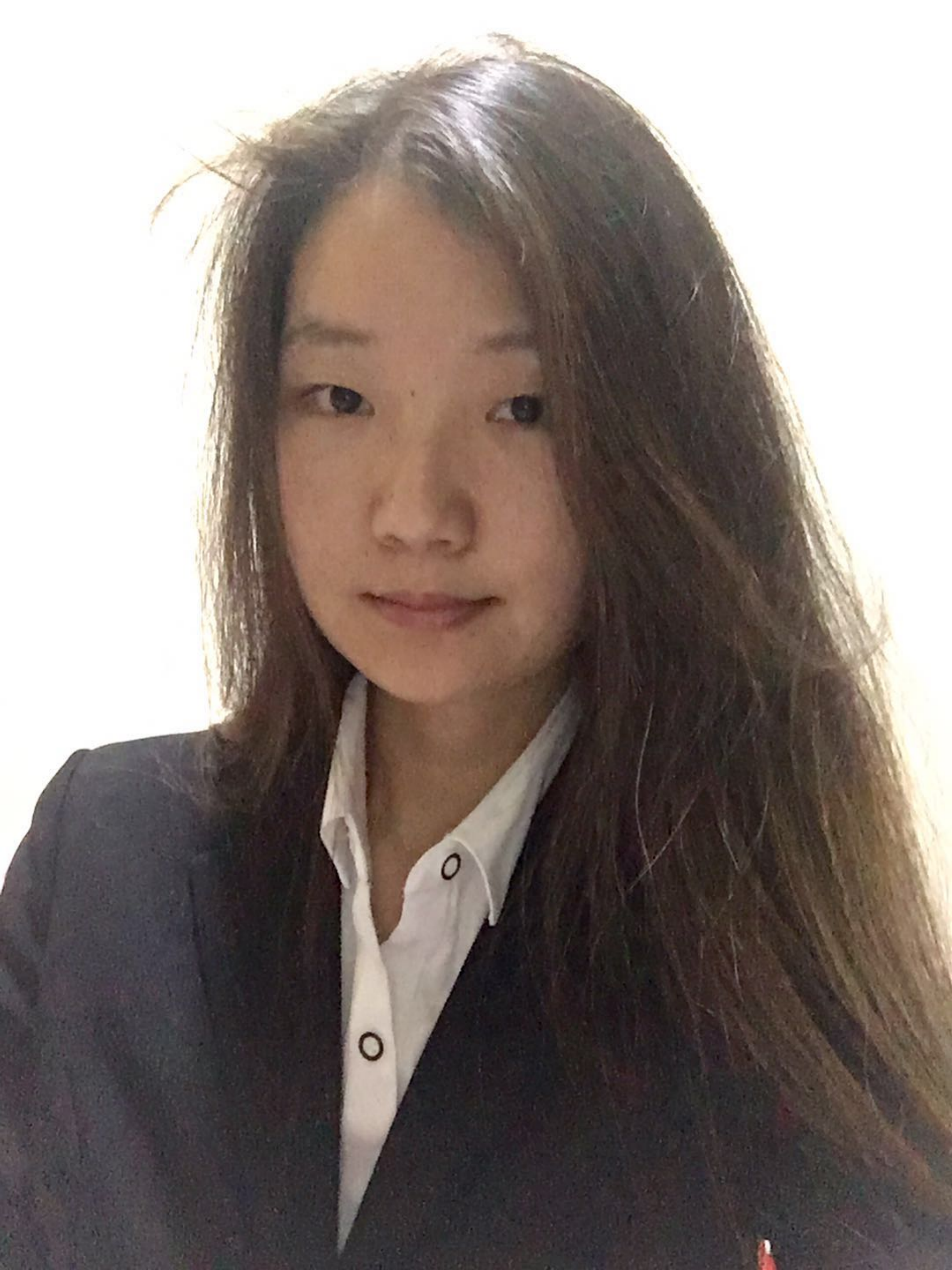}}]{Luoyi Fu}
received her B.E. degree in Electronic Engineering from Shanghai Jiao Tong University, China, in 2009 and Ph.D. degree in Computer Science and Engineering in the same university in 2015. She is currently an Associate Professor in Department of Computer Science and Engineering in Shanghai Jiao Tong University. Her research of interests are in the area of social networking and big data, scaling laws analysis in wireless networks, connectivity analysis and random graphs.
\end{IEEEbiography}

\begin{IEEEbiography}[{\includegraphics[width=1in,height=1.25in,clip,keepaspectratio]{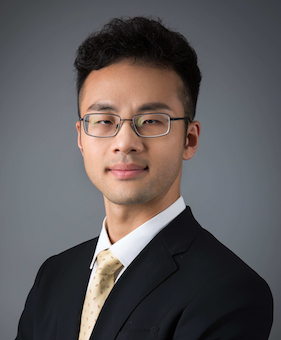}}]{Bo Jiang}
received the B.S. degree in Electronic Engineering from Tsinghua University, Beijing, China in 2006, the M.S. degree in Electrical and Computer Engineering from the University of Massachusetts Amherst in 2008 and the Ph.D. degree in Computer Science from the same university. He is currently an Associate Professor in the John Hopcroft Center in Shanghai Jiao Tong University. His research includes network modeling and analysis, wireless networks, and network science.
\end{IEEEbiography}

\begin{IEEEbiography}[{\includegraphics[width=1in,height=1.25in,clip,keepaspectratio]{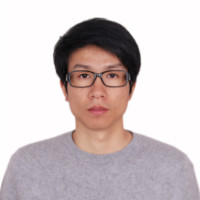}}]{Haisong Zhang}
received his B.E. degree in Computer Science and Technology from Xidian University, China, in 2011 and the M.E. Degree in Software Engineering from the same university in 2014. He is currently a Senior Researcher in Tencent AI Lab. His research interests include natural language understanding and dialogue system etc..
\end{IEEEbiography}


\begin{IEEEbiography}[{\includegraphics[width=1in,height=1.25in,clip,keepaspectratio]{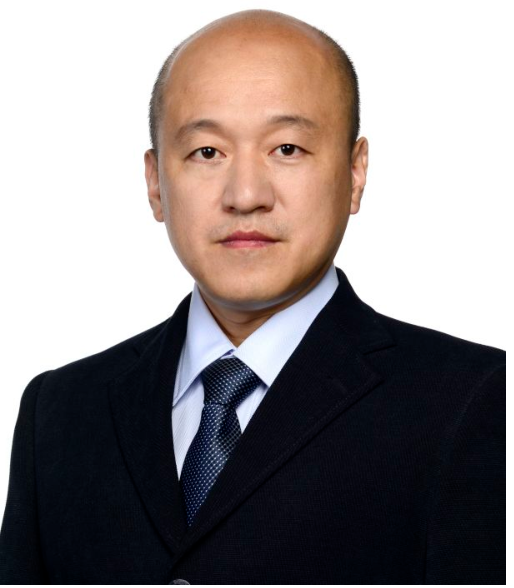}}]{Xinbing Wang}
 received the B.S. degree (with hons.) in automation from Shanghai Jiao Tong University, China, in 1998, the M.S. degree in computer science and technology from Tsinghua University, China, in 2001, and the Ph.D. degree with a major in electrical and computer engineering and minor in mathematics from North Carolina State University, in 2006. Currently, he is a Professor in the Department of Electronic Engineering, and Department of Computer Science, Shanghai Jiao Tong University, China. Dr. Wang has been an Associate Editor for IEEE/ACM TRANSACTIONS ON NETWORKING, IEEE TRANSACTIONS ON MOBILE COMPUTING, and ACM Transactions on Sensor Networks. He has also been the Technical Program Committees of several conferences including ACM MobiCom 2012,2014, ACM MobiHoc 2012-2017, IEEE INFOCOM 2009-2017.
\end{IEEEbiography}

\begin{IEEEbiography}[{\includegraphics[width=1in,height=1.25in,clip,keepaspectratio]{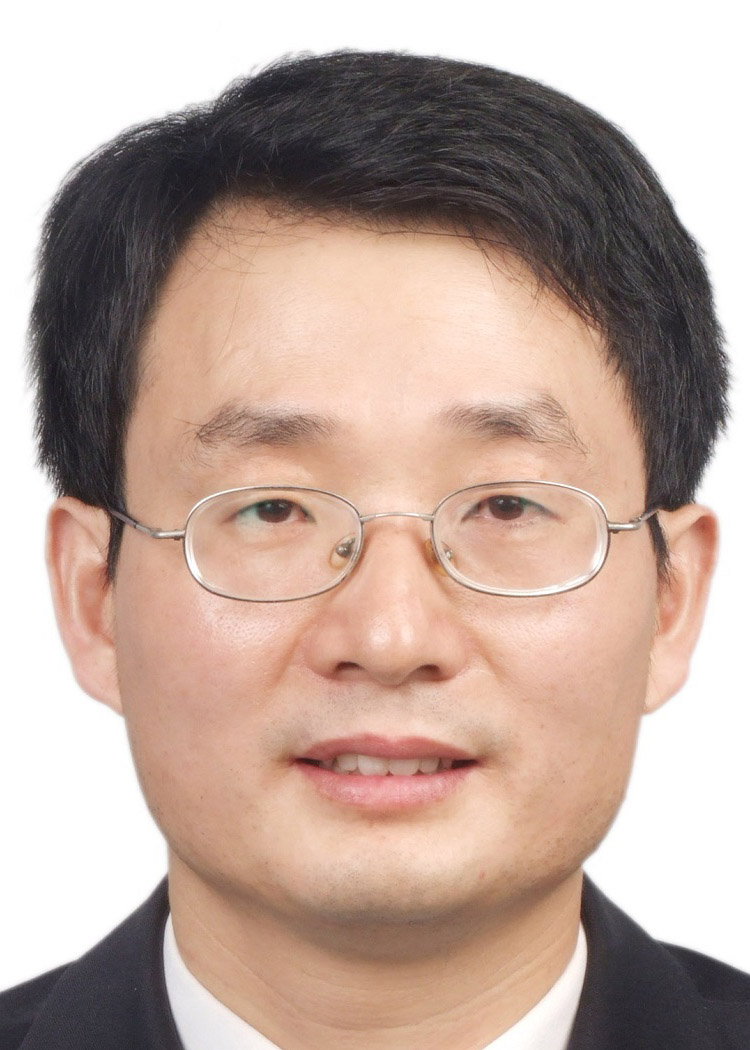}}]{Feilong Tang}
received the Ph.D. degree in computer science and technology from Shanghai Jiao Tong University, in 2005. He was a Japan Society for the Promotion of Science (JSPS) Postdoctoral Research Fellow. Currently, he works with the School of Software, Shanghai Jiao Tong University. His research interests include cognitive and sensor networks, protocol design for communication networks, and pervasive and cloud computing. He has published more than 100 papers in journals and international conferences and works as a PI of many projects such as National Natural Science Foundation of China (NSFC) and National High-Tech R\&D Program (863 Program) of China. 
\end{IEEEbiography}

\begin{IEEEbiography}[{\includegraphics[width=1in,height=1.25in,clip,keepaspectratio]{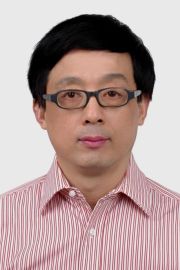}}]{Guihai Chen}
received the B.S. degree from Nanjing University, the M.E. degree from Southeast University, and the Ph.D. degree from The University of Hong Kong. He visited the Kyushu Institute of Technology, in 1998, as a Research Fellow, and the University of Queensland, in 2000, as a Visiting Professor. From 2001 to 2003, he was a Visiting Professor with Wayne State University. He is currently a Distinguished Professor and a Deputy Chair with the Department of Computer Science, Shanghai Jiao Tong University. His research interests include wireless sensor networks, peer-to-peer computing, and performance evaluation. He has served on technical program committees of numerous international conferences.
\end{IEEEbiography}

\newpage

\section*{\Large{Supplimentary Material}}
Due to space limitation, some illustrations, proofs and experimental results are omitted in the main paper, and we provide them in this supplemental file for completeness.

\setcounter{section}{0}

\section{Seeding Examples}
  In this section, for readers' comprehension, we present two seeding examples to illustrate the non-adaptive seeding process and the adaptive process respectively. For ease of illustration, we assume $B_1=B_2=1$ and the discount rate in the adaptive case is $D=\{0.5,1.0\}$.

\subsection{Non-adaptive Seeding Example}
\begin{figure}[h]
\centering
\vspace{-0.3cm}
\includegraphics[width=0.5\textwidth]{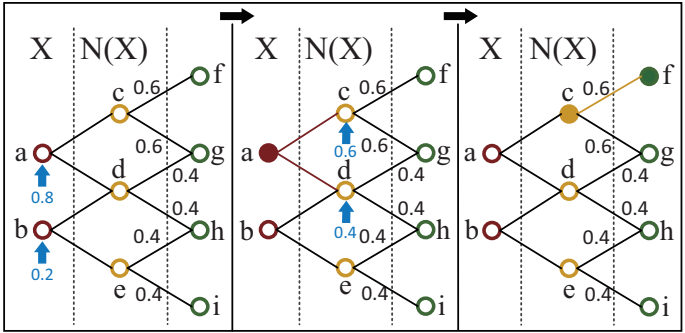}\vspace{-0.3cm}
\caption{An example of the non-adaptive seeding process.}\vspace{-0.3cm}\label{Fig_non-ada_seeding_eg}
\end{figure}

  In the recruitment stage, user $a$ seems to be more profitable, since $a$ is able to reach the influential user $c$. Thus, we allocate discount 0.8 to user $a$ and 0.2 to user $b$ in stage 1. Budget $B_1$ is used up. Suppose user $a$ becomes an agent while $b$ does not. Then, we reach users $c$ and $d$ via user $a$. In the trigger stage, we distribute budget $B_2=1$ to newly reachable users by allocating discount 0.6 to user $c$ and 0.4 to user $d$. User $c$ becomes the seed and then influence diffusion starts from it. Finally, user $f$ gets influenced.

\subsection{Adaptive Seeding Example}

\begin{figure}[h]
\centering
\vspace{-0.3cm}
\includegraphics[width=0.5\textwidth]{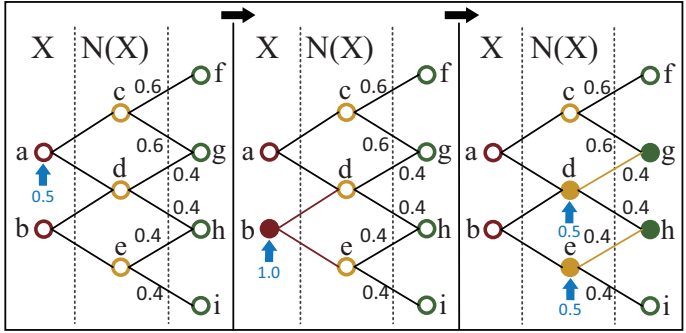}\vspace{-0.3cm}
\caption{An example of the adaptive seeding process.}\vspace{-0.3cm}\label{Fig_ada_seeding_eg}
\end{figure}

  In the adaptive case, initially accessible users are seeded sequentially. In the first round, we adopt the action $(a,0.5)$, i.e., seeding user $a$ with discount 0.5. Unfortunately, $a$ refuses the discount. Then, we move to the second round, where the remaining budget $B_1$ is still 1. We adopt the action $(b,1.0)$ and user $b$ accepts the discount. Then, $b$'s neighbors $d$ and $e$ become reachable. Both $d$ and $e$ are provided with discount 0.5 from budget $B_2$. Suppose $d$ and $e$ both become the seed and the influence diffusion starts from them. Finally, users $g$ and $h$ get influenced.

\section{Missing Proofs}
  In this section, we present the omitted proofs for Theorems and Lemmas.

\subsection{Proof of Lemma 1}
We prove lemma 1 by the probabilistic method. Let $d(\cdot)$ denote the degree of a node and $\bar{d}(\cdot)$ denote the average degree of nodes in a set. The expected average degree of nodes in $X$ is
     \begin{equation*}
       E[\bar{d}(X)]=E\left[\frac{\sum_{v\in X}d(v)}{|X|}\right]=\frac{1}{|X|}\sum_{v\in X}E[d(v)]=E[d(v)].
     \end{equation*}

     Recall that nodes in $V$ are randomly selected into $X$ with probability $p\rightarrow 0$. For node $v$ in $V$, the probability that it is in $N(X)$ is $(1-p)[1-(1-p)^{d(v)}]$, i.e., $v$ is not in $X$, but at least one of its neighbors is in $X$. Since $p\rightarrow 0$, the probability can be approximated by $p(1-p)d(v)$. Thus, the size of the neighborhood of $X$ can be denoted as $|N(X)|=\sum_{v\in V}p(1-p)d(v)$. The sum of degrees of nodes in $N(X)$ is $\sum_{v\in V}p(1-p)d^2(v)$. The expected average degree of nodes in $N(X)$ is
     \begin{equation*}
       E[\bar{d}(N(X))]=\!E\!\left[\frac{\sum_{v\in V}p(1-p)d^2(v)}{\sum_{v\in V}p(1-p)d(v)}\right]\!\!=E\!\!\left[\frac{\sum_{v\in V}d^2(v)}{\sum_{v\in V}d(v)}\right]
     \end{equation*}
     Applying Cauchy-Schwartz inequality, we have $\sum_{v\in V}d^2(v)\geq \frac{1}{n}[\sum_{v\in V}d(v)]^2$. Thus,
     $E[\bar{d}(N(X))]\geq E\left[\frac{1}{n}\sum_{v\in V}d(v)\right]=E[d(v)]$.
     Therefore, $E[\bar{d}(N(X))]\geq E[\bar{d}(X)]$. This completes the proof.

\subsection{Proof of Lemma 2}
  We first show the NP-hardness of the discrete two-stage non-adaptive influence maximization by reduction from the NP-complete \textit{Set Cover} problem. Consider an arbitrary instance of Set Cover. Let $A=\{A_1, A_2, \cdots, A_m\}$ be a family of subsets of a ground set $U=\{u_1,u_2, \cdots, u_n\}$, satisfying $\bigcup_{i=1}^{m}A_i=U$. The problem is whether there exists a subfamily $C\subseteq A$, $|C|=k$, whose union is $U$.

  (1) We proceed to construct a discrete non-adaptive influence maximization problem corresponding to the Set Cover problem. Define a graph $G$ where there is only one node $x$ in $X$. For each subset $A_i$, there is a node $i$ corresponding to it, with a directed edge from $x$ to $i$. For each element $u_j$, there is a node $j$ corresponding to it. If $u_j\in A_i$, then there is an edge between $i$ and $j$. All edges are associated with probability 1. The budget is $k+1$. In the two-stage setting, one budget must be spent on initially reachable user $x$ to reach its neighbors. Thus, the Set Cover problem is equivalent to selecting $k$ nodes in $N(x)$ to influence the $n$ nodes corresponding to $U$.

  (2) If there is a solution to the discrete maximization problem, then we can solve the Set Cover problem by selecting corresponding $k$ subsets in $A$ to cover the $n$ elements in $U$.

  (3) It is obvious that the reduction from Set Cover can be completed in polynomial time by traversing $A$ and $U$.

  Thus, we prove that the discrete case is NP-hard. We can show that our continuous non-adaptive influence maximization is NP-hard by reduction from the discrete case, which is trivial and thus omitted. Intuitively, the discrete IM is only a special case of our continuous setting.

\subsection{Proof of Lemma 3}
  If $T_1=T_2$, Lemma 3 holds definitely. We focus on $T_1\subsetneqq T_2$.

  To begin with, we prove a simple case where $T_2=T_1\cup \{v\}$. Assume the optimal discount allocation in $T_1$ is $C_2^{*}$. Since all users in $T_1$ are also in $T_2$ and the budget is the same, $C_2^{*}$ is a feasible allocation in $T_2$ where the discount of user $v$ is 0 and the discounts of other users are the same as $C_2^{*}$. Then, we have $Q(C_2^*; T_2)=\max Q(C_2; T_1)$. Due to the optimality of $\max Q(C_2; T_2)$, we have $\max Q(C_2; T_2)\geq Q(C_2^*; T_2)= \max Q(C_2; T_1)$, i.e. $\max Q(C_2; T_1\cup \{v\})\geq \max Q(C_2; T_1)$. By the transitivity of $\geq$, we know that $\max Q(C_2; T_2)\geq \max Q(C_2; T_1)$ holds for all $T_1\subsetneqq T_2$. Thus, the proof of Lemma 3 is completed.

\subsection{Proof of Theorem 1}
(1) We start from a simple case, where only one element is different between $C_2$ and $C_2^\prime$. Then, the result can be extended to general cases by the transitivity of $\geq$.

Assume that only the discount of user $u\in N(S)$ is different. $u$ gets discount $c_u$ in $C_2$ while $c_u^\prime$ in $C_2^\prime$ and $c_u\geq c_u^\prime$. By the definition of $Q(C_2; N(S))$, it can be written as
\begin{equation*}
\begin{split}
Q(C_2; N(S))&=\sum_{T \subseteq N(S)\setminus \{u\}}\!\!\!\!\!\!P_{r}(T;C_2,N(S)\!\!\setminus \!\!\{u\})\cdot\\
           &\Big\{[1-p_{u}(c_u)]I(T)+p_{u}(c_u)I(T\cup \{u\}) \Big\}.
\end{split}
\end{equation*}
After a simple calculation, we have
\begin{align*}
\begin{split}
Q(C_2; N(S)&)-Q(C_2^\prime; N(S))=\\
&\!\!\!\!\!\!\sum_{T \subseteq N(S)\setminus \{u\}}\!\!\!\!\!\!P_{r}(T;C_2,N(S)\!\!\setminus \!\!\{u\})\cdot\\
                              &\Big \{[p_{u}(c_u)-p_{u}(c_u^\prime)][I(T\cup \{u\})-I(T)\Big \}.
\end{split}
\end{align*}
Since $I(\cdot)$ is nondecreasing under the independent cascade model and $p_{u}(\cdot)$ is nondecreasing as well by assumption, we have $Q(C_2; N(S))\geq Q(C_2^\prime; N(S))$.

(2) Following similar technique, we prove the result when there is only one element different between $C_1$ and $C_1^\prime$. Assume the different element is $u$ and $u$ gets discount $c_u$ in $C_1$ while $c_u^\prime$ in $C_1^\prime$ with $c_u\geq c_u^\prime$. By the definition of $f(C_1;X)$, $f(C_1;X)$ can be written as
\begin{equation*}
\begin{split}
f(C_1;X)&=\sum_{S \subseteq X\setminus \{u\}}P_{r}(S;C_1,X\setminus \{u\})\\
           &\Big\{p_{u}(c_u)\max Q(C_2; N(S\cup \{u\}))\\
            &+ [1-p_{u}(c_u)]\max Q(C_2; N(S))\Big\}.
\end{split}
\end{equation*}
After a simple calculation, we have
\begin{equation*}
\begin{split}
f(C_1;X)-&f(C_1^\prime;X)=\sum_{S \subseteq X\setminus \{u\}}P_{r}(S;C_1,X\setminus \{u\})\\
                               &\Big \{ [p_{u}(c_u)-p_{u}(c_u^\prime)][\max Q(C_2; N(S\cup \{u\}))\\
                               &-\max Q(C_2; N(S))]\Big \}.
\end{split}
\end{equation*}
   Note that the budget in stage 2 is the same under $C_1$ and $C_1^\prime$. According to Lemma 3, $\max Q(C_2; N(S\cup \{u\}))-\max Q(C_2; N(S))\geq 0$. According to the monotonicity of $p_u(\cdot)$ and $c_u\geq c_u^\prime$, we have $f(C_1;X)\geq f(C_1^\prime;X)$.

Combining case (1) and case (2), we complete the proof.

\subsection{Proof of Lemma 4}
  It is easy to verify the adaptive monotonicity since under any realization, since the influence spread will not decrease when more users are seeded.

  To prove the adaptive submodularity, according to its definition, it is equivalent to prove that $\Delta(y|\psi)\geq \Delta(y|\psi^\prime)$ holds for any $\psi \subseteq \psi^\prime$ and $y \in Y\setminus dom(\psi)$. Note that $\psi$ is a process with sequence recording the actions adopted. $\psi \subseteq \psi^\prime$ means $\psi$ is a subprocess of $\psi^\prime$, i.e., $\psi$ is a history of $\psi^\prime$ and $\psi^\prime$ went through all what $\psi$ has experienced. Therefore, the seeding result of nodes and states of edges observed by $\psi$ are the same in $\psi^\prime$.

  We first introduce some notations. The diffusion realization $\phi$ (resp. $\phi^\prime$) is a function of the states of edges, which are denoted as a series of random variables $X=\{X_{ij}, (i,j)\in E\}$ (resp. $X^\prime=\{X_{ij}^\prime, (i,j)\in E\}$).




  We attempt to define a coupled distribution $\rho((\lambda,\phi),(\lambda^\prime,\phi))$ over two pairs of realizations $(\lambda,\phi)\sim \psi$ and $(\lambda^\prime,\phi^\prime)\sim \psi^\prime$. Recall the definition of $(\lambda,\phi)\sim \psi$ that $\lambda$ is consistent with the partial seeding realization observed by $\psi$ and $\phi$ is consistent with the states of edges explored under $\psi$. Since $(\lambda,\phi)\sim \psi$, $(\lambda^\prime,\phi^\prime)\sim \psi^\prime$, and $\psi \subseteq \psi^\prime$, the states of nodes and edges observed by $\psi$ are the same in $(\lambda,\phi)$ and $(\lambda^\prime,\phi^\prime)$. Then, the diffusion brought by action $y$ (i.e., $\Delta(y|\psi)$) is only dependent on the states of unknown edges. Thus, we will reduce the domain of $\rho$ to $\phi$ and $\phi^\prime$. We define the coupled distribution $\rho$ in terms of a joint distribution $\hat{\rho}$ on $X\times X^\prime$, where $\phi=\phi(X)$ and $\phi^\prime=\phi^\prime(X^\prime)$ are the diffusion realizations induced by the two distinct sets of random edge states respectively. Recall that the domain of $\rho$ is reduced to $\phi$ and $\phi^\prime$. Hence, $\rho((\lambda,\phi(X)),(\lambda^\prime,\phi(X^\prime)))=\hat{\rho}(X,X^\prime)$.

  We say the seeding process $\psi$ observes an edge if it is explored and the state is revealed. For any edge $(i,j)$ observed by $\psi$ (resp. $\psi^\prime$), its state $X_{ij}$ (resp. $X_{ij}^\prime$) is deterministic. Recall that the states of edges observed by $\psi$ are the same in $\phi$ and $\phi^\prime$, since $\psi \subseteq \psi^\prime$. We will construct $\hat{\rho}$ so that the states of all edges unobserved by both $\psi$ and $\psi^\prime$ are the same in $X$ and $X^\prime$, i.e., $X_{ij}=X_{ij}^\prime$, otherwise $\hat{\rho}(X, X^\prime)=0$. The above constraints allow us to select $X_{ij}$ whose edges are unobserved by $\psi$. We select such variables independently. Hence for all $(X, X^\prime)$ satisfying the above constraints, we have

\begin{align*}
\hat{\rho}(X, X^\prime)=\!\!\prod\limits_{(i,j) \text{ unobserved by } \psi}\!\!p_{ij}^{X_{ij}}(1-p_{ij})^{1-X_{ij}},
\end{align*}
  otherwise $\hat{\rho}(X, X^\prime)=0$.

  We next try to prove that the following formula holds for any $((\lambda,\phi),(\lambda^\prime,\phi^\prime))\in support(\rho)$,
\begin{align}\label{sta1}
&\hat{\sigma}(dom(\psi^\prime)\cup \{y\},(\lambda^\prime,\phi^\prime))-\hat{\sigma}(dom(\psi^\prime),(\lambda^\prime,\phi^\prime))\leq \\
&\hat{\sigma}(dom(\psi)\cup \{y\},(\lambda,\phi))-\hat{\sigma}(dom(\psi),(\lambda,\phi)).\nonumber
\end{align}

  Let set $B$ denote $\sigma(dom(\psi),(\lambda,\phi))$, $D$ denote $\sigma(dom(\psi)\cup \{y\},(\lambda,\phi))$, $B^\prime$ denote $\sigma(dom(\psi^\prime),(\lambda^\prime,\phi^\prime))$ and $D^\prime$ denote $\sigma(dom(\psi^\prime)\cup \{y\},(\lambda^\prime,\phi^\prime))$. We will first show that $B\subseteq B^\prime$. For any node $i\in B$, there exists a path from some node $j\in dom(\psi)$ to it. Each edge in this path is observed to be live. Since $\psi \subseteq \psi^\prime$ and $(\lambda,\phi)\sim \psi$, $(\lambda^\prime,\phi^\prime)\sim \psi^\prime$, the edge observed to be live in $\psi$ must be live as well in $\psi^\prime$, and $j$ must also be a seed in $\psi^\prime$. Then, there is also a path from $i$ to $j$ under $(\lambda^\prime,\phi^\prime)$. Thus, $B\subseteq B^\prime$.

  We proceed to prove formula (\ref{sta1}). Since $\psi \subseteq \psi^\prime$, we have $dom(\psi)\subseteq dom(\psi^\prime)$ and thus $N(\!\!\!\bigcup\limits_{p\in dom(\psi)}\!\!\!{v(p)})\subseteq N(\!\!\!\bigcup\limits_{p\in dom(\psi^\prime)}\!\!\!v(p))$. Therefore, nodes newly reached by $v(y)$ under $\psi$ are part of those under $\psi^\prime$, that is, $N(v(y))\setminus N(\!\!\!\bigcup\limits_{p\in dom(\psi^\prime)}\!\!\!v(p))\subseteq N(v(y))\setminus N(\!\!\!\bigcup\limits_{p\in dom(\psi)}\!\!\!v(p))$. Note that the budget allocated to newly reached nodes under $\psi$ and $\psi^\prime$ is the same, because the budget drawn from $B_2$ only depends on the intrinsic property of $v(y)$ itself. Furthermore,
 $((\lambda,\phi),(\lambda^\prime,\phi^\prime))\in support(\rho)$, the states of unobserved edges are the same. Thus, according to Lemma 3 and $B\subseteq B^\prime$, we can see that $D\setminus B$ is a superset of $D^\prime\setminus B^\prime$. In addition, $\hat{\sigma}=|\sigma|$, $B\subseteq D$ and $B^\prime \subseteq D^\prime$, hence formula (\ref{sta1}) holds.

  For $((\lambda,\phi),(\lambda^\prime,\phi^\prime))\notin support(\rho)$, $\rho((\lambda,\phi),(\lambda^\prime,\phi^\prime))=0$. Then, summing over $((\lambda,\phi),(\lambda^\prime,\phi^\prime))$ in $support(\rho)$ and not in $support(\rho)$, we have
\begin{equation}\label{expectation}
\begin{aligned}
&\sum\limits_{((\lambda,\phi),(\lambda^\prime,\phi^\prime))}&&\!\!\!\!\!\!\!\!\!\rho((\lambda,\phi),(\lambda^\prime,\phi^\prime))(\hat{\sigma}(dom(\psi^\prime)\cup \{y\},(\lambda^\prime,\phi^\prime))\\
&  &&\!\!\!\!\!\!\!\!\!-\hat{\sigma}(dom(\psi^\prime),(\lambda^\prime,\phi^\prime)))\leq \\
&\sum\limits_{((\lambda,\phi),(\lambda^\prime,\phi^\prime))}&&\!\!\!\!\!\!\!\!\!\rho((\lambda,\phi),(\lambda^\prime,\phi^\prime))(\hat{\sigma}(dom(\psi)\cup \{y\},(\lambda,\phi))\\
&  &&\!\!\!\!\!\!\!\!\!-\hat{\sigma}(dom(\psi),(\lambda,\phi))).
\end{aligned}
\end{equation}
  Note that $p((\lambda, \phi)|\psi)=\sum\limits_{(\lambda^\prime,\phi^\prime)}\rho((\lambda,\phi),(\lambda^\prime,\phi^\prime))$ and $p((\lambda^\prime, \phi^\prime)|\psi^\prime)=\sum\limits_{(\lambda,\phi)}\rho((\lambda,\phi),(\lambda^\prime,\phi^\prime))$. We sum over $(\lambda,\phi)$ in the left side of formula (\ref{expectation}) and $(\lambda^\prime,\phi^\prime)$ in the right side. Combining the definition of $\Delta(y|\psi)$ and $\Delta(y|\psi^\prime)$, we have $\Delta(y|\psi)\geq \Delta(y|\psi^\prime)$.

\subsection{Proof of Lemma 5}
We would like to prove this lemma by induction. Let $S_m$ denote the first $m$ seeds selected by $\pi^{\text{greedy}}$ and $R_m$ denote the first $m$ seeds selected by $\pi_{\text{relaxed}}^{\text{greedy}}$.

(i) Let us consider the basic case, $m=1$, i.e. the first seed. It is easy to see that $\pi_{\text{relaxed}}^{\text{greedy}}$ will choose user $u$ that maximizes $\frac{ E[\hat{\sigma}(y,(\Lambda, \Phi))] }{ d_{\min}(u) }$, where $y=(u,d_{\min}(u))$.

As for $\pi^{\text{greedy}}$, we probe
$y^*=\argmax_{y\in Y}\frac{\Delta(y|\psi_p)}{d(y)}=\argmax_{y\in Y}\frac{ E[\hat{\sigma}(y,(\Lambda, \Phi))] }{d(y)}$.
Each time, the selected action is refused or accepted. We accordingly delete the action and move to the next round or get a seed.

 The action space $Y$ can be divided into a union of disjoint action subsets $Y:=\bigcup\limits_{v\in X}Y_v$, where $Y_v$ is the set of actions about user $v$, i.e., $Y_v=\{y|v(y)=v\}$. We next show that for each action subset $Y_v$, there is no need to consider actions whose discounts are not $d_{\min}(v)$. Due to the greedy policy $\pi^{\text{greedy}}$, user $v$ will be probed with $Y_v$ from the smallest discount. For actions with discount less than $d_{\min}(v)$, $v$ will reject it. When the discount becomes $d_{\min}(v)$, $v$ becomes a seed and remaining polices are abandoned. Thus, it is equivalent to select actions from $Y^*=\{(v,d_{\min}(v))|v\in X\}$. Then, the selection becomes selecting an action from $Y^*$ that maximizes $\frac{ E[\hat{\sigma}(y,(\Lambda, \Phi))] }{d_{\min}(v(y))}$, which is the same as $\pi_{\text{relaxed}}^{\text{greedy}}$. Thus, the two algorithms will yield the same first seed.

(ii) Assume that $S_m$ and $R_m$ are the same when $m=k$, we proceed to the case $m=k+1$. Given partial seeding process $\psi_p$, the seed selected by $\pi_{\text{relaxed}}^{\text{greedy}}$ is the user $u$ that maximizes $\frac{\Delta((u,d_{\min}(u))|\psi_p)}{d_{\min}(u)}$. In terms of $\pi^{\text{greedy}}$, for each user $u$, $\Delta(y|\psi_p)$ is the same for any action $y$ about $u$, i.e. $v(y)=u$, since $u$ is assumed to be the seed when calculating $\Delta(y|\psi_p)$ and the budget allocated in stage 2 only depends on user $u$ itself. Following similar arguments in (i), we derive the $(k+1)$-th seed selected by maximizing $\frac{ E[\hat{\sigma}(y,(\Lambda, \Phi))] }{d_{\min}(v(y))}$. Thus, the $(k+1)$-th seed is the same in two algorithms.

Combining (i) and (ii), we complete the proof of Lemma 5.

\subsection{Proof of Theorem 2}
We next analyze the budget used in stage 1 in the relaxed setting. Let us consider the selection of the last action, if the remaining budget is smaller than the minimum desired discounts $d_v$ of all the remaining users, then no one will accept the discount and the remaining budget can not be used. In the extreme case, all the remaining users desire discount 1. Thus, the budget used in stage 1 is at least $B_1-1$. As noted in Section 5, with proper initial allocation and enough iterations, the local optimum of the coordinate descent algorithm in stage 2 could be eliminated. Let $\pi_{\text{relaxed}}^{\text{greedy}}$ denote the optimal action of the relaxed setting. According to Theorem A.10 in [12], since $\hat{\sigma}$ is adaptive submodular and optimal value is obtained in stage 2, we have
\begin{center}
$\hat{\sigma}(\pi_{\text{relaxed}}^{\text{greedy}})\geq (1-e^{-\frac{B_1-1}{B_1}})\hat{\sigma}(\pi_{\text{relaxed}}^{\text{OPT}})$.
\end{center}
By the definition of $\hat{\sigma}(\pi_{\text{relaxed}}^{\text{OPT}})$, we can see that
\begin{center}
$\hat{\sigma}(\pi_{\text{relaxed}}^{\text{OPT}})\geq \hat{\sigma}(\pi^{\text{OPT}})$.
\end{center}
 Moreover, from Lemma 5, we have
\begin{center}
 $\hat{\sigma}(\pi_{\text{relaxed}}^{\text{greedy}})=\hat{\sigma}(\pi^{\text{greedy}})$.
\end{center}
  Thus,
\begin{center}
 $\hat{\sigma}(\pi^{\text{greedy}})\geq (1-e^{-\frac{B_1-1}{B_1}})\hat{\sigma}(\pi^{\text{OPT}})$.
\end{center}

\subsection{Proof of Lemma 6}
It has been proven in [1] that the classical influence maximization under independent cascade model is NP-hard. We would like to show that the classical IM can be reduced to our action selection problem. Given an arbitrary instance of the classical IM problem with budget $k$, the goal is to influence the whole network by initially selecting $k$ nodes. This is only a special case of our problem, where there is only one discount rate $D=\{1\}$ and the action space is $Z=R\times D$. If $L^*\subseteq Z$ is the optimal solution of this action selection problem. Then, we can solve the classical influence maximization problem by selecting corresponding nodes in $L^*$. It is easy to see that the reduction can be performed within polynomial time. Based on the above analysis, we prove the NP-hardness of the optimal action selection in stage 2.

\subsection{Proof of Lemma 7}
Suppose the newly reached users are $R$, and accordingly the action space is $Z=R\times D$. We first prove the submodularity of $Q(L; R|(\lambda, \phi))$, under some realization $(\lambda, \phi)$. $\forall E_1, E_2\subseteq Z$ and $E_1\subseteq E_2$, $\forall z\in Z\setminus E_2$, we aim to prove that
\begin{align}\label{sta2}
&Q(E_1\cup\{z\};R|(\lambda, \phi))-Q(E_1;R|(\lambda, \phi))\geq \\
&Q(E_2\cup\{z\};R|(\lambda, \phi))-Q(E_2;R|(\lambda, \phi)).\nonumber
\end{align}
Under the same seeding and diffusion realization $(\lambda, \phi)$, $E_1\subseteq E_2$ implies that the users influenced by actions $E_2$ are also influenced by actions $E_1$, but there exist users influenced by $E_2$ but not influenced by $E_1$. Therefore, the set of users influenced by $z$ but not influenced by $E_1$ is a superset of that under $E_2$. This conclusion directly leads to formula (\ref{sta2}). Since the left side of formula (\ref{sta2}) represents the number of users influenced by $z$ but not by $E_1$, and the right side represents the number of users influenced by $z$ but not by $E_2$, Thus formula (\ref{sta2}) holds. Since a non-negative combination of submodular functions is still submodular, we derive that $Q(L;R)$ is submodular.

We next prove that under some realization $(\lambda, \phi)$, $Q(L;R|(\lambda, \phi))$ is monotone nondecreasing. $\forall E_1\subseteq E_2\subseteq Z$, users who become seeds under $E_1$ must be seeds under $E_2$. However, users seeded by $E_2\setminus E_1$ may become seeds. The states of edges are the same for $E_1$ and $E_2$ under the same realization $\phi$. Thus, $E_2$ can achieve at least the same diffusion as $E_1$, i.e. $Q(E_2;R|(\lambda, \phi))\geq Q(E_1;R|(\lambda, \phi))$. The non-negative combination of monotone functions are still monotone. Hence, $Q(E_2;R)\geq Q(E_1;R)$.

\subsection{Proof of Lemma 8}
Under the discrete-discrete setting, the seeding process in stage 1 is the same as the discrete-continuous setting, while the seeding process in stage 2 aims at choosing an optimal subset of actions from $Z$. Therefore, we only need to modify the proof of Lemma 4 about stage 2. Then, we attempt to prove the inequality $|D^\prime|-|B^\prime|\leq |D|-|B|$. Following the same argument in the proof of Lemma 4, we have $B\subseteq B^\prime$. Suppose the actions adopted in $v(y)$'s neighbors under $\psi^\prime$ are $L^\prime$. Since $\psi \subseteq \psi^\prime$, by the proof of Lemma 4, we have
\begin{equation}\label{newly_reached_users_contain}
N(v(y))\setminus N(\!\!\!\bigcup\limits_{p\in dom(\psi^\prime)}\!\!\!v(p))\subseteq N(v(y))\setminus N(\!\!\!\bigcup\limits_{p\in dom(\psi)}\!\!\!v(p)).
\end{equation}
Moreover, the budget brought by $v(y)$ in stage 2 is the same under $\psi$ and $\psi^\prime$. Thus, we can carry out the same set of actions $L^\prime$ under $\psi$, triggering the same diffusion from newly reached users. By formula (\ref{newly_reached_users_contain}), we see that $v(y)$ reaches more new users under $\psi$ than $\psi^\prime$. Thus, the action space under $\psi^\prime$ is contained in that of $\psi$, which allows the possibility to achieve a larger diffusion under $\psi$. Recall $B\subseteq B^\prime$, we have $D^\prime\setminus B^\prime\subseteq D\setminus B$. Moreover, $\hat{\sigma}=|\sigma|$, $B\subseteq D$ and $B^\prime \subseteq D^\prime$. Thus, we obtain $|D^\prime|-|B^\prime|\leq |D|-|B|$. The rest of the proof is the same as the proof of Lemma 4.

\subsection{Proof of Theorem 3}
Recall that $\hat{\sigma}(\cdot |(\lambda, \phi))$ is adaptive submodular with respect to realization distribution $p((\lambda, \phi))$. From the general result of Theorem A.10 in [27], we see that in each round if we obtain a $Q(L; R)$ of $\alpha$ approximation ratio of the optimal solution, then the greedy policy $\pi^{\text{greedy}}_{\text{discrete}}$ achieves a $1-e^{-\alpha\frac{B_1-1}{B_1}}$ approximation of the optimal policy $\pi^{\text{OPT}}_{\text{discrete}}$. According to Theorem 1 in [29], since $Q(L; R)$ is monotone and submodular, the approximation ratio of the greedy algorithm in stage 2 is $\frac{1}{2}(1-e^{-1})$, i.e., $\alpha=\frac{1}{2}(1-e^{-1})$. Thus, the approximation ratio of $\pi^{\text{greedy}}_{\text{discrete}}$ is $1-e^{-\frac{B_1-1}{2B_1}(1-\frac{1}{e})}$.

\section{Estimation of $I(S)$}

      Influence estimation is frequently demanded in IM algorithms to decide the discount allocation. However, it is proven to be $\#$P-hard [S1] [S2] and becomes an obstacle of influence maximization. Thanks to the efforts of Tang \emph{et al.} [7] [21], a polling based framework called IMM is proposed for efficient influence estimation, with which the time complexity of greedy algorithm is only $O((k+l)(n+m)\log n/\epsilon^2)$. Due to its favorable performance, we apply the framework to estimate influences when designing allocations, just like many other works, e.g., [8] [9] [S3]. To help readers comprehend the implementation of our experiment, we would like to briefly introduce the framework below.

        \textbf{Definition S1.} \textit{\textbf{(Reverse Reachable Set) [7] [21]} Let $v$ be a node in $V$. A reverse reachable (RR) set for $v$ is generated by first sampling a graph $g$ from $G$, and then taking the set of nodes in $g$ that can reach $v$. A random RR set is an RR set for a node selected uniformly at random from $V$.       }


        \textbf{Lemma S1.} ([S4]) \textit{For any seed set $S$ and any node $v$, the probability that a diffusion process from $S$ can activate $v$ equals the probability that $S$ overlaps an RR set for $v$.}

        Given network $G(V,E)$ and propagation probabilities of each edge, we first derive the transpose graph of $G$ defined as $G^T(V,E^T)$, i.e., edge $(v,u)\in G^T$ iff. $(u,v)\in G$. Note that the propagation probability of edge $(v,u)$ remains to be $p_{uv}$. The crux of estimating $I(S)$ lies in generating $\theta$ random RR sets defined in Definition S1. Briefly speaking, to generate an RR set, we select a node (e.g., $v$) uniformly at random from $V$ and stimulate a propagation from $v$. The intuition is that by reverse propagation, we could find out which set of users could potentially influence $v$. From Lemma S1, it could be inferred that in expectation the influence spread $I(S)$ is equal to the fraction of RR sets covered by $S$ times the number of users $|V|$ [7] [21]. Denoting the number of RR sets covered by seed set $S$ as $F_R(S)$, we have $E[I(S)]=E[F_R(S)|V|/\theta]$.

        As can be seen, to estimate the influence spread, the only parameter in need of specification is $\theta$, i.e., the number of RR sets to be generated, which is usually in $O(n\log n)$ [8]. In our experiment, for datasets same with previous works, we check their $\theta$ according to the constraints in [7]. If their setting is appropriate, we adopt the same $\theta$ as them. Specifically, ``wiki-Vote'' is adopted in [8] where $\theta=0.25M$; ``com-Dblp'' is applied in [8] and [S5], where both works set $\theta$ to be 20M; and ``soc-Livejournal'' is also tested in [S5], where $\theta=40M$. For the dataset ``ca-CondMat'' adopted by us, we find the parameter following the process in [7].

        To illustrate the setting of $\theta$, we reproduce relevant content in [7] as follows.

        According to Theorem 1 in [7], to obtain an approximation of error $\epsilon$ with probability $1-1/n^l$, $\theta$ should be no less than
        \begin{equation*}
          \theta=\frac{2n[(1-1/e)\cdot \alpha +\beta]^2}{\text{\text{OPT}}\epsilon^2},
        \end{equation*}
        where OPT is the maximum expected influence spread,
        \begin{equation*}
          \alpha=\sqrt{l\log n+\log 2} \text{, and}
        \end{equation*}
        \begin{equation*}
         \beta=\sqrt{(1-1/e)\cdot[\log C_n^k+l\log n+\log 2]}.
        \end{equation*}

        As we know, it is almost impossible to obtain OPT. Thus, we substitute OPT with a lower bound of the influence spread. Accordingly, $\theta$ is larger than it is under OPT. Since $\theta$ is expected to be large, the estimation is even more precise after substitution. With this idea, we check the setting of previous works which adopt identical datasets and find their values of $\theta$ are reasonable. Regarding the dataset ``ca-CondMat'', we consider the most demanding case $k=50$ and set $\epsilon$ to be 0.05 for rigor. The resultant $\theta$ is 1.4M. For safety, we set $\theta$ to be 2M as shown in the main paper.

        The above basic technique could apply to the adaptive case well, where we only need to estimate the influence of a deterministic seed set (in each round the seeded user will certainly become the seed or not, accepting or refusing the discount). However, for the non-adaptive case, given an allocation, since no observation is made, the state of users is probabilistic and accordingly the influence spread is related to $p_u(c_u)$. When estimating the influence, we need to apply Theorem 9 in [8]. 
        Specifically, when computing $\hat{f}(C_1;X)$ via Equation (10), we need to estimate the value of $\max \hat{Q}(C_2;\cdot)$. By running Alg. 1, we could optimize $\hat{Q}(C_2;\cdot)$ and obtain the corresponding allocation (assumed to be $C$). Then, by Theorem 9 in [8], we know that the expected influence is $n\cdot [\sum_{h\in \text{RR sets}}1-\prod_{u\in h}(1-p_u(c_u))]/\theta$. Since $p_u(c_u)$ is available, the value of the expression could be derived. Accordingly, the value of $\hat{f}(C_1;X)$ could be derived by Eq. (10).

        We would also like to mention that the IMM method is only to estimate the influence when designing allocations. However, when experimentally testing the performance of algorithms, the influence spread is obtained by running 20K times Monte Carlo simulations, as indicated in Sec. 7.1.

\section{More Experimental Results}
In this section, we report the influence spread (Fig. \ref{Influence_spread_in_Non-ada}, Fig. \ref{Influence_spread_in_Ada}) and running time (Fig. \ref{Running_Time_in_Non-ada}, Fig. \ref{Run_time_in_Ada}) under $\alpha=0.8$.

  \begin{figure*}[h]
    \centering
    \subfigure[Wiki-Vote]
        {
            \begin{minipage}[h]{0.24\textwidth}
            \centerline{\includegraphics[width=1\textwidth]{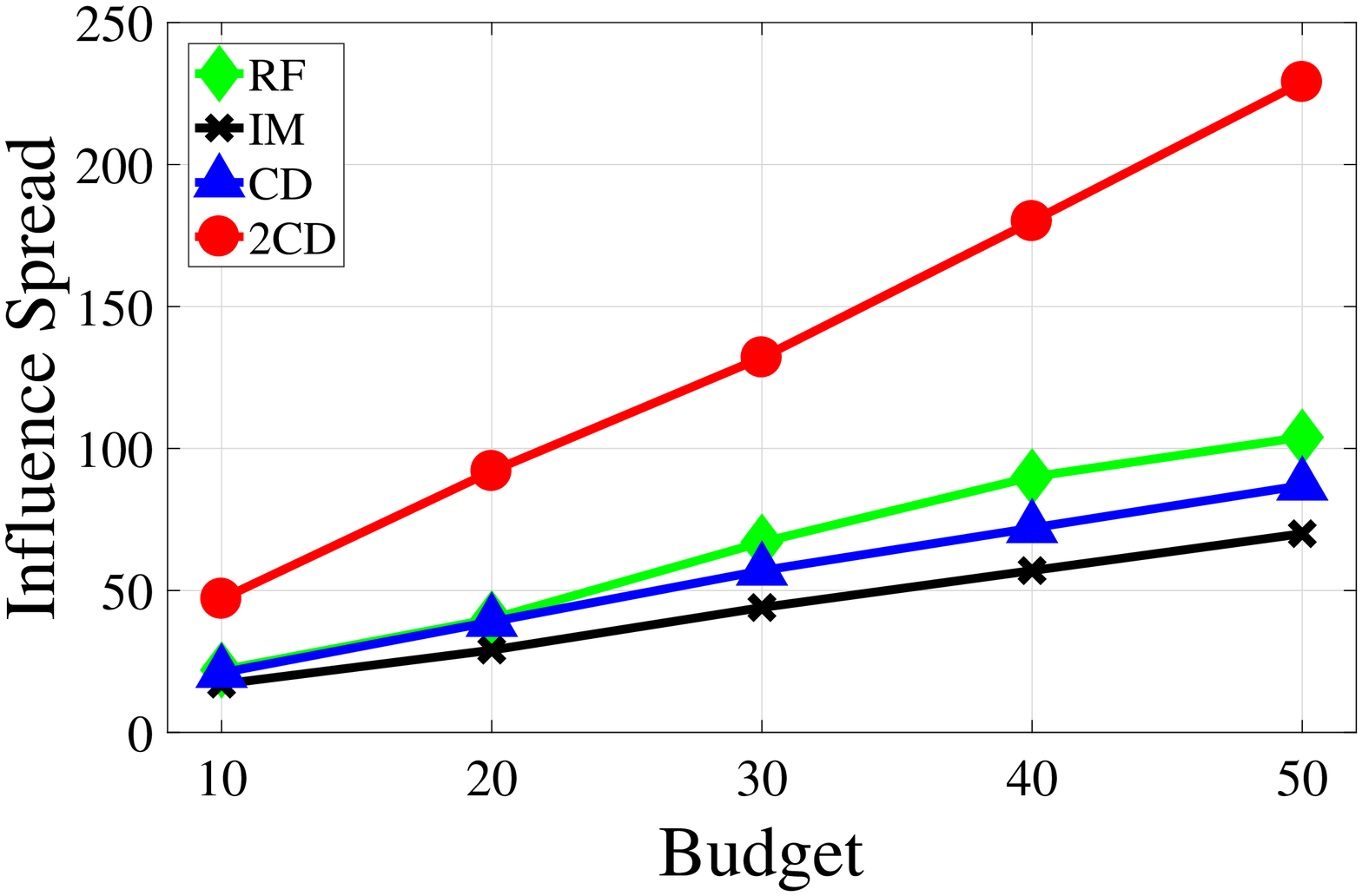}\label{wiki-vote_Non-ada_alpha=08}}
            \end{minipage}
        }
    \hspace{-3mm}
    \subfigure[Ca-CondMat]
        {
            \begin{minipage}[h]{0.24\textwidth}
            \centerline{\includegraphics[width=1\textwidth]{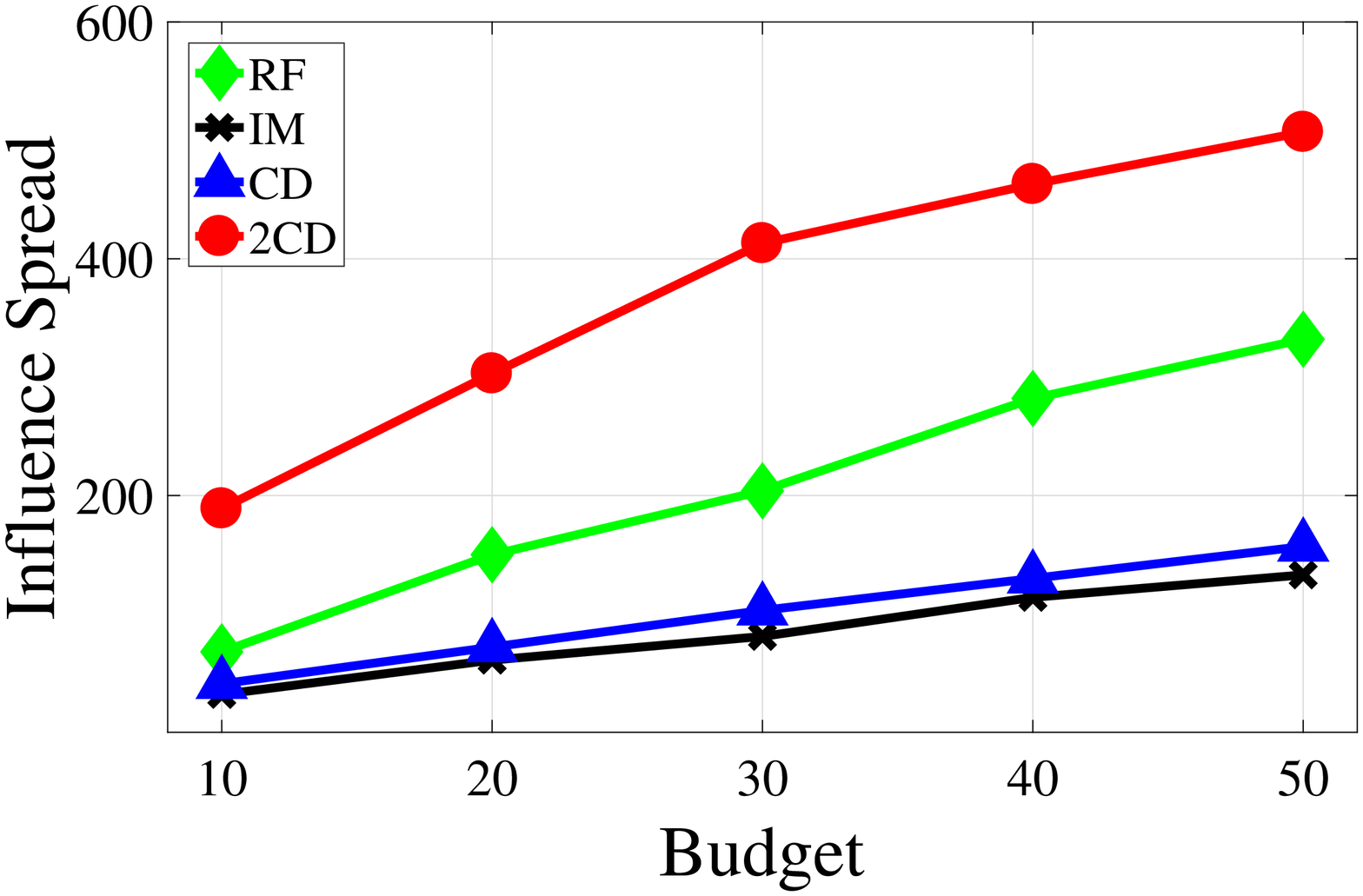}\label{Condmat_Non-ada_alpha=08}}
            \end{minipage}
            \hspace{-1mm}
        }
    \hspace{-3mm}
      \subfigure[com-Dblp]
        {
            \begin{minipage}[h]{0.24\textwidth}
            \centerline{\includegraphics[width=1\textwidth]{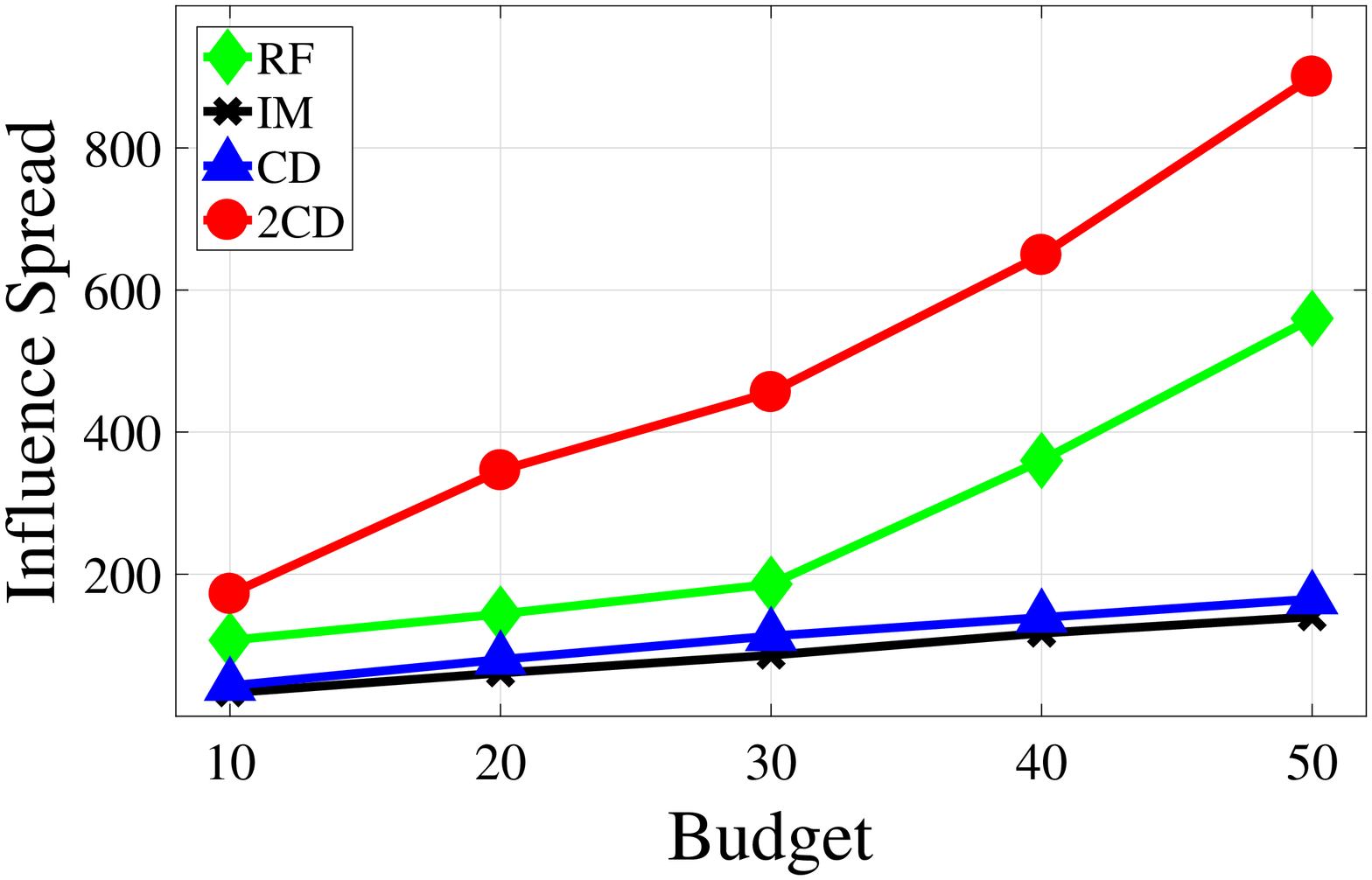}\label{Dblp_Non-ada_alpha=08}}
            \end{minipage}

        }
    \hspace{-3mm}
      \subfigure[soc-Livejournal]
        {
            \begin{minipage}[h]{0.24\textwidth}
            \centerline{\includegraphics[width=1\textwidth]{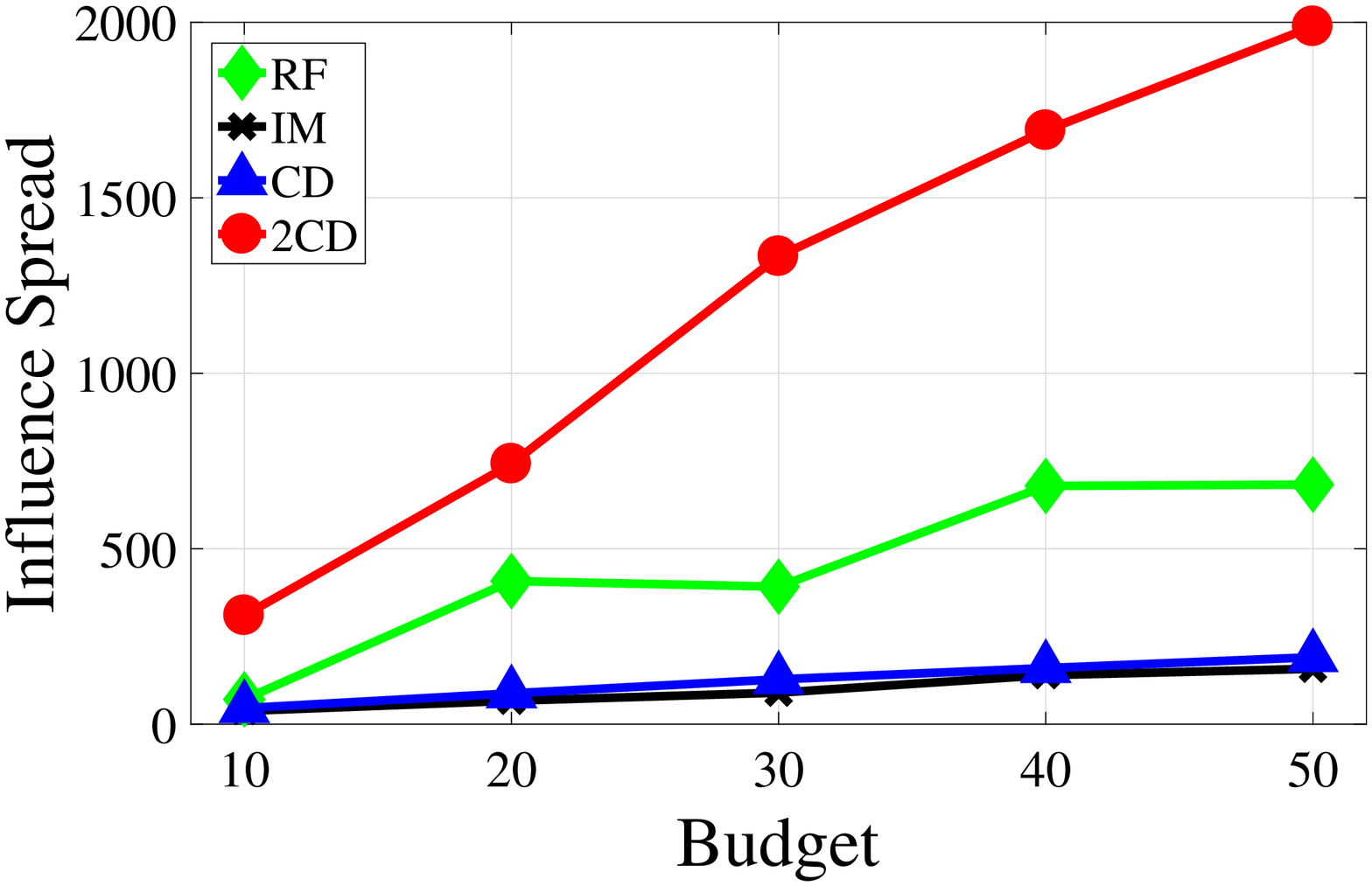}\label{livejournal_Non-ada_alpha=08}}
            \end{minipage}
            \hspace{-1mm}
        }
      \vspace{-1.5mm}
      \caption{Influence Spread in the Non-adaptive Case ($\alpha$=0.8).}\label{Influence_spread_in_Non-ada}
      \vspace{-2mm}
  \end{figure*}

  \begin{figure*}[h]
    \centering
    \subfigure[Wiki-Vote]
        {
            \begin{minipage}[h]{0.24\textwidth}
            \centerline{\includegraphics[width=1\textwidth]{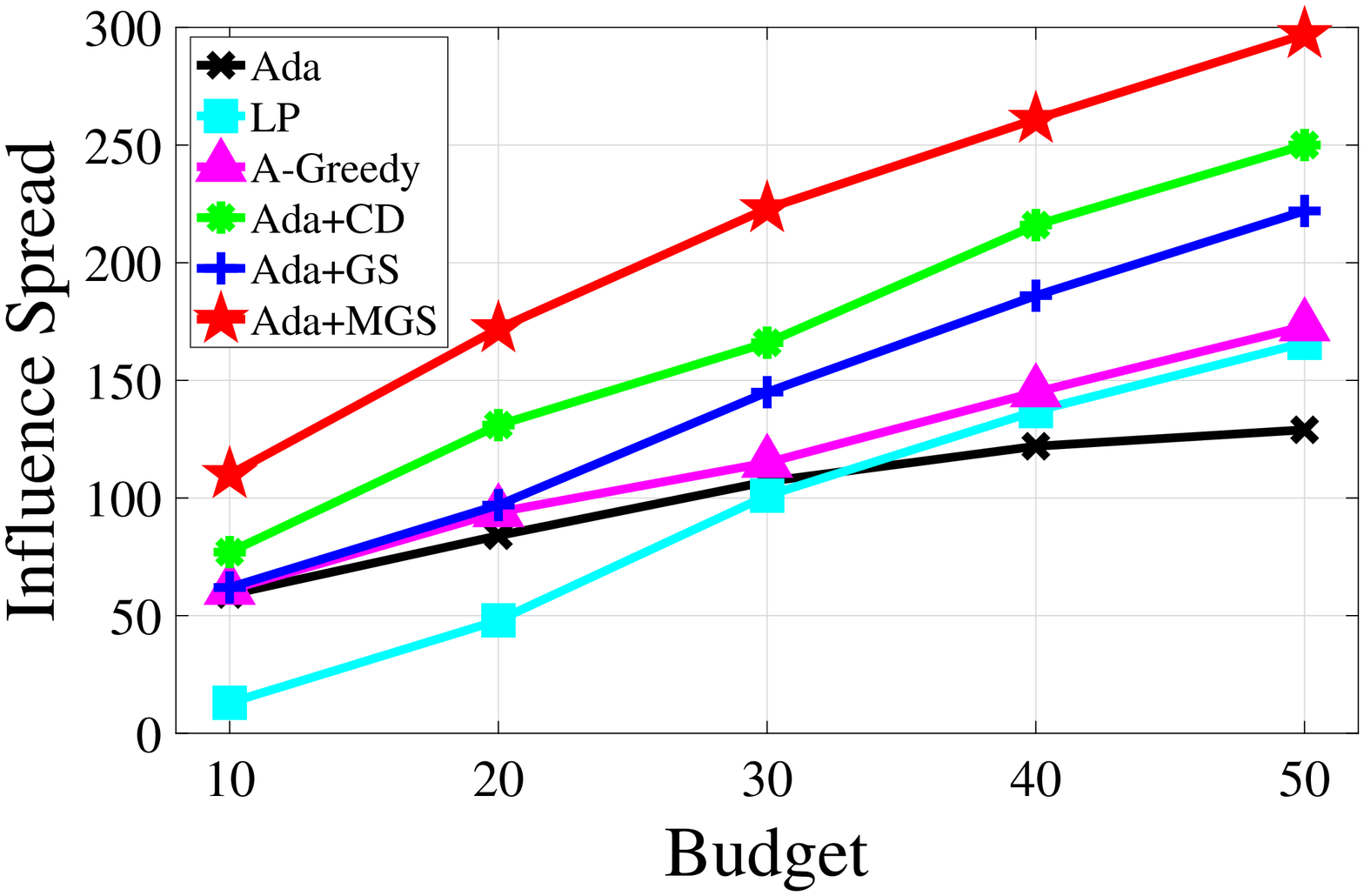}\label{wiki-vote_Ada_alpha=08}}
            \end{minipage}
        }
    \hspace{-3mm}
    \subfigure[Ca-CondMat]
        {
            \begin{minipage}[h]{0.24\textwidth}
            \centerline{\includegraphics[width=1\textwidth]{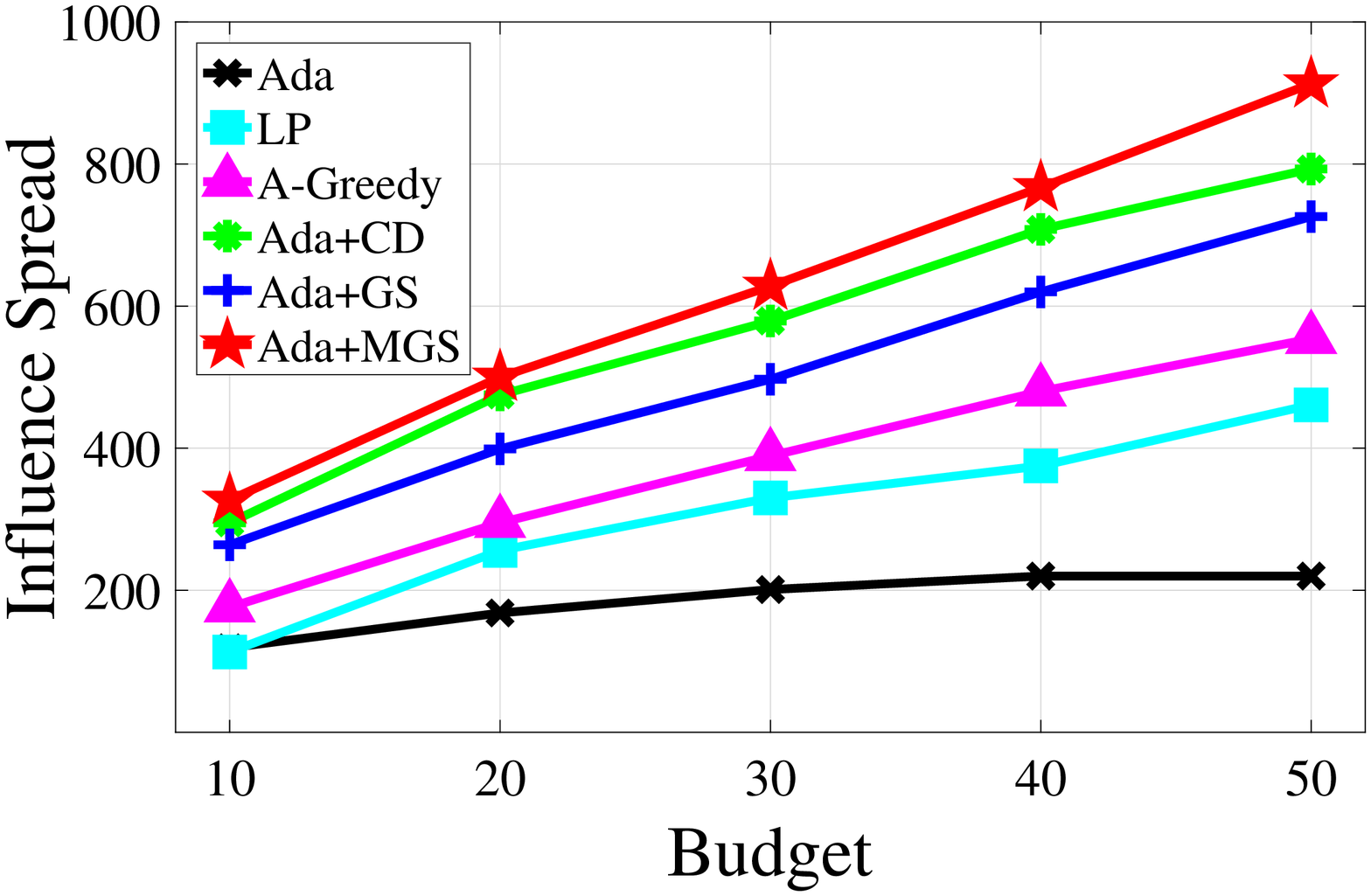}\label{Condmat_Ada_alpha=08}}
            \end{minipage}
        }
    \hspace{-3mm}
      \subfigure[com-Dblp]
      {
            \begin{minipage}[h]{0.24\textwidth}
            \centerline{\includegraphics[width=1\textwidth]{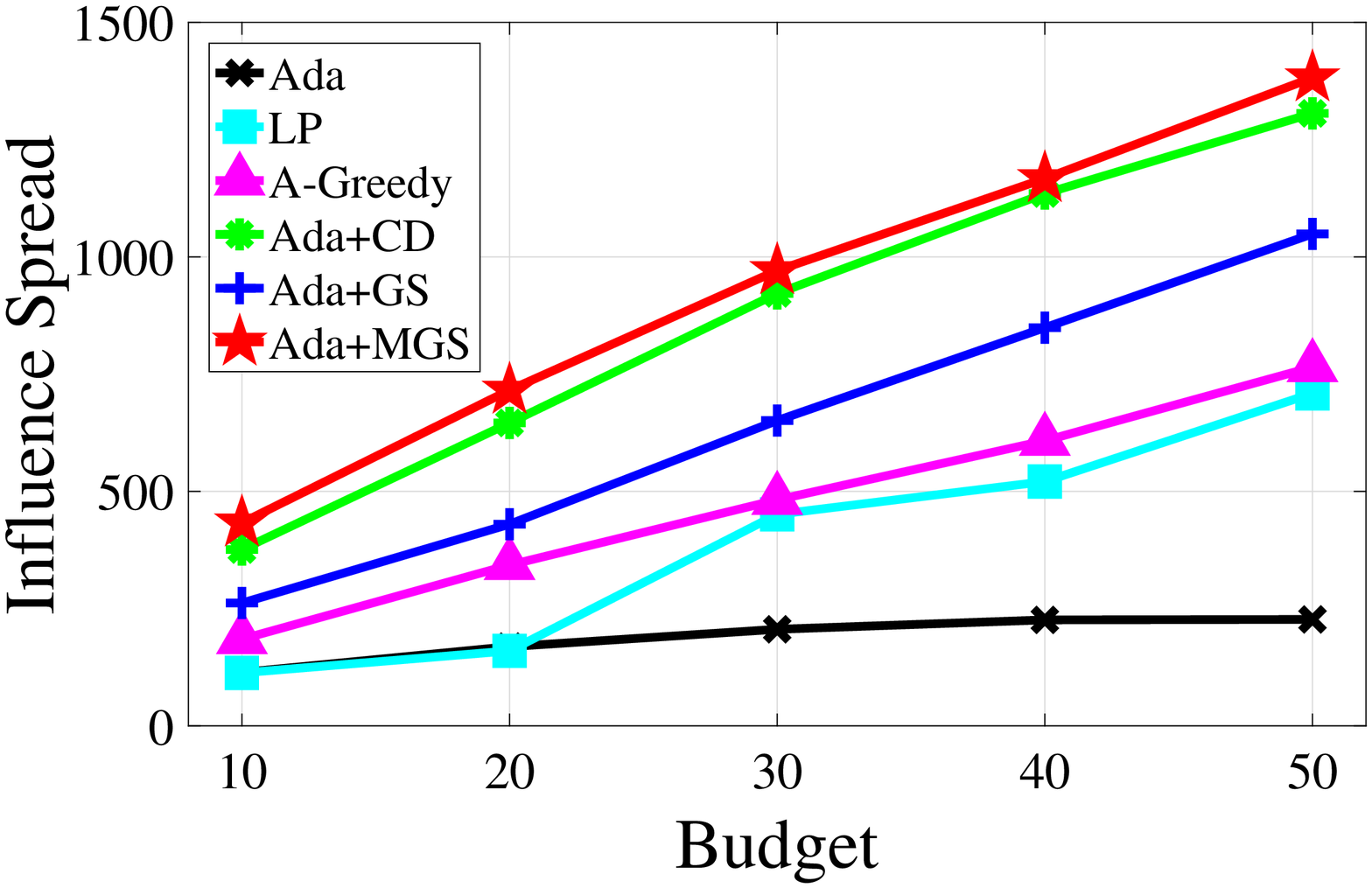}\label{Dblp_Ada_alpha=06}}
            \end{minipage}
        }
      \hspace{-3mm}
      \subfigure[soc-Livejournal]
      {
            \begin{minipage}[h]{0.24\textwidth}
            \centerline{\includegraphics[width=1\textwidth]{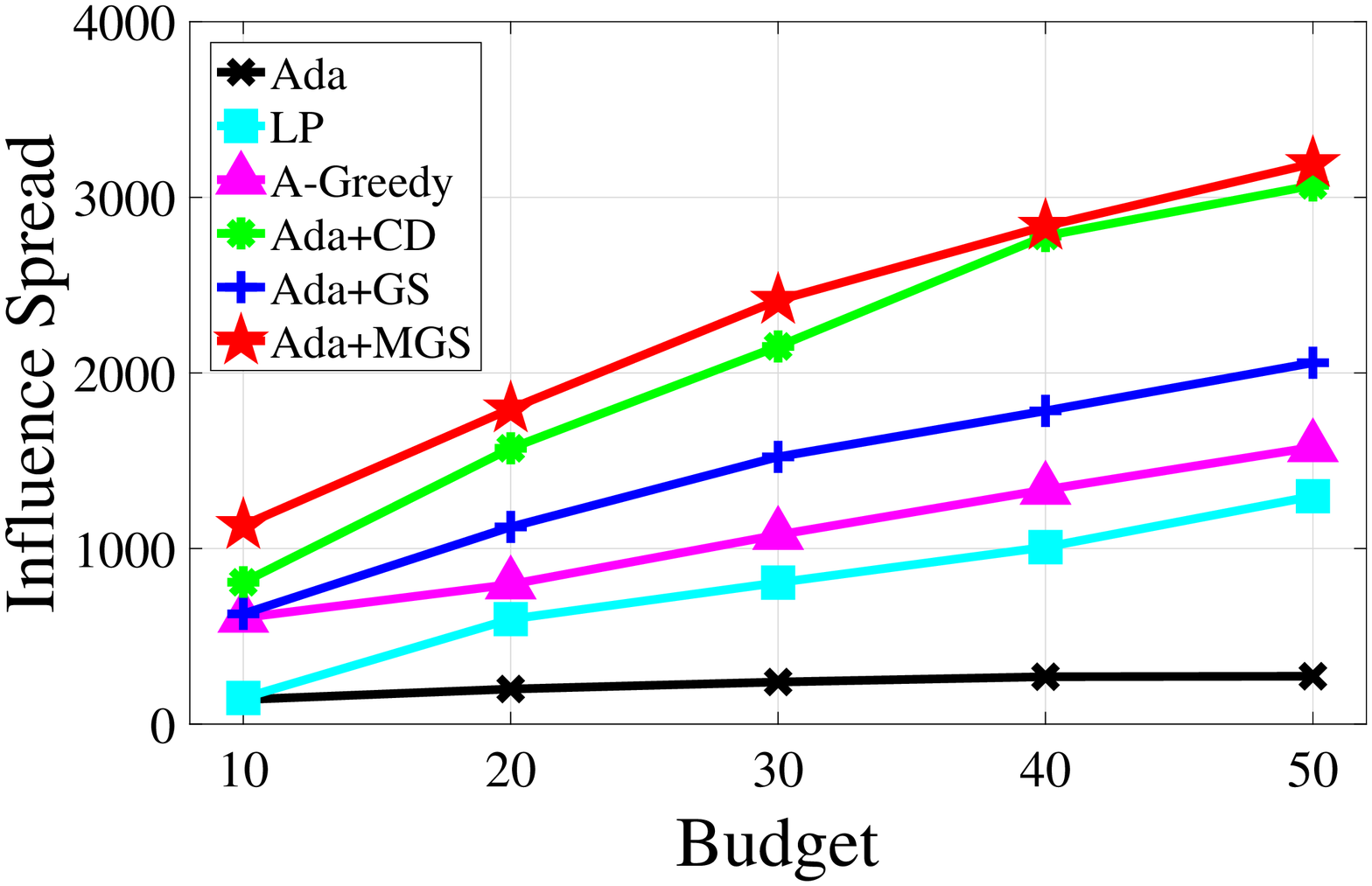}\label{livejournal_Ada_alpha=08}}
            \end{minipage}
        }
     \vspace{-1.5mm}
      \caption{Influence Spread in the Adaptive Case ($\alpha$=0.8).}\label{Influence_spread_in_Ada}
      \vspace{-2mm}
  \end{figure*}

  \begin{figure*}[h]
    \centering
    \subfigure[Wiki-Vote]
        {
            \begin{minipage}[h]{0.24\textwidth}
            \centerline{\includegraphics[width=1\textwidth]{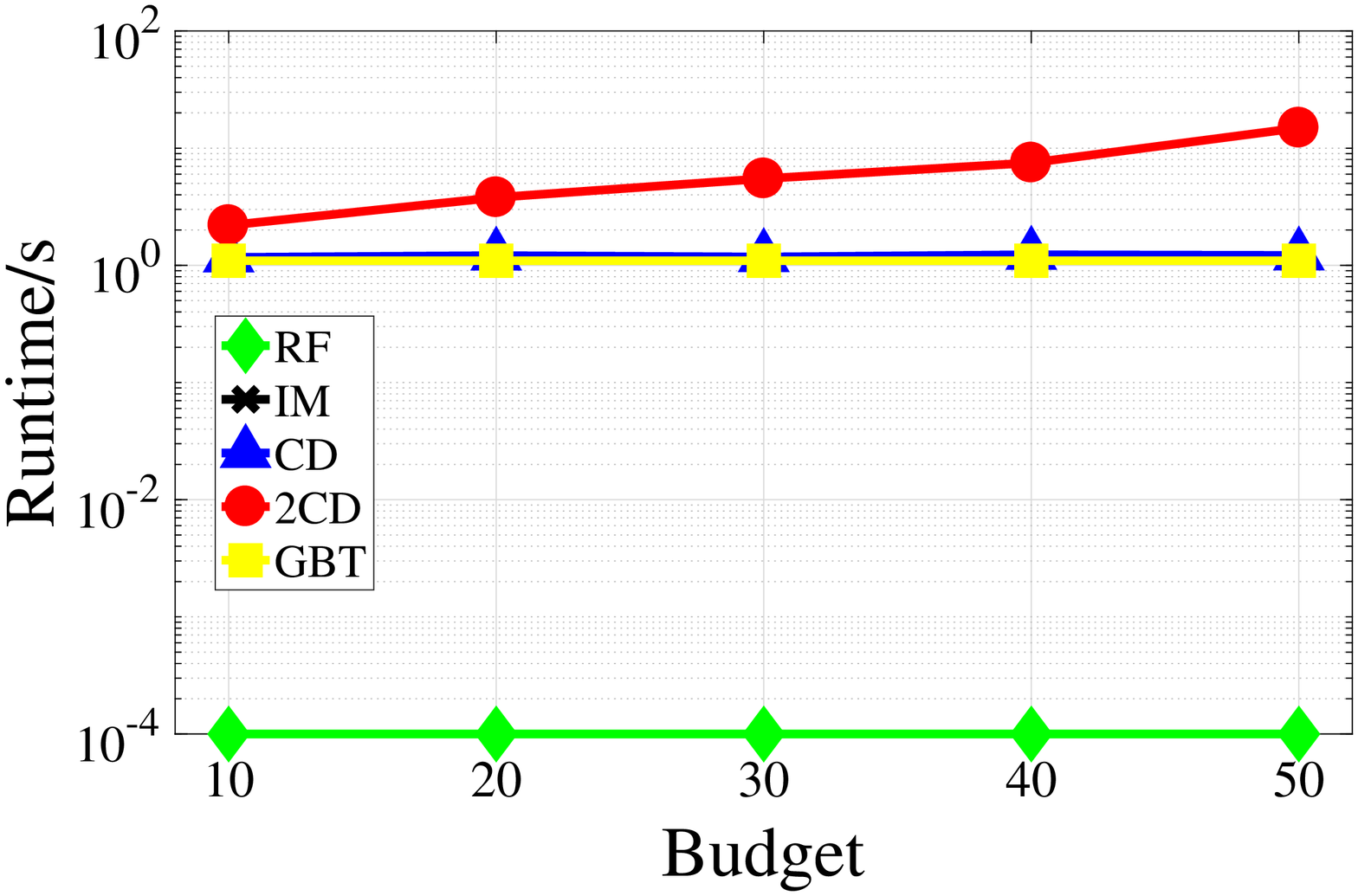}\label{wiki-vote_Non-ada_alpha-Runtime=08}}
            \end{minipage}
        }
    \hspace{-3mm}
    \subfigure[Ca-CondMat]
        {
            \begin{minipage}[h]{0.24\textwidth}
            \centerline{\includegraphics[width=1\textwidth]{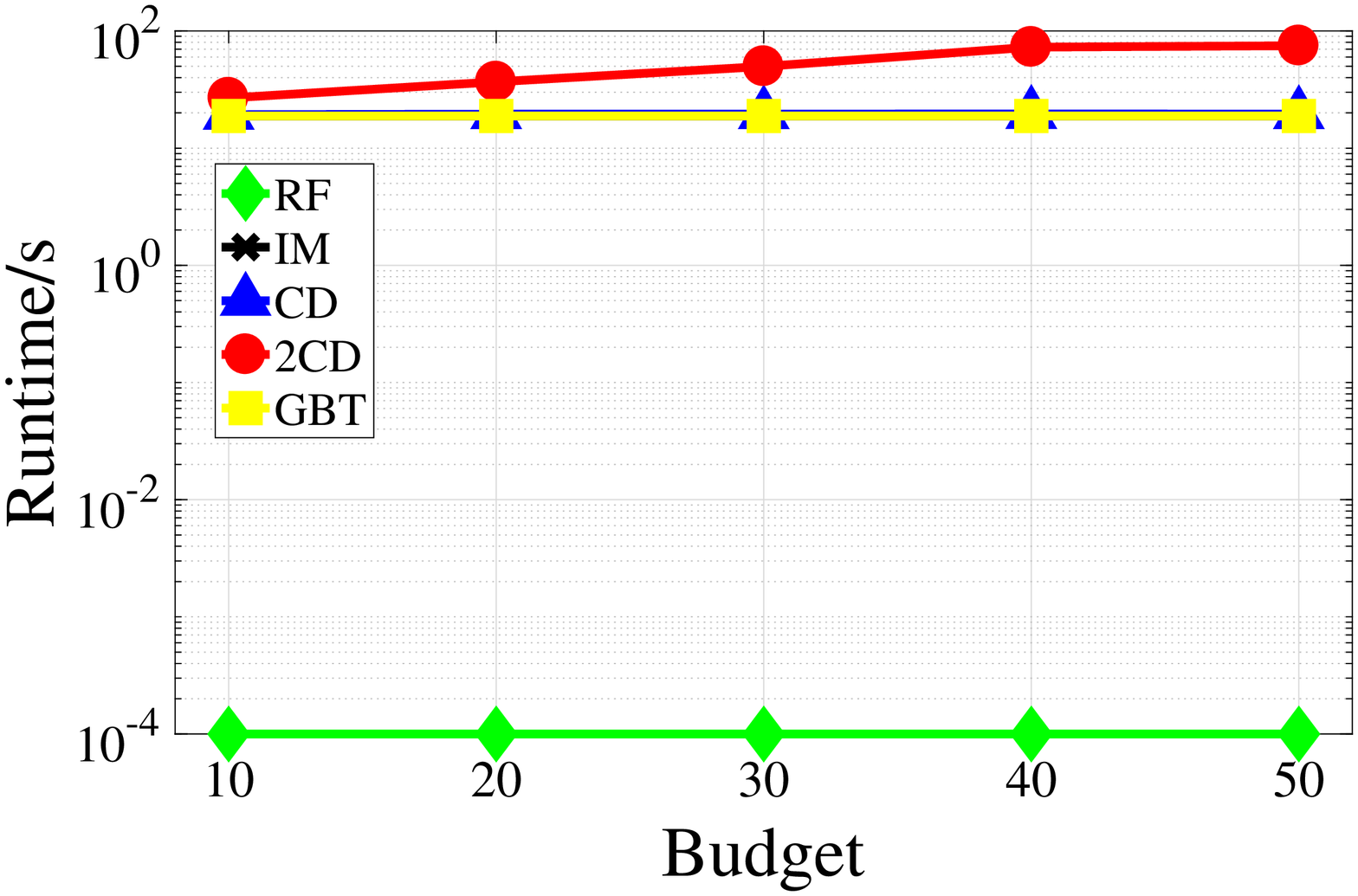}\label{Condmat_Non-ada_alpha-Runtime=08}}
            \end{minipage}
        }
    \hspace{-3mm}
      \subfigure[com-Dblp]
      {
            \begin{minipage}[h]{0.24\textwidth}
            \centerline{\includegraphics[width=1\textwidth]{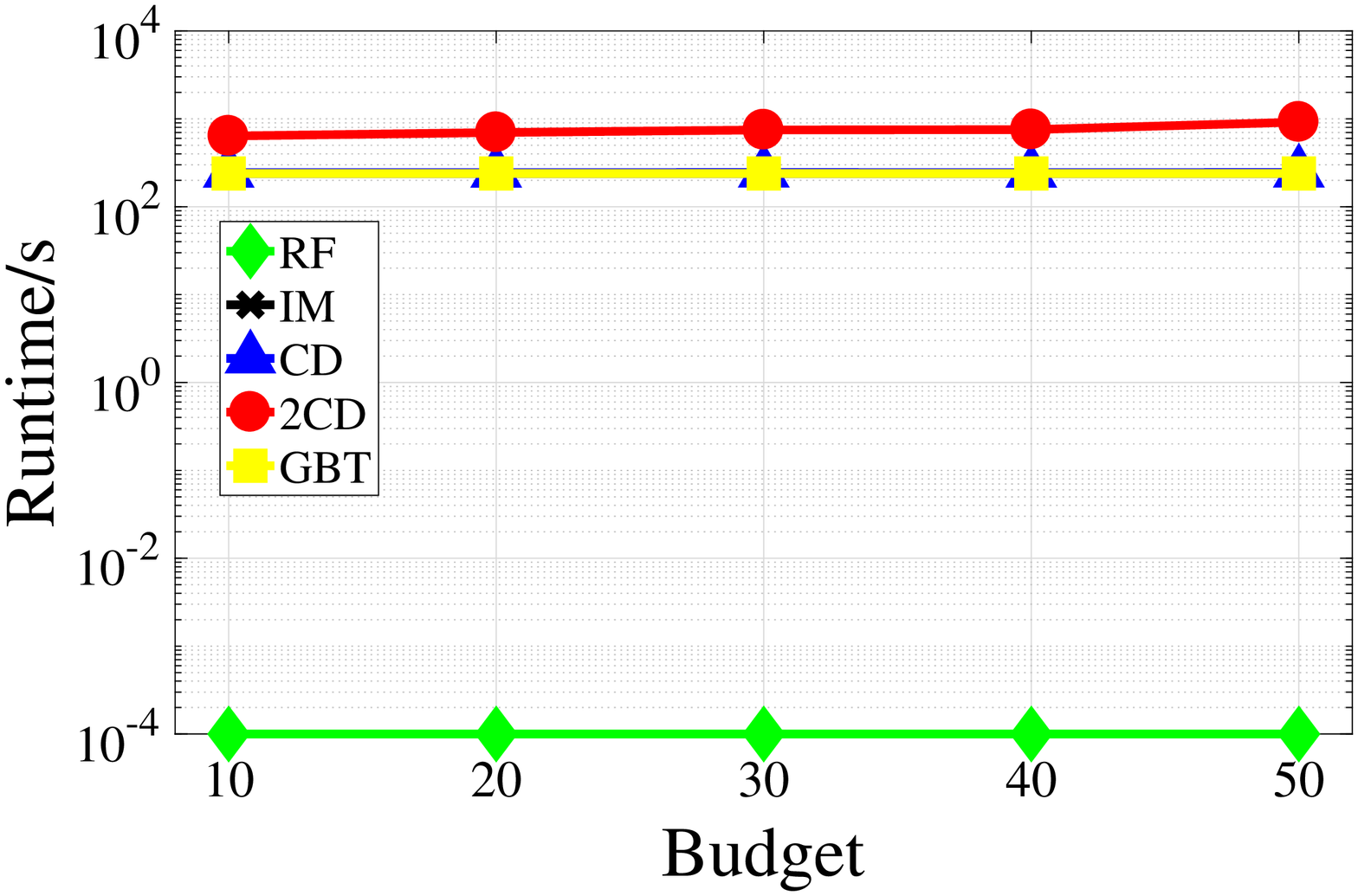}\label{Dblp_Non-ada_alpha-Runtime=08}}
            \end{minipage}
        }
      \hspace{-3mm}
      \subfigure[soc-Livejournal]
      {
            \begin{minipage}[h]{0.24\textwidth}
            \centerline{\includegraphics[width=1\textwidth]{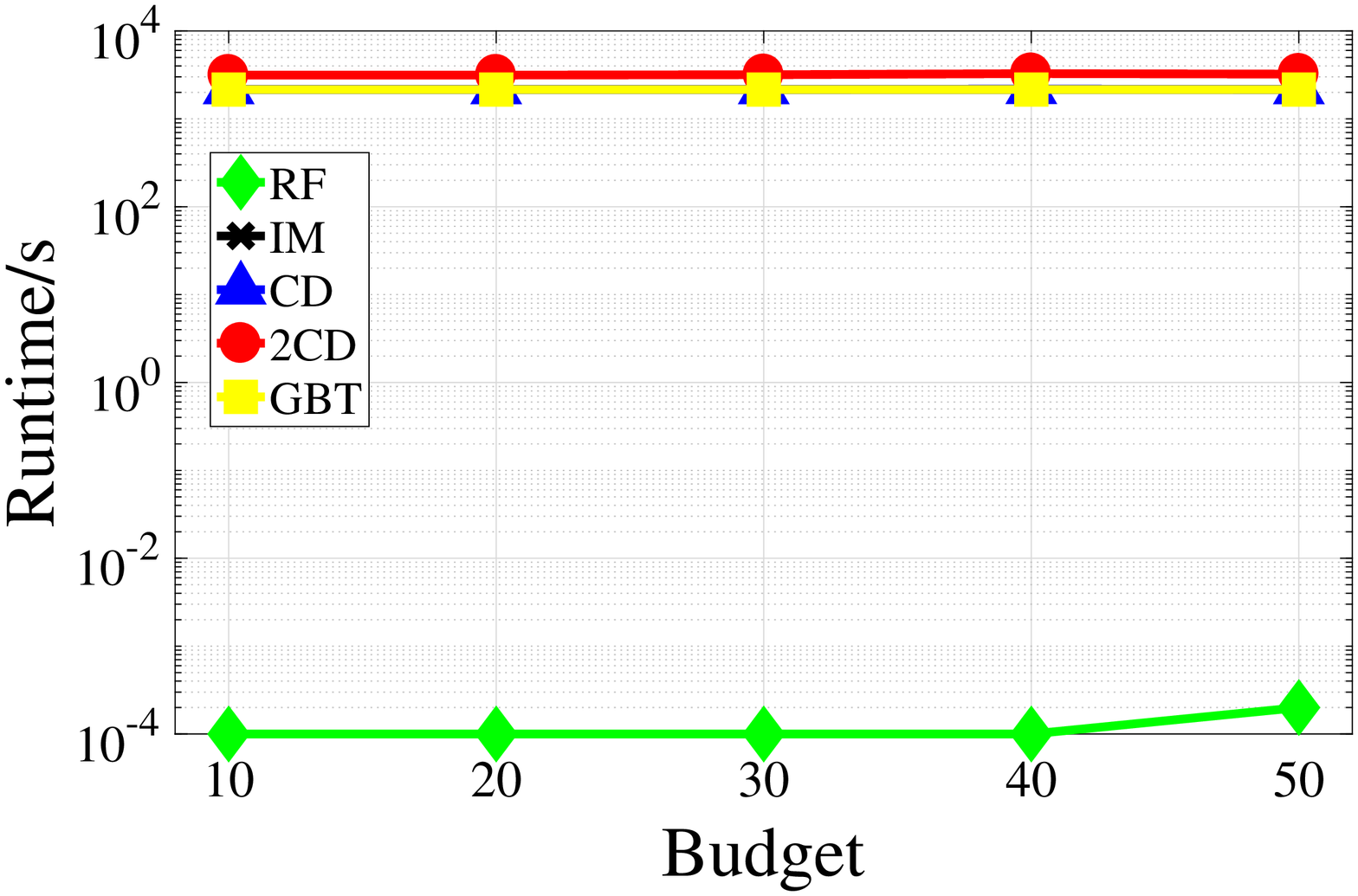}\label{livejournal_Non-ada_alpha-Runtime=08}}
            \end{minipage}
        }
     \vspace{-1.5mm}
      \caption{Running Time in the Non-adaptive Case ($\alpha$=0.8).}\label{Running_Time_in_Non-ada}
      \vspace{-2mm}
  \end{figure*}

  \begin{figure*}[h]
    \centering
    \subfigure[Wiki-Vote]
        {
            \begin{minipage}[h]{0.24\textwidth}
            \centerline{\includegraphics[width=1\textwidth]{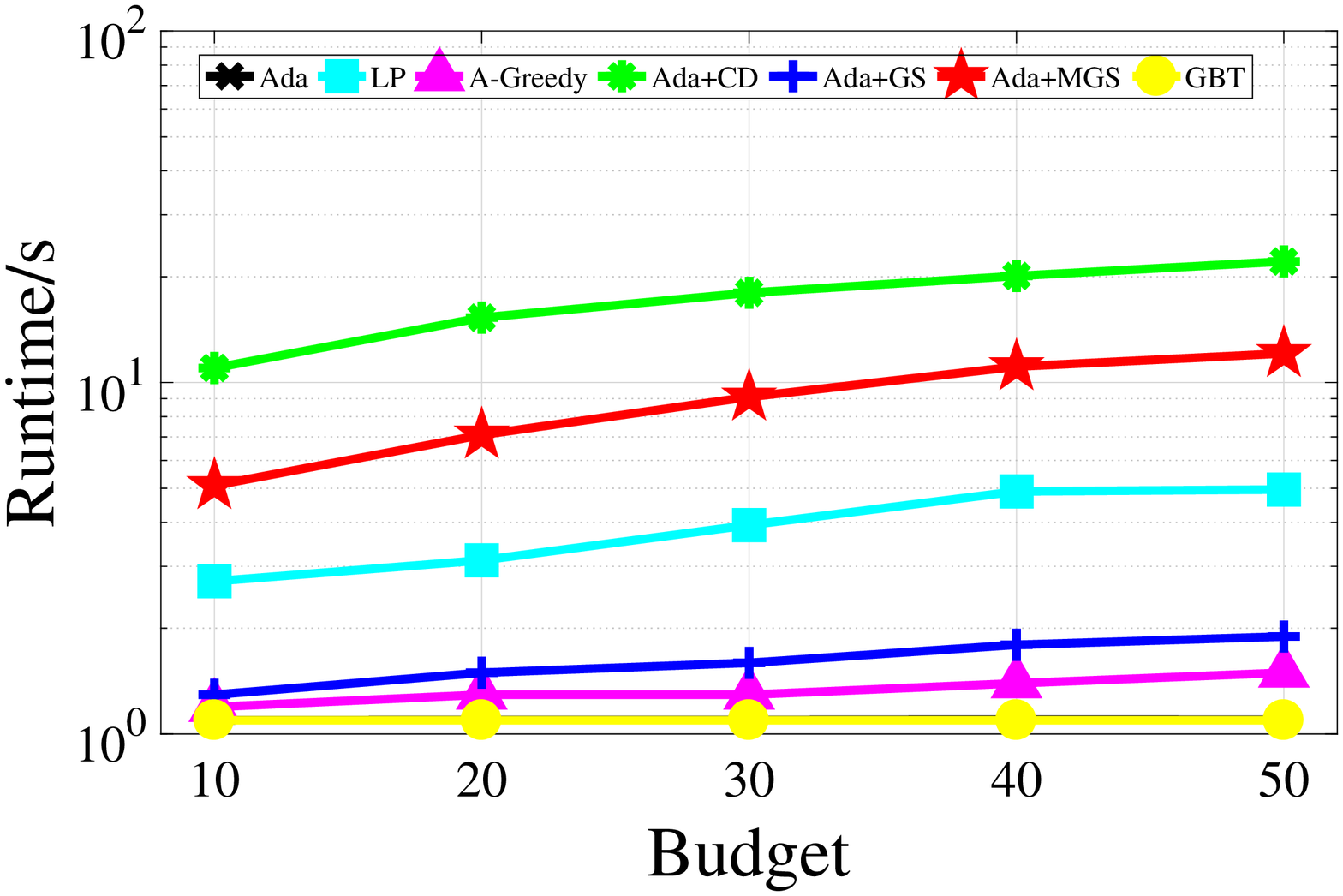}\label{wiki-vote_Ada_alpha=08-Runtime}}
            \end{minipage}
        }
    \hspace{-3mm}
    \subfigure[Ca-CondMat]
        {
            \begin{minipage}[h]{0.24\textwidth}
            \centerline{\includegraphics[width=1\textwidth]{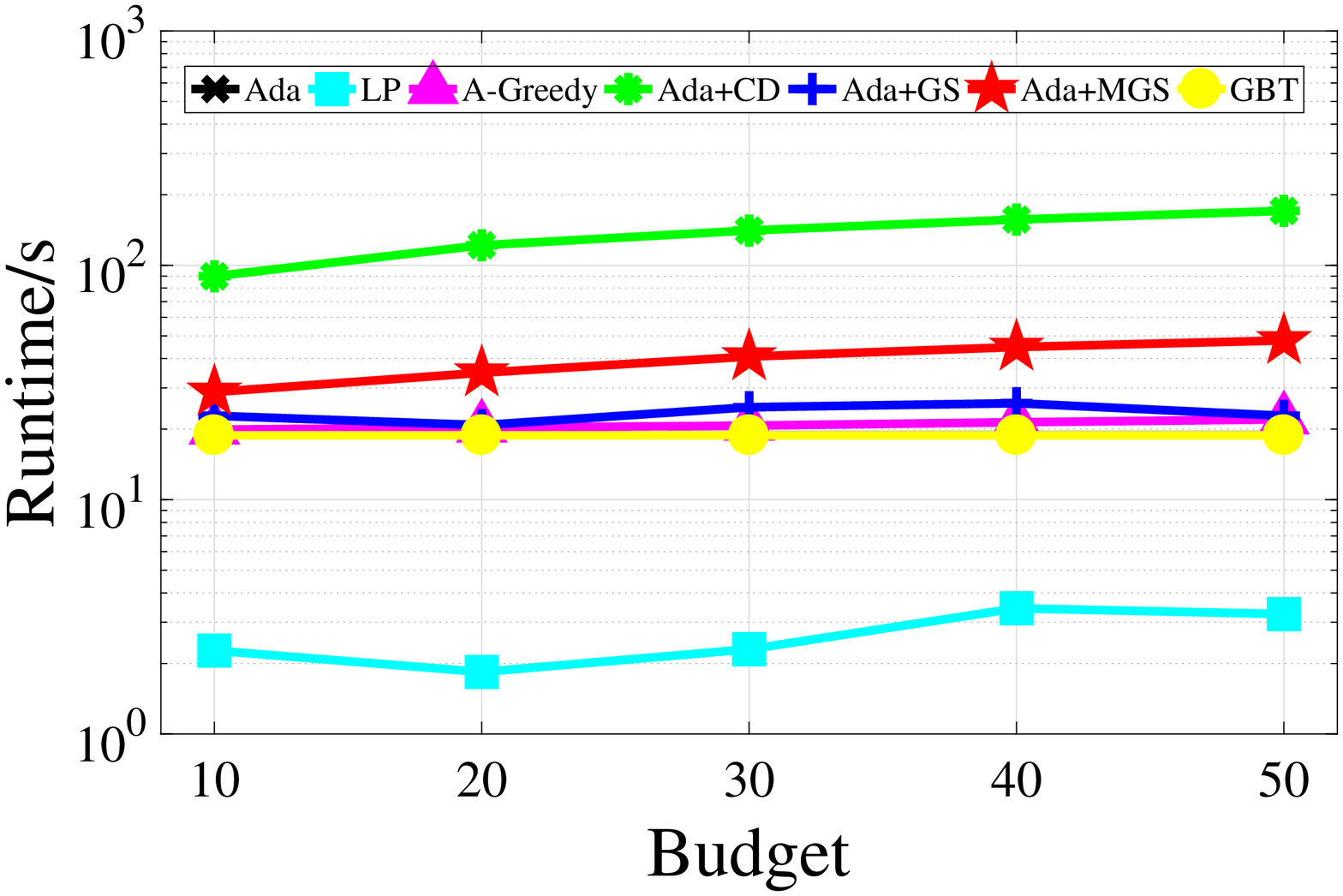}\label{Condmat_Ada_alpha=08-Runtime}}
            \end{minipage}
        }
    \hspace{-3mm}
      \subfigure[com-Dblp]
      {
            \begin{minipage}[h]{0.24\textwidth}
            \centerline{\includegraphics[width=1\textwidth]{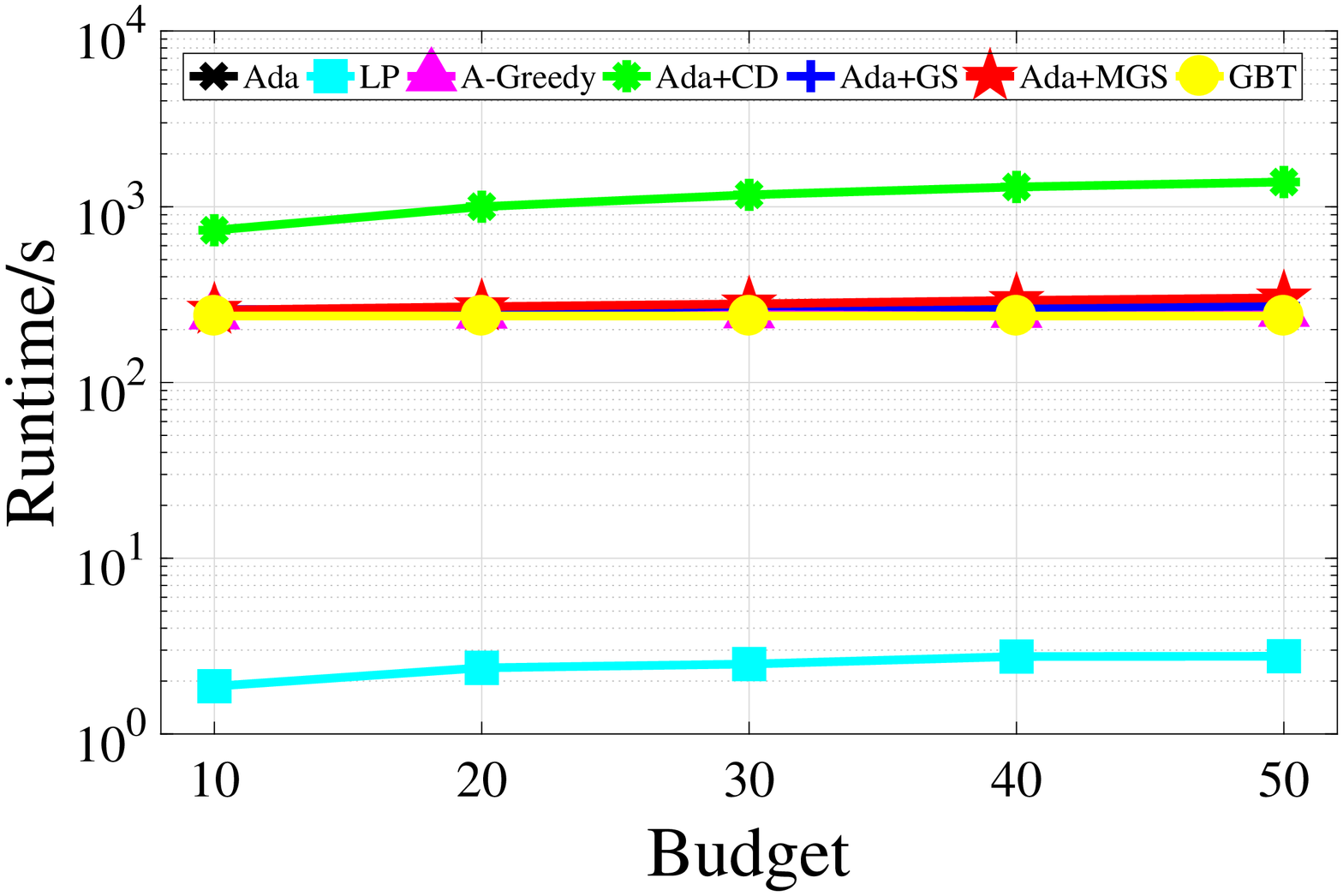}\label{Dblp_Ada_alpha=08-Runtime}}
            \end{minipage}
        }
      \hspace{-3mm}
      \subfigure[soc-Livejournal]
      {
            \begin{minipage}[h]{0.24\textwidth}
            \centerline{\includegraphics[width=1\textwidth]{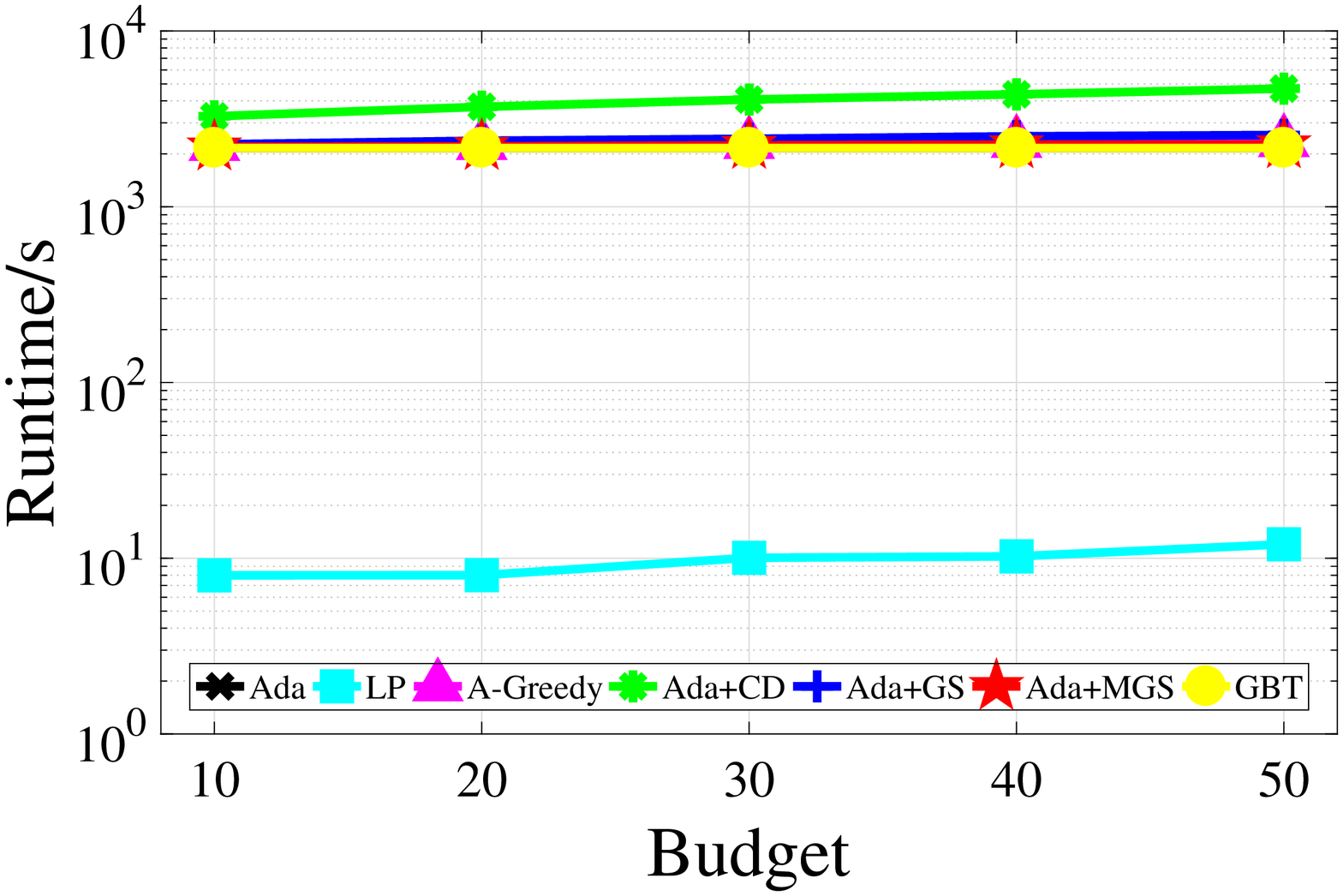}\label{livejournal_Ada_alpha=08-Runtime}}
            \end{minipage}
        }
      \vspace{-1.5mm}
      \caption{Running Time in the Adaptive Case ($\alpha$=0.8).}\label{Run_time_in_Ada}
      \vspace{-2mm}
  \end{figure*}

\section{References}
\footnotesize
        \quad\quad[S1] W. Chen, C. Wang, and Y. Wang, Scalable influence maximization for prevalent viral marketing in large-scale social networks. in \textit{Proc. Int. Conf. Knowl. Discov. Data Min., KDD}, ACM, 2010, pp. 1029--1038.

        [S2] W. Chen, Y. Yuan, and L. Zhang. Scalable influence maximization in social networks under the linear threshold model. in \textit{Proc. Int. Conf. Data Mining, ICDM}, IEEE, 2010, pp. 88--97.

        [S3] K. Huang, J. Tang, X. Xiao, A. Sun, and A. Lim. Efficient Approximation Algorithms for Adaptive Target Profit Maximization, in \textit{Proc. Int. Conf. Data Eng. ICDE}, IEEE, 2020, to appear.

        [S4] C. Borgs, M. Brautbar, J. Chayes, and B. Lucier. Maximizing social influence in nearly optimal time. in \textit{Proc. Annu. {ACM-SIAM} Symp. Discrete Algorithms}, 2014, pp. 946--957.

        [S5] J. Yuan and S. Tang. Adaptive Discount Allocation in Social Networks. \url{https://www.dropbox.com/s/tlm5ldll0md6lu2/TRmobihoc.pdf?dl=0}. (2017).
\normalsize
\end{document}